\newcommand{\sL}{\mathcal L}
\newcommand{\sM}{\mathcal M}
\newcommand{\E}{\mathbb E}
\newcommand{\R}{\mathbb R}
\newcommand{\sE}{\mathcal E}
\newcommand{\sG}{\mathcal G}
\newcommand{\sH}{\mathcal H}
\newcommand{\sV}{\mathcal V}
\newcommand{\sgn}{\operatorname{sgn}}
\numberwithin{equation}{section}
\numberwithin{figure}{section}
\theoremstyle{plain}
\newtheorem{thm}{\protect\theoremname}
  \theoremstyle{remark}
  \newtheorem*{rem*}{\protect\remarkname}
  \theoremstyle{plain}
  \newtheorem{lem}[thm]{\protect\lemmaname}
  \theoremstyle{plain}
  \newtheorem{prop}[thm]{\protect\propositionname}
  \theoremstyle{plain}
  \newtheorem{cor}[thm]{\protect\corollaryname}
   \theoremstyle{plain}
  \newtheorem*{thm*}{\protect\theoremname}
  \theoremstyle{plain}
    \theoremstyle{remark}
  \newtheorem{rem}[thm]{\protect\remarkname}
  \theoremstyle{plain}
  \newtheorem{ass}{Standing Assumption}
  \providecommand{\corollaryname}{Corollary}
  \providecommand{\lemmaname}{Lemma}
    \providecommand{\propositionname}{Proposition}
  \providecommand{\remarkname}{Remark}
  \providecommand{\theoremname}{Theorem}
\providecommand{\theoremname}{Theorem}
\providecommand{\resultname}{Result}
\author{David Hobson \and Yeqi Zhu}
\thanks{Department of Statistics, University of Warwick, Coventry, CV4 7AL, UK. D.Hobson@warwick.ac.uk, Yeqi.Zhu@warwick.ac.uk}
\begin{document}

\title{\textbf{Multi-asset consumption-investment problems with infinite transaction costs}}

\date{\today}

\begin{abstract}
The subject of this paper is an optimal consumption/optimal portfolio problem with transaction costs and with multiple risky assets.

In our model the transaction costs take a special form in that transaction costs on purchases of one of the risky assets (the endowed asset) are infinite, and transaction costs involving the other risky assets are zero. Effectively, the endowed asset can only be sold. In general, multi-asset optional consumption/optimal portfolio problems are very challenging, but the extra structure we introduce allows us to make significant progress towards an analytical solution.

For an agent with CRRA utility we completely characterise the different types of optimal behaviours. These include always selling the entire holdings of the endowed asset immediately, selling the endowed asset whenever the ratio of the value of the holdings of the endowed asset to other wealth gets above a critical ratio, and selling the endowed asset only when other wealth is zero. This characterisation is in terms of solutions of a boundary crossing problem for a first order ODE. The technical contribution is to show that the problem of solving the HJB equation, which is a second order, non-linear PDE subject to smooth fit at an unknown free boundary, can be reduced to this much simpler problem involving an explicit first order ODE. This technical contribution is at the heart of our analytical and numerical results, and allows us to prove monotonicity of the critical exercise threshold and the certainty equivalent value in the model parameters.

%In this paper we consider a generalised case of an optimal consumption/optimal portfolio problem studied by Hobson and Zhu, in which an agent is endowed with some units of an infinitely divisible asset and seeks optimal consumption/sale strategy to maximise expected discounted utility of consumption over the infinite horizon. The endowed asset can only be sold and not bought, or equivalently the cost of purchases of the asset is infinite. We extend the single asset case to this multi-asset problem by allowing the agent to trade freely in additional risky assets, whose prices are correlated with that of the endowed asset. This generalises the Davis and Norman model into a multi-dimensional setting with infinite transaction costs and may also be interpreted as a divisible asset sale problem with hedging motives.

%We apply the same solution techniques in the single asset case and manage to significantly simplify this multi-dimensional problem. Different regimes are classified under different combinations of market parameters and corresponding optimal strategies and value functions are characterised. We are also able to prove monotonicity of the optimal exercise threshold and indifference price in market parameters. Apart from recovering the surprising results in the single asset case, we find that under certain combination of parameters, it is never optimal to cease investment in the risky asset even if liquid wealth hits zero.

\smallskip
\noindent \textbf{Keywords and phrases:}  optimal consumption/investment problem, transaction costs, multiple correlated assets, singular stochastic control, reflecting diffusion, Skorokhod problem

\smallskip
\noindent \textbf{MSC 2010 subject classifications:} Primary  93E20; Secondary
35R35, 49J15, 49L20, 60J55

\noindent \textbf{JEL classifications:} C61; D23; D52; G11

\end{abstract}

\maketitle

\onehalfspacing

\section{Introduction}

In one of his seminal works, Merton~\cite{Merton} considered the portfolio and consumption problem faced by a price-taking agent in a continuous-time stochastic model consisting of a risk-free bond and a risky asset. The agent is assumed to have the objective of maximising discounted expected utility from consumption
over the infinite horizon. In a model in which the single risky asset follows an exponential Brownian motion with constant parameters, and the agent has constant relative risk aversion, Merton showed that optimal behaviour is to consume at a rate which is proportional to wealth, and to invest a constant fraction of wealth in the risky asset.

Constantinides and Magill~\cite{ConMagill} were the first to add proportional transaction costs to the model. They conjectured the form of the optimal strategy, namely that is is optimal to keep the fraction of wealth invested in the risky asset in an interval. Subsequently Davis and Norman~\cite{Davis} gave a precise statement of the result and showed how the solution could be expressed in terms of local times. Optimal behaviour is to trade in a minimal fashion so as to keep the variables (cash wealth, wealth in the risky asset) in a wedge-shaped region in the plane, and this is achieved by sales and purchases of the risky asset
in the form of singular stochastic controls.

The approach in Davis and Norman~\cite{Davis} is to write down the Hamilton-Jacobi-Bellman equation, and to characterise the candidate value function as a solution to this equation. Subsequently, Shreve and Soner~\cite{SS} reproved many of the results of \cite{Davis} using viscosity solutions. These approaches remain the main methods for solving portfolio optimisation problems with transaction costs, although recently a different technique based on shadow prices has been proposed, see Guasoni and Muhle-Karbe~\cite{Guasoni} for a users' guide.

The results in Davis and Norman~\cite{Davis} are limited to a single risky asset, and it is of great interest to understand how they generalise to multiple risky assets.
In his survey article on consumption/investment problems with transaction costs Cadenillas~\cite[page 65]{Cadenillas} says that `most results in this survey are limited to the case of only one bond and only one stock. It is then important to see if these results can be extended to cover a realistic number of stocks'. Although there has been some progress since that paper was published, similar sentiments are echoed in recent papers by Chen and Dai~\cite[page 2]{ChenDai}: `most of the existing theoretical characterisations of the optimal strategy are for the single risky-asset case. In contrast there is a relatively limited literature on the multiple risky-asset case' and Guasoni and Muhle-Karbe~\cite[page 194]{Guasoni}: `In sharp contrast to frictionless models, passing from one to several risky asstes is far from trivial with transaction costs \ldots multiple assets introduce novel effects, which defy the one-dimensional intuition'. In summary therefore, there is great interest in both theoretical and numerical results on the multi-asset case, and this paper can be considered as a contribution to that literature.

In the multi-asset case, and on the computational side, Muthuraman and Kumar~\cite{MuthKumar} use a process of policy improvement to construct a numerical solution for the value function and the associated no-transaction region, and Collings and Haussman~\cite{Collings} derive a numerical solution via a Markov chain approximation, for which they prove convergence. On the theoretical front Akian {\em et al}~\cite{Akian} show that the value function is the unique viscosity solution of the HJB equation (and provide some numerical results in the two-asset case) and Chen and Dai~\cite{ChenDai}
identify the shape of the no-transaction region in the two-asset case. Explicit solutions of the general problem remain very rare.

One situation when an explicit solution is possible is the rather special case of uncorrelated risky assets, and an agent with constant absolute risk aversion. In that case the problem decouples into a family of optimisation problems, one for each risky asset, see Liu~\cite{liu1}.
Another setting for which some progress has been made is the problem with small transaction costs, see Whalley and Willmott~\cite{Whalley}, and for a more recent analysis Soner and Touzi~\cite{Soner}. Whalley and Willmott use an expansion method to provide asymptotic formulae for the optimal strategy.

In this paper we consider a different extreme in which transaction costs are either zero or (with regard to trades in one direction of a particular risky asset) infinite. Expressed differently, one of the risky assets is assumed to not be available for dynamic trading (either to the market as a whole, or to the utility-maximising agent, perhaps for legal reasons, or simply because it is difficult for individuals to trade particular stocks actively). Instead the assumption is that this asset can only be sold: (re)-purchases are not allowed. Our agent is endowed with an initial quantity of this asset, and her strategies include when to sell units of this (infinitely divisible) asset over time. The assumption that there are infinite transaction costs on purchases of endowed asset is complemented by an assumption that sales and purchases of the other risky assets are permissible, and incur zero transaction costs.

The set-up of our problem in which one asset is identified as a non-traded asset is similar to that in the real options literature (Miao and Wang~\cite{Miao}, Henderson~\cite{Henderson}, Henderson and Hobson~\cite{hh}) in which an agent with the option to invest in a project (or sell an asset) chooses the optimal sale time. The difference with respect to that literature is that we assume that the non-traded asset is infinitely divisible, whereas in the real options literature it is typically assumed to be indivisible.

Our model consists of an agent who is endowed with units of an infinitely divisible risky asset
which may be sold, but not bought, and whose opportunity set includes investment in a risk-free bond, and investment in other risky assets to which a zero transaction cost applies. The risky assets follow correlated exponential Brownian motions and the objective of the agent is to choose a consumption rate and an investment strategy (including a sale strategy for the endowed asset) so as to maximise the discounted expected utility of consumption.

This paper is an extension of Hobson and Zhu~\cite{HobsonZhu14a} which considers a similar problem with an endowed asset but with no other risky assets. Many of the techniques of \cite{HobsonZhu14a} carry over to the wider setting of this paper, however, since there are fewer parameters when there are no investments beyond the endowed asset, the problem in \cite{HobsonZhu14a} is significantly simpler and much more amenable to a comparative statics analysis. In contrast, this paper treats the multi-asset problem which has proved so difficult to analyse in full generality, albeit in a rather special case. The multi-asset setting brings new challenges and complicates the analysis.

It is straightforward to write down the Hamilton-Jacobi-Bellman (HJB) equation for our problem. In general the value function is a function of four variables (wealth in liquid assets, price of the endowed asset, quantity of endowed asset held, time) and satisfies a HJB equation which is second order, non-linear and subject to value matching and smooth fit at an unknown free-boundary. (The smooth fit turns out to be of second order.) In our special setting the problem of finding the free-boundary and value function is reduced to study of a boundary crossing problem for the solution of a first-order ordinary differential equation (ODE). This first crossing problem has four types of solution (`crossing' at zero, crossing in $(0,1)$, no crossing in $[0,1]$ and hits zero before crossing). Each different type of solution is identified with a different type of solution to the optimisation problem; for example the first type of solution corresponds to a strategy of immediately selling all units of the endowed asset. It is relatively straightforward to identify the parameter combinations which lead to different types of solution, even if explicit solutions of the first order ODE are not available. Then we can relate the optimal wealth process, consumption strategy, sale strategy and investment strategy of the agent path-wise to the solution of a Skorokhod-type problem.

\section{The model and main results}

\subsection{Problem Formulation}

Let $(\Omega,\mathcal{F},\mathbb{P}, {\mathbb F} = \left(\mathcal{F}_{t}\right)_{t\geq0})$ be  a filtered probability
space, such that the filtration satisfies the usual conditions and is generated
by a two-dimensional\footnote{We can consider the case of $n$ Brownian motions, and $n-1$ financial assets and a single endowed asset, but the situation reduces to the case considered here. In particular, the $(n-1)$ financial assets reduce to a single mutual fund.} standard Brownian motion $\left(B_{t}^{1},B_{t}^{\perp}\right)_{t\geq0}$. Set $B_{t}^{2} = \rho B_{t}^{1} + \rho^{\perp} B_{t}^{\perp}$, where $(\rho^{\perp})^2 = 1 - \rho^2$.

The financial market is modelled with three stochastic processes on this space, a bond paying a constant rate of interest $r$, a financial (or hedging) asset
with price process $P = (P_t)_{t \geq 0}$ and a non-traded (or endowed) asset with price process $Y = (Y_t)_{t \geq 0}$.
Assume the price processes of the risky assets satisfy
\[
P_{t}=p_0 \exp\left\{ \left(\mu-\frac{\sigma^{2}}{2}\right)t+\sigma B_{t}^{1}\right\} ,
\]
where $\mu$ and $\sigma>0$ are the constant mean return
and volatility of the financial asset, and $p_{0}$ is the initial price, and
\[
Y_{t}=y_{0}\exp\left\{ \left(\alpha-\frac{\eta^{2}}{2}\right)t+\eta B_{t}^{2}\right\} ,
\]
where $\alpha$ and $\eta>0$ are the constant mean return and volatility
of the non-traded asset, and $y_{0}$ is the initial price.
Let $\lambda = (\mu - r)/\sigma$ and let $\zeta = (\alpha - r)/\eta$ be the Sharpe ratios of the hedging and endowed assets respectively.

Let $C=\left(C_{t}\right)_{t\geq0}$ denote the consumption rate of
the individual, let $\Theta=\left(\Theta_{t}\right)_{t \geq 0}$
denote the number of units of the endowed asset held by the
investor and let $\Pi = (\Pi_t)_{t \geq 0}$ denote the cash amount
invested in the hedging asset $P$.
The consumption rate is required to be progressively measureable and
non-negative, the process $\Theta$ is required to be progressively measureable,
right-continuous with left limits and non-increasing to
reflect the fact that the non-traded asset is only allowed for
sale and $\Pi$ is required to be
progressively measurable.
We assume the initial number of shares held by the investor
is $\theta_{0}$. Since we allow for an initial
transaction at time 0 we may have $\Theta_0 < \theta_0$. We write
$\Theta_{0-} = \theta_0$. This is consistent with our convention
that $\Theta$ is right-continuous.

We denote by $X=\left(X_{t}\right)_{t\geq0}$ the wealth process of
the individual, and suppose that the initial wealth is $x_{0}$. Provided
the changes to wealth occur from either consumption, investment or from the
sale of the endowed asset, $X$ evolves according to
\[ %\begin{equation}
dX_{t}=\Pi_{t}\frac{dP_{t}}{P_{t}}+\left(X_{t}-\Pi_{t}\right)r dt -C_{t}dt-Y_{t}d\Theta_{t}
= \sigma \Pi_t dB^1_t + \left\{ (\mu - r) \Pi_t + rX_t - C_t \right\} dt - Y_t d \Theta_t, %\label{eq:4}
\] %\end{equation}
subject to $X_{0-} = x_0$, and $X_0 = x_0 + y_0(\theta_0 -
\Theta_0)$.
%Sometimes we write $X=X^{C,\Theta}$ when we wish to emphasise the
%dependence of the wealth process on the consumption and sale
%strategy.
We say a consumption/investment/sale strategy triple is admissible if the
components satisfy the requirements listed above and if the
resulting cash wealth process $X$ is non-negative for all time. Let
$\mathcal{A}\left(x,y,\theta\right)$ denote the set
of
admissible strategies for initial setup
$\left(X_{0-}=x,Y_{0}=y,\Theta_{0-}=\theta\right)$.

The objective of the agent is to maximise over admissible strategies
the discounted expected utility of consumption over the infinite horizon,
where the utility function of the agent is assumed
to have constant relative risk aversion,\footnote{The techniques extend to the case $R=1$ and logarithmic utility, but we will not consider that case here. However, many of the results can be obtained simply by setting $R=1$ in the various formulae.} with parameter $R \in (0,\infty) \setminus {1}$ and discount factor $\beta$.
In particular, the goal is to find ${\mathcal V} = \sV(x_0,y_0,\theta_0)$ where
\begin{equation}
{\mathcal V}(x,y,\theta) = \underset{\left(C,\Pi,\Theta\right)\in\mathcal{A}(x,y,\theta)}
{\sup}\mathbb{E}
\left[\int_{0}^{\infty}e^{-\beta t}\frac{C_{t}^{1-R}}{1-R}dt\right].
\label{eq:100}
\end{equation}
Since the set-up has a Markovian structure, we expect the optimal consumption, optimal investment and optimal sale strategy to be of
feedback form and to be functions of the current wealth and endowment of the agent and of the price of the risky asset.

Let $V(x,y,\theta,t)$ be the forward starting value function
for the problem
so that
\begin{equation} V(x,y,\theta,t) =
\underset{\left(C,\Pi,\Theta\right)\in\mathcal{A}(x,y,\theta,t)}
{\sup}\mathbb{E}\left[ \left. \int_{t}^{\infty}e^{-\beta
s}\frac{C_{s}^{1-R}}{1-R}ds \right| X_{t-} = x, Y_t=y, \Theta_{t-} =
\theta\right].
\label{eq:100b}
\end{equation}
Here the space of admissible strategies
$\mathcal{A}(x,y,\theta,t)$ is such that $C = (C_s)_{s \geq t}$ is
a non-negative, progressively measureable process, $\Pi = (\Pi_s)_{s \geq t}$ is
progressively measurable, $\Theta = (\Theta_s)_{s \geq t}$ is
a right-continuous, non-increasing, progressively measureable process and satisfies
$\Theta_t - (\Delta \Theta)_t = \theta$, and $X = (X_s)_{s \geq t}$ is a non-negative process given by
\[
X_s = x + \int_t^s \left(\lambda\sigma\Pi_{u} + rX_{u} - C_{u}\right) du
+ \int_t^s \sigma\Pi_{u}dB_{u}^{1} - \int_{[t,s]} Y_u d \Theta_u.
\]

It is clear that $V(x,y,\theta,t) = e^{-\beta t}V(x,y,\theta,0) = e^{-\beta t} {\mathcal V}(x,y,\theta)$.
Define $V_0(x,t) = V(x,y,0,t)$ (note $V_0$ will not depend on $y$, since the agent has no units of $Y$) and ${\mathcal V}_0(x) = {\mathcal V}(x,y,0)$, and
define the certainty equivalent price $p=p(x,y,\theta,t)$ as the solution to
\[ V_0(x+p,t) = V(x,y,\theta,t) .\]
Note that $p=p(x,y,\theta)$ does not depend on $t$ since $p$ solves ${\mathcal V}_0(x+p) = {\mathcal V}(x,y,\theta)$.

The parameters of the problem are $r$, $\beta$, $\mu$, $\sigma$, $\alpha$, $\eta$, $\rho$ and $R$ which we assume to all be constants. We will assume that $\rho \in (-1,1)$
(the limiting cases $\rho=\pm 1$ can be dealt with by similar techniques).  Define auxiliary parameters $(b_{i})_{1\leq i\leq4}$,
\[
b_{1} = \frac{2}{\eta^2 (1 - \rho^2)} \left[\beta - r(1-R) - \frac{\lambda^2 (1-R)}{2R}\right],
\hspace{4mm}
b_{2}= \frac{\lambda^2 - 2 R \eta \rho \lambda + \eta^2 R^2}{\eta^2 R^2 (1 - \rho^2)},
\hspace{4mm}
b_{3}= \frac{2(\zeta - \lambda \rho)}{\eta (1 - \rho^2)},
\]
and $b_4 = \left(\frac{1}{2} \eta^2 (1 - \rho^2)\right)^{-1}$.
Note that
\[
b_{2} = 1 + \frac{1}{1 - \rho^2} \left(\frac{\lambda}{\eta R} - \rho\right)^2 \geq 1.
\]
It will turn out that the optimal selling and investment problem depends on the original parameters only through these
auxiliary parameters and the risk aversion $R$.

We will see in later sections that $b_1$ plays the role of a `normalised discount factor'.
The parameter $b_3$ is the `effective Sharpe ratio, per unit of idiosyncratic volatility' of the endowed asset.
The parameter $b_2$ is the hardest to interpret: essentially it is a
nonlinearity factor which arises from the multi-dimensional structure of the problem.
The case $b_2 = 1$ is rather special and will be excluded to a certain extent from our
analysis.
(One scenario in which we naturally find $b_2 = 1$ is if $\lambda = 0 = \rho$. In this case there is neither a hedging motive, nor an investment motive
for holding the financial asset. Essentially then, the investor can ignore the presence of the financial asset, reducing the dimensionality of the problem.
This is the problem considered in Hobson and Zhu~\cite{HobsonZhu14a}. That paper uses many of the same ideas as this paper,
albeit in a much simpler setting and can be seen as `warm-up' problem for the more general problem considered here.)

Our goal is to solve for $V$, and hence for the certainty equivalent price $p$. As might be expected, $V$ solves a variational principle, and can be
characterised by a second-order, nonlinear partial differential equation in the four variables $(x,y,\theta,t)$ subject to value matching and
smooth fit (of the first and second derivatives)
at an unknown free boundary. In fact various simplifications can be expected from the inherent scalings of the problem. Nonetheless, the remarkable fact on
which this paper is based is that expressions for $V$ and a characterisation for the optimal solution all follow from the study of a boundary crossing problem for a single first order ordinary differential equation.

If $R<1$ and $b_1 \leq 0$ then the value function $V_0(x,t)$ is infinite for the Merton problem (in the absence of the endowed asset), and {\em a fortiori} the value function
with a positive endowment of the non-traded asset is also infinite. In this case it is not possible to define a certainty equivalent price. If $R>1$ and
$b_1 \leq 0$, then for every admissible strategy with zero initial endowment of  the risky asset the expected discounted utility of consumption equals $-\infty$, and again it is not possible to define a certainty equivalent price for units of the endowed asset. To exclude these cases we make the following
non-degeneracy assumption:

\begin{ass}
Throughout the paper we assume that $b_1>0$.
\end{ass}

\subsection{Main results}
The key to our analysis are solutions to the first order differential equation (\ref{eq:n}) the properties of which are stated in Proposition~\ref{prop:main},
the proof of which is given in Appendix~\ref{app:n}.

\begin{figure}
\begin{center}
\includegraphics[scale=0.3]{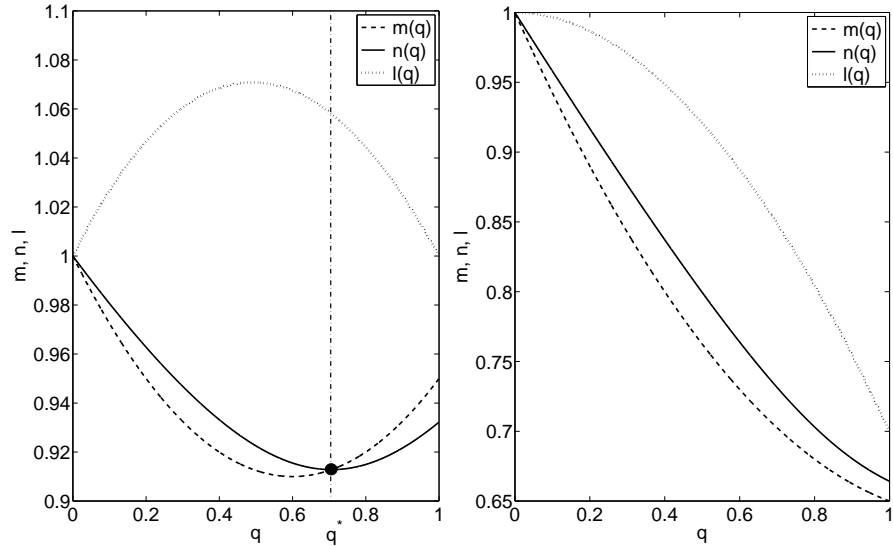}
\end{center}

\caption{Stylised plot of $m(q)$, $n(q)$, $\ell(q)$ for $R \in (0,1)$.
Parameters are such that $q^* \in (0,1)$ (left figure) and $q^* = 1$ (right figure).}
\label{fig:mnlRlessthan1}
\end{figure}

\begin{figure}
\begin{center}
\includegraphics[scale=0.3]{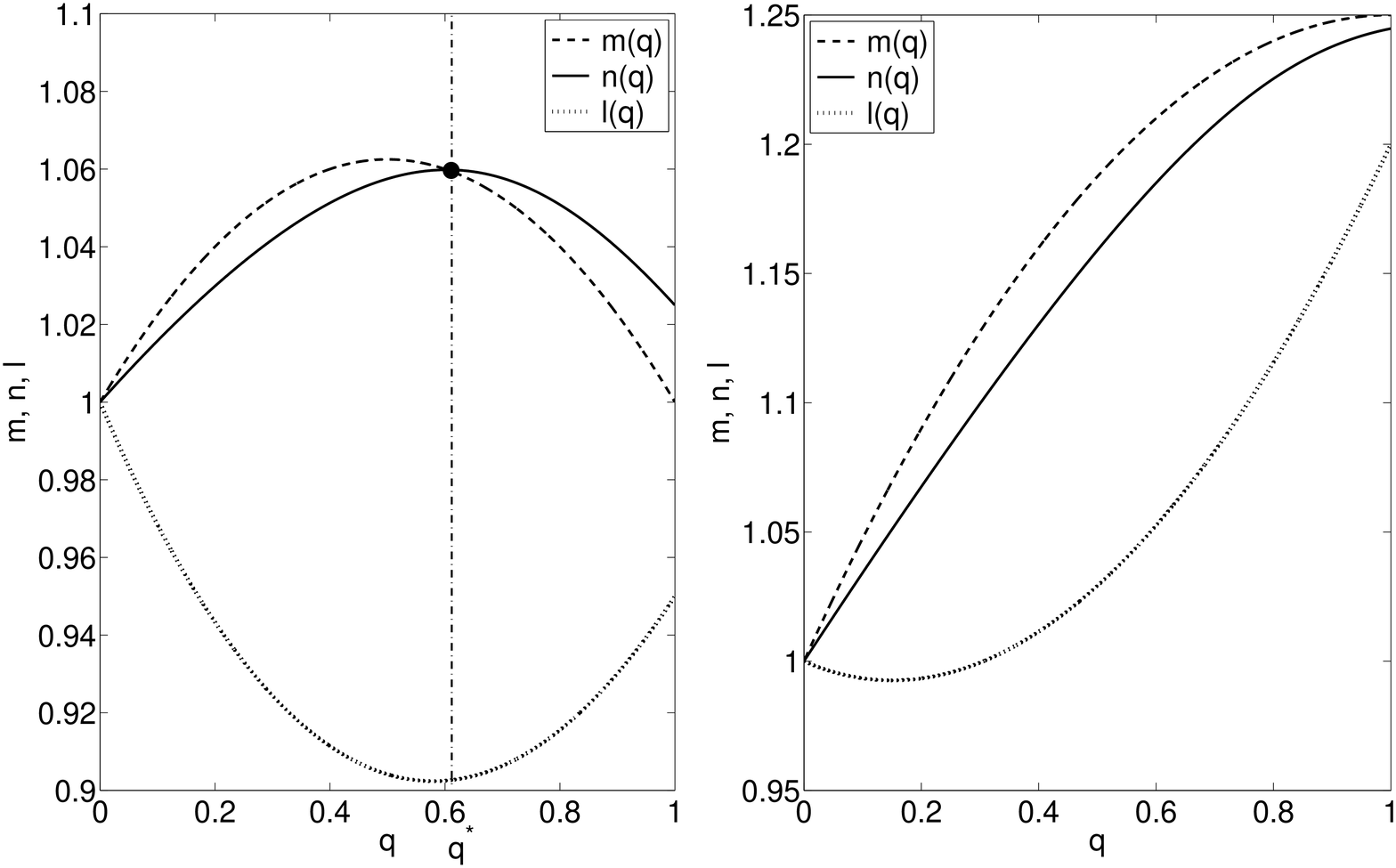}
\end{center}

\caption{Stylised plot of $m(q)$, $n(q)$, $\ell(q)$ for $R \in (1,\infty)$.
Parameters are such that $q^* \in (0,1)$ (left figure) and $q^* = 1$ (right figure).}
\label{fig:mnlRbiggerthan1}
\end{figure}

\begin{prop}
\label{prop:main}
For $q \in [0,1]$ define $m(q) = \frac{(1-R)R}{b_1} q^2 - \frac{b_3(1-R)}{b_1} q + 1$
and
%\begin{eqnarray}
%\ell(q) & = & -\frac{(1-R)^3 q^3 + (1-R)^2(b_3 - 1 + R)q^2 - (1-R)(b_1 + b_2 R - b_3 R)q - b_1 R}{b_1 [(1-R)q + R]} \nonumber \\
%& = &
%m(q) + \frac{1-R}{b_1}q(1-q) + \frac{(b_2 - 1)R(1-R)}{b_1} \frac{q}{[(1-R)q + R]}. \nonumber
%\end{eqnarray}
$\ell(q) = m(q) + \frac{1-R}{b_1}q(1-q) + \frac{(b_2 - 1)R(1-R)}{b_1} \frac{q}{[(1-R)q + R]}.$
Let
$n = n(q)$ solve
\begin{equation}
\frac{n^{'}(q)}{n(q)} = \frac{1-R}{R(1-q)} - \frac{(1-R)^2}{b_1 R}\frac{q}{\ell(q) - n(q)} +
\frac{(1 - R)q}{2b_1 R (1-q) [(1-R)q + R]} \frac{\upsilon(q, n(q))}{\ell(q) - n(q)}
\label{eq:n}
\end{equation}
subject to $n(0) = 1$ and $\frac{n'(0)}{1-R} < \frac{\ell'(0)}{1-R} = (b_2 - b_3)/b_1$,
where
\[\upsilon(q, n) = \varphi(q,n) - \sgn(1-R) \sqrt{\varphi(q, n)^2  + 4R^2 (1-R)^2 (b_2 - 1) (1-q)^2},
\]
and $\varphi(q, n) = b_1 n + (1 - R)(b_3 - 2R) q + 2R(1-R) - b_1
- b_2R(1-R)$.

Suppose that if $n$ hits zero, then $0$ is absorbing for $n$.

For $R<1$, let $q^* = \inf\{q>0:n(q) \leq m(q)\}$, see {\rm Figure~\ref{fig:mnlRlessthan1}}. For $R>1$, let $q^* = \inf\{q>0: n(q) \geq m(q)\}$, see {\rm Figure~\ref{fig:mnlRbiggerthan1}}.
For $j \in \{\ell, m, n\}$ let $q_j = \inf\{q>0: j(q) = 0\} \wedge 1$.

Set $\bar{b}_3 = 2R$ if $R>1$ and $\bar{b}_3 = \min \{2R, R + \frac{b_1}{1-R} \}$ if $R<1$.
For fixed $b_1$, $b_2$ and $R$, there exists
some critical value $b_{3, crit} (b_1, b_2, R)$, with $R <  b_{3,crit} \leq \bar{b}_3 $, and such that
\begin{enumerate}
\item if $b_3 \leq 0$, then $q^* = 0$;
\item if $0 < b_{3} < b_{3,crit}$ then $0 < q^* < 1$;
\item if $R>1$ and $b_3 \geq b_{3,crit} (b_1, b_2, R)$ then $q^*=1$; if  $R<1$ and $b_{3,crit} (b_1, b_2, R) \leq b_3 < \frac{b_1}{1-R} + b_2 R$, then $q^* = 1$;
\item if $R<1$, $b_2=1$ and $b_3 = \frac{b_1}{1-R} + R \geq 2R$, then $q_m = q_n = q_{\ell} = q^*=1$;
if $R<1$, $b_2=1$ and $b_3 = \frac{b_1}{1-R} + R < 2R$, then $q_m < q_n = q_{\ell} = q^*=1$;
if $R<1$, $b_2>1$ and $b_3 = \frac{b_1}{1-R} + b_2 R$, then $q_m < q_n = q_{\ell} = q^*=1$;
if $R<1$  and $b_3 > \frac{b_1}{1-R} + b_2 R$, then $q_m < q_n = q_{\ell} < q^* = 1$.
\end{enumerate}

\end{prop}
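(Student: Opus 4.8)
The plan is to analyse the initial value problem (\ref{eq:n}) directly, integrating from $q=0$ towards $q=1$ and treating $b_3$ as the bifurcation parameter; the four cases are exactly the situations sketched in Figures~\ref{fig:mnlRlessthan1} and \ref{fig:mnlRbiggerthan1}. The first task is the local analysis at $q=0$. Since $\ell(0)=m(0)=n(0)=1$, the factor $\ell(q)-n(q)$ in the denominators of (\ref{eq:n}) vanishes there, so the equation is singular at the left endpoint and the datum $n(0)=1$ alone does not pin down a solution. Inserting $n(q)=1+n'(0)q+o(q)$ and $\ell(q)=1+\ell'(0)q+o(q)$ with $\ell'(0)=(1-R)(b_2-b_3)/b_1$ and requiring the right-hand side of (\ref{eq:n}) to remain bounded as $q\downarrow0$ forces $d:=\ell'(0)-n'(0)$ to solve a quadratic whose constant term is $-(1-R)^2b_2/(b_1R)$; since $b_1,b_2>0$ this is negative, so the quadratic has exactly one strictly positive and one strictly negative root, and the stated sign constraint $n'(0)/(1-R)<\ell'(0)/(1-R)$ selects one of them uniquely. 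After the substitution $u(q)=(\ell(q)-n(q))/q$, which desingularises (\ref{eq:n}), a standard Picard/contraction argument gives a unique admissible solution on a right-neighbourhood of $0$, which continues to a maximal solution up to the first point at which $n$ hits one of the singular sets $\{\ell=n\}$, $\{q=1\}$, or (should it lie in $[0,1]$) a zero of $\ell$ or $m$. The same expansion yields part (1) at once: a short computation gives $\sgn(n'(0)-m'(0))=\sgn((1-R)b_3)$, so for $b_3<0$ the curve $n$ leaves $q=0$ already on the far side of $m$ and $q^*=0$; the borderline $b_3=0$ is settled by pushing the expansion one further order.

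For $b_3>0$ the solution instead leaves $q=0$ on the near side of $m$, and the next step is to establish a trichotomy: either $n$ meets $m$ at some $q^*\in(0,1)$, or $n$ attains $q=1$ without ever meeting $m$, or $n$, $\ell$ and $m$ degenerate together at $q=1$. Two ingredients suffice on $(0,1)$. First, a sign computation for $(n-m)'$ along the diagonal $\{n=m\}$ shows that once $n$ touches $m$ it crosses, so $q^*$ is genuinely a first-crossing point in the sense of the statement. Second, comparison of $n$ against explicit barriers built from $m$ and $\ell$ confines $n$ to a bounded region on every compact subinterval of $(0,1)$ and rules out escape to $\pm\infty$ before $q=1$. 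The behaviour as $q\uparrow1$ is governed by the singular terms $\frac{1-R}{R(1-q)}$ and $\frac{q}{(1-q)[(1-R)q+R]}$; linearising (\ref{eq:n}) at $q=1$ identifies which of the alternatives $n(1^-)=m(1)$, $n(1^-)=0$, or blow-up actually occurs, and this is what separates $q^*=1$ from the degenerate situations of part (4).

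The heart of the proof is the monotone dependence of the configuration on $b_3$. The obstacle here is that $b_3$ enters (\ref{eq:n}) with competing effects: it moves $m$ and $\ell$ one way (down when $R<1$, up when $R>1$) but $\varphi$, and hence $\upsilon$, the other way, so a naive comparison of $n$ for two values of $b_3$ does not work. The plan is to argue monotonicity not for $n$ itself but for a suitable comparison functional — for instance $n-m$ or $\ell-n$ — exploiting that $m$ and $\ell$ shift together, and thereby to show that $q^*(b_3)$ is non-decreasing in $b_3$. Combined with the explicit formula for $n'(0)$ and continuity in $b_3$, this gives $q^*(b_3)\downarrow0$ as $b_3\downarrow0$, so $q^*\in(0,1)$ for all small positive $b_3$; and an explicit barrier argument shows $q^*(b_3)=1$ for all $b_3\geq\bar{b}_3$, using for $R>1$ that the coefficient $(1-R)(b_3-2R)$ in $\varphi$ becomes favourably signed once $b_3\geq2R$, and for $R<1$ that $m>0$ throughout $[0,1]$ precisely when $b_3<R+b_1/(1-R)$. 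Hence $\{b_3>0:q^*(b_3)<1\}$ is an interval $(0,b_{3,crit})$ with $b_{3,crit}\leq\bar{b}_3$; the strict lower bound $b_{3,crit}>R$ follows from checking $q^*(R)<1$ directly via a barrier separating $n$ from $m$. Parts (2) and (3) are then immediate.

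Finally, part (4), which concerns only $R<1$ and the regime $b_3\geq R+b_1/(1-R)$, in which $m$ — and, when $b_2>1$, also $\ell$ — vanishes at or before $q=1$, so the terms $\frac{q}{\ell-n}$ and $\frac{\upsilon}{\ell-n}$ acquire new singularities. Near a zero of $\ell$ ordinary ODE comparison fails, so the plan is to analyse $n$ and $\ell-n$ through local asymptotic expansions at $q_\ell$ and at $q=1$, showing that $n$ stays strictly positive as $q$ passes $q_m$ (so $q_m<q_n$) and that $n$ vanishes exactly where $\ell$ does (so $q_n=q_\ell$), with $q^*=1$ throughout. The case $b_2=1$ has to be handled separately, because then the square root in $\upsilon$ collapses and $\upsilon=2\min(\varphi,0)$, which alters the local structure of (\ref{eq:n}) at the relevant points; the borderline identity $b_3=R+b_1/(1-R)$ and the comparisons $b_3\gtrless2R$ are precisely the conditions — read straight off the explicit quadratic $m$ — deciding whether $q_m$ equals $1$ or lies strictly inside it. The two genuine difficulties are the $b_3$-monotonicity of the third paragraph and these degenerate asymptotics, both because the usual comparison-principle machinery does not apply off the shelf.
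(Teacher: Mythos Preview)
Your outline follows essentially the same architecture as the paper: analyse $n'(0)$ as a root of a quadratic (the paper calls it $\Phi$), use $\sgn(n'(0)-m'(0))=\sgn((1-R)b_3)$ for part~(1), establish monotonicity of $q^*$ in $b_3$ via the comparison functional $\zeta=n-m$ for the existence of $b_{3,crit}$, and analyse the behaviour near $q=1$ for parts~(3) and~(4). The paper packages this into Lemmas~\ref{lem:n} (properties of $n$, including $n'=0\Leftrightarrow n=m$ and $(1-R)n\leq(1-R)\ell$), \ref{lem:monbi} (monotonicity in $b_1,b_2,b_3$), and~\ref{lem:b3critbound} (bounds on $b_{3,crit}$).

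Two substantive differences are worth flagging. First, for the bounds $R<b_{3,crit}\leq\bar b_3$ the paper does \emph{not} attempt direct barriers at $b_3=R$ or $b_3=\bar b_3$. Instead it proves $q^*$ is monotone in $b_2$ as well, computes $b_{3,crit}(b_1,1,R)=\bar b_3$ explicitly (at $b_2=1$ the term $\upsilon$ vanishes and the ODE simplifies), and shows $\lim_{b_2\to\infty}b_{3,crit}=R$ by passing to the limit in the ODE. The upper bound $b_{3,crit}\leq 2R$ comes from the clean observation that if $m$ is monotone on $[0,1]$ (i.e.\ $b_3\geq2R$) then $n$ cannot touch $m$: at a touching point $n'=0$ while $(1-R)m'<0$, contradicting first contact. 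Your direct barrier at $b_3=R$ may work but is not obviously easier, and your proposed use of the sign of $(1-R)(b_3-2R)$ in $\varphi$ does not directly yield the result.

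Second, for part~(4) the paper's argument is much simpler than local asymptotic expansions. For $R<1$ one has $n\leq\ell$ globally on $(0,q_n)$ (since $F(q,n)\to-\infty$ as $n\uparrow\ell(q)$, so $n$ cannot cross $\ell$), and inspection of~(\ref{eq:nA}) shows $n$ cannot hit zero strictly before $\ell$ does; hence $q_n=q_\ell$ automatically, with $q_\ell<1$ iff $\ell(1)<0$. Note also a slip in your description: the threshold governing part~(4) is $\ell(1)\leq0$, i.e.\ $b_3\geq b_2R+b_1/(1-R)$, not $m(1)\leq0$ (which is $b_3\geq R+b_1/(1-R)$); these coincide only when $b_2=1$, and the distinction is exactly what separates the four sub-cases.
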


\begin{rem}

The condition $b_3<2R$ is equivalent to $m'(1) > 0$.

If $R<1$,
then the condition $b_3 \leq \frac{b_1}{1-R} + b_2 R$ is equivalent to $\ell(1) \geq 0$. (Note that $\ell(1) \geq 0$ is a necessary condition for $q_n = 1$.)
Then, if $R<1$, $0<b_3<2R$ and $b_3 < \frac{b_1}{1-R} + b_2 R$, we have $q_{\ell} = q_{n} = 1$.

We will show in Lemma~\ref{lem:n} that $n$ has a turning point at $q^*\in(0,1)$ if and only if $n(q^*) = m(q^*)$. In particular, if $m$ is monotone, then $q^* = 1$.

\end{rem}

\begin{rem}
Suppose $b_2=1$. Then $\ell(1)=m(1)$. Moreover $\varphi(q,m) = R(1-R)(1-q)^2$. If $R<1$, then if $n \geq m$
we have $\varphi(q,n) \geq \varphi(q,m)>0$ and $\upsilon(q,n)=0$. Conversely, if $R>1$ then if $n \leq m$,
$\varphi(q,n) \leq \varphi(q, m) < 0$
and $\upsilon(q,n)=0$. Hence the expression in (\ref{eq:n}) for $n'$ simplifies greatly if $b_2=1$, and is seen to reduce to the
equation for the variable of the same name in \cite[Equation (3.6)]{HobsonZhu14a}.
\end{rem}

\begin{rem}
We show below in Lemma~\ref{lem:monbi} that $n(q)$, $q^*$ and $n(q^*)$ are each monotonic in the parameter $b_1$, $b_2$ and $b_3$ for $q \leq q^*$. In particular, $q^*$ is an increasing function of $b_3$.
It follows that there exists a critical parameter $b_{3,crit}$ and $q^*<1$ if and only if
$b_3 < b_{3,crit}$. Although we can conclude that $R<b_{3,crit} \leq \min \{2R, \frac{b_1}{1-R} + R \}$, we do not have an explicit expression for $b_{3,crit}$.
\end{rem}

From the scalings of the problem,  it is clear that a key variable is the ratio of wealth in the endowed asset to liquid wealth. (Here we define liquid wealth to be the sum of cash wealth and wealth invested in the hedging asset.) We denote this ratio
by $Z$ so that $Z_t= Y_t \Theta_t/X_t \in [0,\infty]$. Under optimal behaviour, consumption and investment rates are functions of liquid  wealth and $Z$.

One of the key contributions of this article is to identify the different types of solutions to the optimisation problem with different classes of solutions to the first
crossing problem studied in Proposition~\ref{prop:main}.

%\begin{thm}
%\label{thm:existencen}
%Suppose $\ell(1) = (b_2 R -b_3)(1-R)/b_1 + 1 \geq 0$. Then there exists a unique solution to the initial value problem (\ref{eq:n}) on $[0,q^*]$.
%\end{thm}

\begin{thm}
\label{thm:4cases}
%Suppose $R<1$.

\begin{enumerate}
\item Suppose $b_3 \leq 0$. Then it is always optimal to sell the
entire holding of the endowed asset immediately, so that
$\Theta_{t}=0$ for $t\geq0$. The value function for the problem is
$V\left(x,y,\theta,t\right) = \left(\frac{b_1}{b_4 R}\right)^{-R} e^{-\beta t} (x + y\theta)^{1-R}/1-R$; and
the certainty equivalent value of the holdings of the asset is
$p(x_0,y_0,\theta_0,0)=y_0
\theta_0$.

\item Suppose $0 < b_3 < b_{3,crit}(b_1, b_2, R)$. Then there exists a positive and finite
critical ratio $z^*$ and the optimal behaviour is to sell sufficient
units of the risky asset so as to keep the ratio of wealth in the risky
asset to cash wealth below the critical ratio. If $\theta>0$ then
$p(x,y,\theta,t) > y \theta$.

\item Suppose $b_3 \geq b_{3,crit}(b_1, b_2, R)$ and
$b_3 < \frac{b_1}{1-R} + b_2 R$ if $R<1$. Then the critical ratio $z^*$ is infinite
and the optimal behaviour is first to consume liquid wealth and invest in the risky asset, and then when this liquid wealth is exhausted, to finance further consumption and investment in the risky asset from sales of the endowed asset.

\item Suppose $b_3 \geq \frac{b_1}{1-R} + b_2 R$ if $R<1$.
Then the problem is degenerate, and provided $\theta_0$ is positive, the
value function $V=V(x,y,\theta,t)$ is infinite. There is no unique
optimal strategy, and the certainty equivalent value $p$ is not defined.
\end{enumerate}
\end{thm}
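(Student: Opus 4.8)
The plan is to produce, for each of the four parameter regimes identified in Proposition~\ref{prop:main}, an explicit candidate for $V$ built out of the solution $n$ of (\ref{eq:n}), to check that this candidate solves the Hamilton--Jacobi--Bellman variational inequality with the correct boundary behaviour, and then to run a verification argument identifying the candidate with $\sV$ and reading off the optimal controls. First I would exploit the CRRA scaling to write $V(x,y,\theta,t)=e^{-\beta t}\frac{x^{1-R}}{1-R}G(z)$ on the no-transaction region, where $z=Y_t\Theta_t/X_t$ is the single relevant state variable; in the selling region $V$ is pinned down by the gradient constraint $V_\theta=yV_x$, which in reduced form reads $(1+z)G'(z)=(1-R)G(z)$. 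Substituting the ansatz into the HJB PDE and maximising over the consumption and hedging controls gives a second-order nonlinear ODE for $G$ on $\{z<z^*\}$, subject to $G(0)=(b_1/(b_4R))^{-R}$ (the Merton constant, since at $z=0$ there is no endowment) and value matching and smooth fit of the first and second derivatives at the free boundary $z^*$. The technical heart, which I would carry out via the change of variable $q=z/(1+z)\in[0,1]$ together with a substitution expressing $n$ as an explicit algebraic function of $(q,G,G')$, is that this second-order free-boundary problem is equivalent to the first-order boundary-crossing problem for (\ref{eq:n}): the crossing point is $q^*=z^*/(1+z^*)$, the initial slope condition $n'(0)/(1-R)<\ell'(0)/(1-R)$ encodes admissibility of a strictly positive $z^*$, and $n$ hitting zero corresponds to $G$, hence $V$, blowing up. Granting this dictionary, the alternatives $q^*=0$, $q^*\in(0,1)$, $q^*=1$ with $n>0$ on $[0,1)$, and $q^*=1$ with $q_n<1$ of Proposition~\ref{prop:main} translate respectively into $z^*=0$, $z^*$ finite, $z^*=\infty$ with $G$ finite, and $z^*=\infty$ with $G$ infinite, which are the four cases of the theorem; the case boundaries are exactly $b_3\le 0$, $b_3=b_{3,crit}$ and (for $R<1$) $b_3=b_1/(1-R)+b_2R$.

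For Cases (1)--(3) I would then establish the verification theorem in both directions. For $\sV\le\hat V$, given an arbitrary admissible triple $(C,\Pi,\Theta)$ I would show that $M_t:=\int_0^t e^{-\beta s}\frac{C_s^{1-R}}{1-R}\,ds+\hat V(X_t,Y_t,\Theta_t,t)$ is a supermartingale, applying It\^o's formula and using (a) the HJB inequality on $\{z<z^*\}$, (b) the gradient constraint to control the $d\Theta$ terms and the jump at time $0$, and (c) integrability of the stochastic integral; this yields $\sV\le\hat V(x_0,y_0,\theta_0,0-)$. For the reverse bound I would construct the candidate optimal strategy — consumption $\hat C_t=c(Z_t)X_t$ and investment $\hat\Pi_t=\pi(Z_t)X_t$ with $c,\pi$ the pointwise maximisers, and $\hat\Theta$ the minimal non-increasing process solving a Skorokhod reflection problem that keeps $Z\le z^*$ (in Case (3), $z^*=\infty$, so $\hat\Theta$ acts only when $X$ reaches $0$) — and check that this system admits a solution with $X\ge0$, that $M_t$ is then a true martingale, and that the transversality condition $e^{-\beta t}\E[\hat V(X_t,Y_t,\Theta_t,t)]\to0$ holds, giving $\sV=\hat V$. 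Case (1) is the degenerate instance $z^*=0$: the candidate is $e^{-\beta t}(b_1/(b_4R))^{-R}(x+y\theta)^{1-R}/(1-R)$, value matching along the selling direction is automatic, and the HJB inequality reduces to the assertion that the two-asset Merton problem constrained by $Y\Theta\ge 0$ is optimised at $Y\Theta=0$, which holds precisely because $b_3\le0$ is equivalent to $\zeta\le\lambda\rho$; the optimal strategy sells everything at $t=0$ and then runs Merton on $x+y\theta$, so $p=y_0\theta_0$.

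The strict inequality $p>y\theta$ for $\theta>0$ in Case (2) follows from $G(z)>(b_1/(b_4R))^{-R}(1+z)^{1-R}$ on $(0,z^*]$ for $R<1$ (with the inequality reversed appropriately for $R>1$), i.e.\ the optimal value strictly exceeds the value of immediate liquidation; this holds because liquidation is a strictly suboptimal admissible strategy once $b_3>0$, as holding a small quantity of the endowed asset has strictly positive marginal value, which at the ODE level is exactly the strict initial-slope inequality in (\ref{eq:n}). Case (4) I would handle separately: here $G$ blows up at $z_n=q_n/(1-q_n)$, and since the agent can drive $Z_t$ upward at will (for instance by consuming faster, thereby reducing $X_t$), she can approach the blow-up region and obtain expected discounted utility tending to $+\infty$; equivalently, $\sV$ dominates the Case-(2)-type values obtained by imposing an artificial cap $z^*=N$, which diverge as $N\to\infty$. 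Either way $\sV\equiv+\infty$ when $\theta_0>0$, there is no unique optimiser, and the certainty equivalent is undefined.

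I expect the main obstacle to be the verification step in Cases (2) and (3): showing that the Skorokhod-reflected candidate wealth and ratio processes are well defined and admissible, that the associated local martingale is a genuine martingale (uniform integrability controlled by the explicit form of $G$ together with the exponential moments of $P$ and $Y$), and that transversality holds. This is where the fine properties of $n$ from Proposition~\ref{prop:main} — the sign and growth of $\ell-n$ and of $G$ near $z^*$ and, in Case (3), as $z\to\infty$ — are genuinely needed, and where the second-order smooth fit at the free boundary is required so that It\^o's formula can be applied to $\hat V$ across $z^*$.
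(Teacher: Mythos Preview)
Your proposal is essentially correct and closely parallels the paper's approach for Cases (1)--(3): the CRRA reduction to a function of $z=y\theta/x$, the identification of the second-order free-boundary ODE for $G$ with the first-order crossing problem for $n$ via $q=z/(1+z)$ (the paper runs this as the chain $n\to N\to W\to w\to h\to g$, but that is the same reduction in the other direction), the verification via $\sL\sG-\beta\sG\le 0$ and $\sM\sG\ge 0$ together with concavity to handle jumps, and the construction of the optimal triple as the solution of a Skorokhod reflection problem. Your identification of the main obstacle --- well-posedness of the reflected SDE, true-martingale property of the stochastic integrals, and transversality --- is exactly what the paper isolates in Lemmas~\ref{lem:N4N5} and~\ref{lem:N4N5b}, and the paper proves these by showing the ratios $\sigma\Pi^*\sG_x/\sG$ and $\eta y\sG_y/\sG$ are uniformly bounded, which is precisely where the inequalities $(1-R)<w'(h)<1-Rw(h)/((1-R)h)$ of Proposition~\ref{prop:NWw} are used.

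The one place where the paper takes a genuinely different and simpler route is Case~(4). Rather than arguing via blow-up of the candidate $G$ or via a limit of artificially capped problems --- both of which would require you to produce, for each large target value, an admissible strategy achieving it, and to control that construction uniformly --- the paper simply writes down a single explicit admissible strategy, namely $\tilde C_t=\lambda\eta\tilde X_t$, $\tilde\Pi_t=(\eta/\sigma)\tilde X_t$ and $d\tilde\Theta_t/\tilde\Theta_t=-\zeta\eta\,\tilde X_t/(Y_t\tilde\Theta_t)\,dt$, checks that the resulting $\tilde X$ is a geometric Brownian motion (hence nonnegative) and that $\tilde\Theta$ is nonincreasing, and computes the expected discounted utility directly: the integrand is $\exp\{[\zeta\eta(1-R)-\tfrac12\eta^2R(1-R)+r(1-R)-\beta]t\}$, which fails to be integrable exactly when $b_3\ge b_1/(1-R)+b_2R$. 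This is both shorter and avoids the delicate point in your argument, namely that blow-up of a candidate value function does not by itself furnish an admissible strategy with infinite utility.
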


\begin{rem}

This theorem emphasises the role played by the parameter $b_3$, the `effective Sharpe ratio' to distinguish between the different scenarios.
When $b_3$ is negative, the endowed asset is a bad investment and it is optimal to sell it immediately.
For small and positive $b_3$, there exists a finite critical ratio and sales of the nontraded asset occur to keep the fraction of wealth held in the nontraded asset
below a critical value. As $b_3$ becomes larger, the endowed asset is more valuable and the agent waits longer for a better return from the endowed asset.
For sufficiently large $b_3$ she does not make any sales of the endowed asset until cash wealth is exhausted. Finally, if $R<1$ and $b_3$ becomes too large,
the value function is infinite, and the problem with the endowment is ill-posed.

\end{rem}

\begin{figure}
\begin{center}
\includegraphics[scale=0.3]{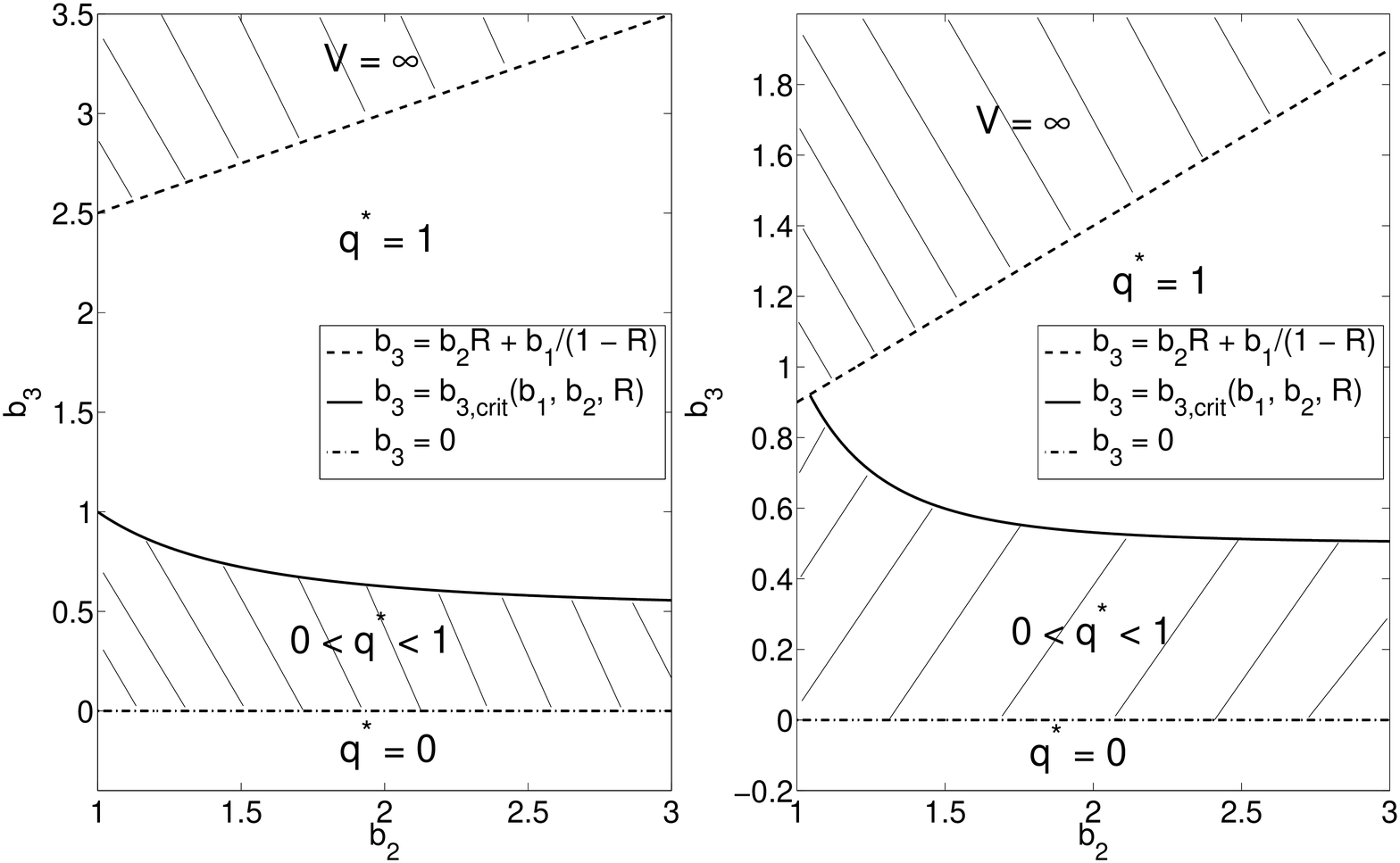}
\end{center}

\caption{Plot of the regions in $(b_2,b_3)$ space which correspond to different characteristics of solution, for fixed $b_1>0$ and $R<1$. In the left graph parameters are $b_1 = 1$ and $R = 0.5$. In the right graph parameters are $b_1 = 0.2$ and $R = 0.5$ so that $b_1<R(1-R)$ and the solid and dashed lines intersect at a value strictly above 1.}
\end{figure}

\begin{figure}
\begin{center}
\includegraphics[scale=0.3]{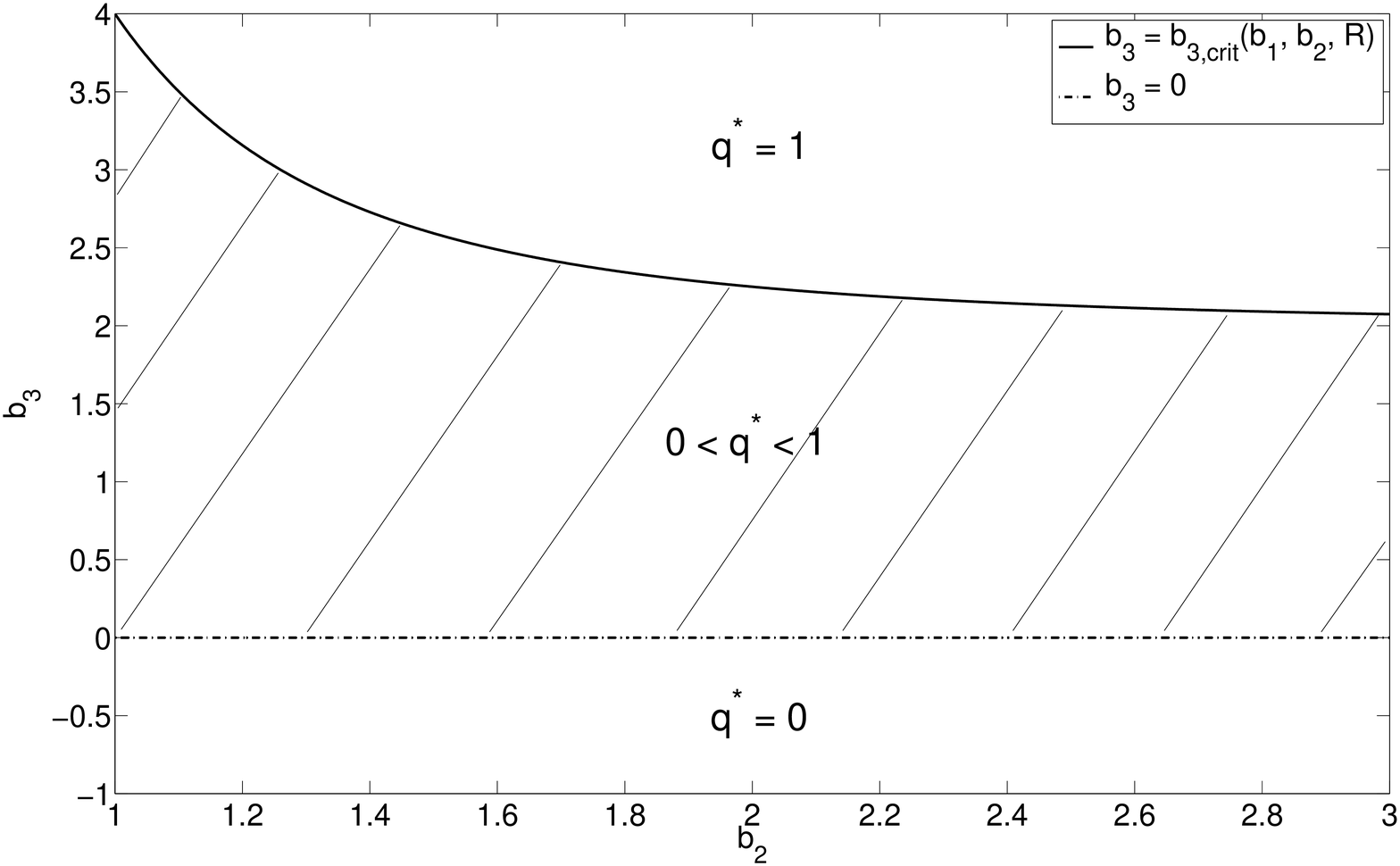}
\end{center}

\caption{Plot of the regions in $(b_2,b_3)$ space which correspond to different characteristics of solution, for fixed $b_1>0$ and $R>1$. Note that when $R>1$ there is no region where the value function is infinite (provided $b_1>0$). The line $b_{3,crit} = b_{3,crit}(b_1,b_2,R)$ which separates $q^*<1$ from $q^*=1$ has limiting value $2R$ at $b_2=1$ and $R$ at $b_2 = \infty$. The figure is drawn in the case $b_1=1$ and $R=2$.}
\end{figure}

The most interesting cases of Theorem~\ref{thm:4cases} are the middle two non-degenerate cases, and these two cases
we study in more detail in the next two theorems. Recall that we suppose we have constructed the solution
$n$ to the differential equation in (\ref{eq:n}).
Define $N(q) = n(q)^{-R} (1-q)^{R-1}$, and let $W$ be inverse to $N$. Let $h^* = N(q^*)$.

For a twice differentiable function $f$ define $\Psi_f(z)$ by
\begin{equation}
\label{eqn:Psidef}
\Psi_f(z) =  \frac{(1-R) f(z) - (1 + \frac{\eta \rho R}{\lambda}) zf'(z) - \frac{\eta \rho}{\lambda} z^2 f''(z)}{R(1-R) f(z) - 2Rzf'(z) - z^2 f''(z)}.
\end{equation}

\begin{thm}
\label{thm:maincase}
Suppose $R < 1$, and
suppose $0 < b_3 < b_{3,crit} (b_1, b_2, R)$,
so that $0 < q^* < 1$.

Then $N:[0,q^*]\mapsto[1,h^*]$ is increasing, and $W:[1,h^*]\mapsto[0,q^*]$ is well-defined and increasing.
Moreover $n(q^*)^{-R} =  h^{*} (1- q^*)^{1-R}$.

Let $z^{*}$ be given by
\begin{equation}
z^{*}=(1-q^*)^{-1}-1 = \frac{q^*}{1-q^*} \in
(0,\infty). \label{eq:886}
\end{equation}
and let $u^* = e^{z^*}$.
On $\left[1,h^{*}\right]$ let $h$ be the solution of
\begin{equation}
u^{*}-u=\int_{h}^{h^{*}}\frac{1}{(1-R)fW\left(f\right)}df.\label{eq:885}
\end{equation}
It follows that $h(-\infty) := \lim_{u \downarrow -\infty} h(u) = 1$.

Let $g$ be given by
\begin{equation}
g\left(z\right)=\begin{cases}
\begin{array}{l}
\left(\frac{b_1}{b_4 R}\right)^{-R} n(q^*)^{-R} \left(1+z\right)^{1-R}\\
\left(\frac{b_1}{b_4 R}\right)^{-R}h\left(\ln z\right)
\end{array} & \begin{array}{l}
\qquad z \in (z^{*}, \infty);\\
\qquad z \in [0, z^{*}].
\end{array}\end{cases}\label{eq:884}
\end{equation}
Then, the value function $V$ is given by
\begin{equation}
V\left(x,y,\theta,t\right)
 =  e^{-\beta t}\frac{x^{1-R}}{1-R}g\left(\frac{y\theta}{x}\right).
\hspace{10mm} x>0, y>0, \theta \geq 0
\label{eq:883}
\end{equation}
which by continuity extends to $x=0$ via
\[ V(0,y,\theta,t) = e^{-\beta t}\frac{y^{1-R} \theta^{1-R}}{1-R} \left(\frac{b_1}{b_4 R}\right)^{-R} n(q^*)^{-R} . \]

Let $(J,L) = (J_t, L_t)_{t \geq 0}$ be the unique pair such that
\begin{enumerate}
\item $J$ is positive,

\item $L$ is increasing, continuous, $L_0 = 0$, and $dL_t$ is
carried by the set $ \left\{ t:J_{t}= 0 \right\}$,

\item $J$ solves
\[ J_t = (z^*-z_0)^+ - \int_0^t \tilde{\Lambda}(J_s) ds
- \int_0^t \tilde{\Sigma}(J_s) dB^{1}_s
\int_0^t \tilde{\Gamma}(J_s) dB^{2}_s
+ L_t, \]
where
$\tilde{\Lambda}(j) =\Lambda(z^*-j)$, $\tilde{\Sigma}(j) =\Sigma(z^*-j)$ and
$\tilde{\Gamma}(j) = \Gamma(z^*-j)$, where in turn
\[ \Lambda(z) = z \left( g(z) - \frac{1}{1-R} z
g'(z)\right)^{-1/R} -\lambda(\lambda + \eta \rho) z \Psi_g(z) + \lambda^2 z \Psi_g(z)^2 + \zeta \eta z, \]
$\Gamma(z)=\eta z$ and $\Sigma(z) = - \sigma z \Psi_g(z)$.
\end{enumerate}
For such a pair $0 \leq J_t \leq z^*$.

Let $z_0= y_0 \theta_0 /x_0$. If $z_0 \leq z^*$ then set $\Theta^*_0 =
\theta_0$ and $X^*_0 = x_0$; else $z_0 > z^*$ and for the optimal strategy there is
a sale of
a positive
quantity $\theta_0 - \Theta_0$ of units at time $0$ such that
\[ \Theta^*_0 = \theta_0 \frac{z^*}{(1+z^*)}\frac{(1+z_0)}{z_0} \leq
\theta_0 \]
and $X^*_0 = x_0 + y_0(\theta_0 - \Theta_0)$.
%If $Z^*_t = Y^*_t \Theta^*_t/X^*_t$ then $Z_0 = z_0 \wedge z_*$

Then, the optimal holdings $\Theta^*_t$
of the endowed asset, and the resulting wealth process are given by
\begin{eqnarray*}
\Theta^*_t  & = & \exp\left\{
-\frac{1}{z^*(1+z^*)} L_t \right\} ; \\
X^*_t & = & \frac{Y_t \Theta^*_t}{ (z^* - J_t)};
\end{eqnarray*}
and
the optimal consumption process
$C^*_t = C(X^*_t,Y_t,\Theta^*_t)$, the optimal portfolio process
$\Pi^*_t = \Pi(X^*_t,Y_t,\Theta^*_t)$,
and the certainty equivalent value $p$ are given in feedback form via
\begin{eqnarray}
C(x,y,\theta) & = & x \left[g\left( \frac{y\theta}{x} \right)
-\frac{1}{1-R} \frac{y\theta}{x}
g' \left( \frac{y\theta}{x} \right)\right]^{-\frac{1}{R}}  \label{eq:Cdef}  \\
\Pi(x,y,\theta) & = & \frac{\lambda}{\sigma} x \Psi_g \left( \frac{ y \theta }{x} \right)
\label{eq:Pidef}\\
p(x,y,\theta) & = & x \left[\frac{g\left(\frac{y\theta}{x}\right)}{g(0)}\right]^{\frac{1}{1-R}} - x.
\label{eq:CEdef}
\end{eqnarray}

Now suppose $R>1$ and $0 < b_3 < b_{3,crit} (b_1, b_2, R)$ so that $0 < q^* < 1$.
Let all quantities be defined as before. Then $N:[0,x^*]\mapsto[h^*, 1]$ is decreasing and
$W:(h^*, 1)\mapsto [0,q^*]$ is well defined and decreasing. On $(h^*, 1)$ $h$ is defined via
\[
u^{*}-u=\int_{h^*}^{h}\frac{1}{(R-1)fW\left(f\right)}df.
\]
The value function $V$, the optimal holdings $\Theta^*$, the optimal consumption process
$C^*$, the optimal portfolio process $\Pi^*$, the resulting wealth process $X^*$ and the
certainty equivalent value $p$ are the same as before.

\end{thm}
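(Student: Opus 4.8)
The plan is the standard dynamic-programming route, the one novelty being that the solution of the resulting free-boundary problem is supplied directly by Proposition~\ref{prop:main}: write the HJB equation, reduce it to a scalar ODE by homotheticity, identify that ODE with \eqref{eq:n} and read off the properties of $N,W,h$, then verify optimality by a martingale argument whose candidate optimiser is the reflecting diffusion built from the Skorokhod problem. In the no-sale region $\{z\le z^*\}$ the value function satisfies $\sup_{c\ge0,\pi}\{e^{-\beta t}c^{1-R}/(1-R)+\sL^{c,\pi}V\}=0$, where $\sL^{c,\pi}$ is the generator of $(X,Y,\Theta)$ with $\Theta$ held constant; on $\{z\ge z^*\}$ the function $V$ is transported along the sale lines $(x,y,\theta)\mapsto(x+ys,\,y,\,\theta-s)$, equivalently $V_\theta=yV_x$ there; and the marginal-no-gain inequality $V_\theta-yV_x\ge0$ holds throughout, with equality on $\{z\ge z^*\}$. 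Substituting the homothetic ansatz $V=e^{-\beta t}\frac{x^{1-R}}{1-R}g(z)$, $z=y\theta/x$, and carrying out the two inner maximisations (the $c$-maximisation yields \eqref{eq:Cdef}; the $\pi$-maximisation, after accounting for the correlation $\rho$ between $P$ and $Y$, yields \eqref{eq:Pidef} with $\Psi_g$ as in \eqref{eqn:Psidef}) reduces the problem to a second-order nonlinear ODE for $g$ on $[0,z^*]$, subject to $g(0)=(b_1/(b_4R))^{-R}$ (the Merton constant) and to second-order smooth fit at the unknown point $z^*$ onto the explicit sale-region profile $(b_1/(b_4R))^{-R}n(q^*)^{-R}(1+z)^{1-R}$, which is exactly the transported solution.

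\emph{Reduction to \eqref{eq:n} and the stated properties.} This is the technical core. I would change the spatial variable from $z$ to $q$ via $z=q/(1-q)$, so the free boundary is $z^*=q^*/(1-q^*)$ as in \eqref{eq:886}, introduce $n(q)$ as the appropriate Legendre-type transform of the pair $(g,g')$, and set $N(q)=n(q)^{-R}(1-q)^{R-1}$ with $W=N^{-1}$. A direct but lengthy computation then turns the second-order ODE for $g$ into exactly the first-order ODE \eqref{eq:n} for $n$, and shows that the second-order smooth-fit condition at $z^*$ is equivalent to $n$ having a turning point at $q^*$, i.e. $n(q^*)=m(q^*)$, the condition that singles out the free boundary in Proposition~\ref{prop:main}. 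Differentiating $\ln N$ and using \eqref{eq:n} one finds the $\tfrac{1-R}{1-q}$ terms cancel, leaving $N'/N$ of one sign on $[0,q^*)$; hence $N$ is monotone (increasing for $R<1$, decreasing for $R>1$), $W$ is well defined, $N(0)=1$ and $N(q^*)=h^*$. Value matching at $z^*$ is the identity $n(q^*)^{-R}=h^*(1-q^*)^{1-R}$, which, via $1+z^*=(1-q^*)^{-1}$, is just $N(q^*)=h^*$ rewritten. Integrating $g'$ back to the $z$ variable produces the integral equation \eqref{eq:885} for $h$; and since $W(1)=0$ the integrand $1/((1-R)fW(f))$ is non-integrable at $f=1$, so $h$ is forced to the endpoint value $1$ as its argument decreases to $-\infty$, giving $h(-\infty)=1$ and hence $g(0)=(b_1/(b_4R))^{-R}$ as required. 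Positivity of the bracket $g-\tfrac1{1-R}zg'$ in \eqref{eq:Cdef}, needed so that $C^*$ is admissible, follows from the sign of the corresponding expression along $n$.

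\emph{Verification.} Write $\hat V(x,y,\theta,t)=e^{-\beta t}\frac{x^{1-R}}{1-R}g(z)$ for the candidate given by \eqref{eq:883}–\eqref{eq:884}. For an arbitrary admissible $(C,\Pi,\Theta)$, I would show $M_t:=\int_0^te^{-\beta s}\frac{C_s^{1-R}}{1-R}ds+\hat V(X_t,Y_t,\Theta_t,t)$ is a supermartingale: the absolutely continuous increments are controlled by the HJB inequality $\sL^{c,\pi}\hat V+e^{-\beta t}c^{1-R}/(1-R)\le0$, valid for every $(c,\pi)$ by construction, and the increments due to sales (including a possible jump at $t=0$) are of the form $(\hat V_\theta-Y\hat V_x)\,d\Theta\le0$ since $\hat V_\theta-y\hat V_x\ge0$ and $d\Theta\le0$. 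Taking expectations, and using $\hat V\ge0$ when $R<1$ together with the standing assumption $b_1>0$ to obtain the transversality $\E[e^{-\beta t}\hat V(X_t,Y_t,\Theta_t,t)]\to0$, yields $\hat V\ge V$. For the reverse inequality I would construct the optimiser. Under the feedback maps \eqref{eq:Cdef}–\eqref{eq:Pidef} together with the minimal sales of $\Theta$ that keep $Z_t=Y_t\Theta_t/X_t$ in $[0,z^*]$, Itô's formula applied to $Y\Theta/X$ shows $Z$ is a diffusion with local coefficients $\Lambda,\Sigma,\Gamma$, reflected downward at $z^*$; setting $J=z^*-Z$ turns this into a diffusion reflected at $0$ with local-time process $L$, i.e. the pair $(J,L)$ in the statement. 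Existence and uniqueness of $(J,L)$ follow from local Lipschitz continuity and linear growth of $\tilde\Lambda,\tilde\Sigma,\tilde\Gamma$ on $[0,z^*]$, together with the a priori bound $0\le J_t\le z^*$ ($J\ge0$ is the reflection, while $J\le z^*$, i.e. $Z\ge0$, holds because $\Theta\ge0$ and $X,Y>0$, consistently with the coefficients degenerating at $j=z^*$). One reads off $\Theta^*_t=\exp\{-L_t/(z^*(1+z^*))\}$ from the relation between the local time pushing $Z$ down at $z^*$ and the decrease of $\Theta$, and $X^*_t=Y_t\Theta^*_t/(z^*-J_t)$, with the stated time-$0$ jump when $z_0>z^*$; one checks $X^*\ge0$, $\Theta^*$ nonincreasing and right-continuous, and $C^*>0$, so the strategy is admissible. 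For this strategy $M$ is a true martingale (the stochastic integral has the required integrability, and $dL_t$ contributes nothing since it is carried by $\{Z_t=z^*\}$, where $\hat V_\theta-y\hat V_x=0$) and the transversality holds with equality, so the objective equals $\hat V(x_0,y_0,\theta_0,0)$. Hence $V=\hat V$, giving \eqref{eq:883}, its continuous extension to $x=0$, and, from $\mathcal V_0(x+p)=\mathcal V(x,y,\theta)$, the certainty-equivalent formula \eqref{eq:CEdef}.

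\emph{The case $R>1$, and the main obstacle.} The case $R>1$ is handled by the same scheme: the monotonicities of $N$, $W$ and of the integral equation reverse, $\hat V\le0$, and the transversality and uniform-integrability estimates are taken in the appropriate direction; otherwise nothing changes. I expect the genuine difficulties to be concentrated in two places: first, the algebraic identification in the second step — verifying that the second-order HJB ODE for $g$ collapses exactly to \eqref{eq:n}, and that second-order smooth fit is equivalent to the turning-point condition $n(q^*)=m(q^*)$ — which is the technical contribution advertised in the introduction; second, the well-posedness of the Skorokhod problem together with the boundary analysis in the verification step, since several of the coefficients $\Lambda,\Sigma,\Gamma$ degenerate as $z\downarrow0$ and as $z\uparrow z^*$, and the transversality condition must be checked precisely where $X^*$ vanishes or $Z_t$ reaches an endpoint.
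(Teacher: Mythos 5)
Your scheme for $R<1$ — take $n$ and $q^*$ from Proposition~\ref{prop:main}, build $N,W,h,g$, verify the HJB equality on $\{z\le z^*\}$ and the inequalities $\sM\sG\ge0$, $\sL\sG-\beta\sG\le0$ on the complementary region, then sandwich $V$ between a supermartingale bound for arbitrary strategies and a martingale identity along the diffusion reflected at $z^*$ obtained from the Skorokhod problem — is essentially the paper's proof (the paper packages these steps as Lemma~\ref{lem:operator}, Lemma~\ref{lem:concave} for the jump terms, Proposition~\ref{prop:Z} for admissibility of the candidate optimiser, and Lemma~\ref{lem:N4N5} for the martingale property and transversality; note that the last of these is not routine, resting on the explicit bounds $|\sigma\Pi^*\sG_x/\sG|\le K_\pi$ and $0\le y\sG_y/\sG\le|1-R|$ and on the lower bound $n(q)\ge\min\{n(q^*),1\}$, all of which come from Proposition~\ref{prop:NWw}(3)). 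Two small imprecisions in that half: for the upper bound with $R<1$ you do not need (and cannot in general prove) transversality along an arbitrary admissible strategy — you simply discard the non-negative terminal term, after noting that the local martingale part is bounded below and hence a supermartingale; and the jump-at-sale terms are handled in the paper via concavity of $\chi\mapsto\sG(x+y\chi,y,\theta-\chi)$ rather than the first-order inequality alone, though integrating $\sM\sG\ge0$ along the sale line is an acceptable alternative.

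The genuine gap is your claim that for $R>1$ ``otherwise nothing changes.'' When $R>1$ the candidate is negative and unbounded below as $x\downarrow0$, so the upper-bound argument breaks in two places: the terminal term $\E[e^{-\beta t}\sG(X_t,Y_t,\Theta_t)]$ now has the wrong sign and cannot be discarded, and transversality cannot be established for an arbitrary admissible strategy; moreover $M_t$ is no longer bounded below, so the ``local martingale bounded below, hence supermartingale'' step also fails. A sign-flipped repetition of the $R<1$ argument therefore does not go through. The paper handles this in Appendix~\ref{app:R>1} by the Davis--Norman device: replace the candidate by the shifted function $\widetilde V_\varepsilon(x,y,\theta,t)=G(x+\varepsilon,y,\theta,t)$, which is bounded below by a multiple of $\varepsilon^{1-R}$, localize with stopping times $\tau_n$ built from the quadratic variations of the stochastic integrals (using the same bounds on $\Pi\widetilde V_x$ and $y\widetilde V_y$) so that these integrals are true martingales, and then pass to the limits $n\to\infty$, $t\to\infty$ and finally $\varepsilon\to0$. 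Some device of this kind is required, and your proposal does not identify it; the vague reference to taking estimates ``in the appropriate direction'' does not supply the missing idea.
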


\begin{rem}
Given $n$ and the first crossing point $q^*$ the construction of $N$, $W$, $h$, $g$ and hence $V$ and $p$ is
immediate and straightforward.

Further, given realisations of the price processes $P$ and $Y$ (or equivalently paths of the Brownian
motions $B^1$ and $B^2$ (or $B^1$ and $B^{\perp}$) then $J$ and $L$ arise from a pathwise solution of a
Skorokhod problem~\cite[Lemma VI.2.1]{RY}. The optimal endowed asset holdings $\Theta^*$ and then also the optimal
cash wealth process $X^*$ are given explicitly in terms of the the solution of the Skorokhod problem;
the optimal consumption and investment are then given in feedback form as functions of these primary quantities.
\end{rem}

\begin{thm}
\label{thm:ndeg2}
Suppose $R<1$ and suppose $b_{3,crit}(b_1, b_2, R) \leq b_3 < \frac{b_1}{1-R} + b_2 R$.

Let $n$ solve {\rm (\ref{eq:n})} on $[0,1]$. Then for the given parameter combinations we have $q^*=1$.
Then $N$ is increasing and $W$ is well defined.
Define $\gamma:(1,\infty) \mapsto \mathbb R$ by
\begin{equation}
\label{eqn:gammadefR<1}
\gamma(v) = \frac{\ln v}{1-R} + \frac{R}{1-R} \ln n(1) - \frac{1}{1-R} \int_v^\infty \frac{(1-W(s))}{s W(s)} ds .
\end{equation}
Let $h$ be inverse to $\gamma$ and let $g(z) = (b_4 R/b_1)^R h (\ln z)$.

Then, the value function $V$ is given by
\begin{equation}
V(x,y,\theta,t) = e^{-\beta t}  \frac{x^{1 - R}}{1 - R}
g\left(\frac{y\theta}{x}\right), \hspace{10mm} x>0, y>0, \theta \geq 0
\label{eq:v}
\end{equation}
and $V(0,y,\theta,t) = e^{-\beta t}  \frac{y^{1-R} \theta^{1 - R}}{1 - R} n(1)^{-R}$.

Suppose $\theta_0>0$.
Let $K_0 = x_0/(y_0 \theta_0) \in [0,\infty)$. Let $(K,L) = (K_t, L_t)_{t \geq 0}$ be the unique pair such that
\begin{enumerate}
\item $K$ is positive,

\item $L$ is increasing, continuous, $L_0 = 0$, and $dL_t$ is
carried by the set $ \left\{ t: K_{t}= 0 \right\}$,

\item $K$ solves
\[ K_t = K_0 + \int_0^t \hat{\Lambda}(K_s) ds +
\int_0^t \hat{\Sigma}(K_s) dB^{1}_s +
\int_0^t \hat{\Gamma}(K_s) dB^{2}_s
+ L_t, \]
where
\[ \hat{\Lambda}(k) = (\eta - \zeta) \eta k + \lambda(\lambda - \eta \rho) k \Psi_g(1/k) - k \left[ g(1/k) - \frac{g'(1/k)}{k(1-R)} \right]^{-1/R}, \]
$\hat{\Sigma}(k) = \lambda k \Psi_g(1/k)$ and $\hat{\Gamma}(k) = - \eta k$.
\end{enumerate}

Then, the optimal holdings $\Theta^*_t$
of the endowed asset, and the optimal wealth process are given by
\begin{eqnarray}
\label{eq:optimalT}
\Theta^*_t  & = & \theta_0 \exp\left\{- L_t \right\}, \\
\label{eq:optimalX}
X^*_t & = &  Y_t \Theta^*_t K_t
\end{eqnarray}
The optimal consumption process
$C^*_t = C(X^*_t,Y_t,\Theta^*_t)$, the optimal portfolio process
$\Pi^*_t = \Pi(X^*_t,Y_t,\Theta^*_t)$,
and the certainty equivalent value $p=p(X^*_t,Y_t,\Theta^*_t)$ are given in feedback form by the expressions in
(\ref{eq:Cdef}), (\ref{eq:Pidef}) and (\ref{eq:CEdef}).

Now suppose $R>1$ and $b_3 \geq b_{3,crit}$. Then $N$ is decreasing. Define
\begin{equation}
\label{eqn:gammadefR>1}
\gamma(v) = -\frac{\ln v}{R-1} - \frac{R}{R-1} \ln n(1) - \frac{1}{R-1} \int_0^v \frac{(1-W(s))}{s W(s)} ds .
\end{equation}
Let $h$ be inverse to $\gamma$ and define $g$ and the value function as in the case $R<1$. Then the optimal holdings, optimal consumption, optimal investment in the financial asset, optimal wealth process and the certainty equivalent value of the holdings are all as given in the case $R<1$.

\end{thm}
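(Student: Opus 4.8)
The plan is to prove Theorem~\ref{thm:ndeg2} by a verification argument that runs parallel to the proof of Theorem~\ref{thm:maincase}, the essential difference being that the upper free boundary has disappeared: here $q^*=1$ and $z^*=\infty$, so the natural state variable is the inverted ratio $K=X/(Y\Theta)$, which lives on $[0,\infty)$ with a single reflecting barrier at $0$ (corresponding to the depletion of liquid wealth), and there is no interior free boundary and hence no smooth-fit condition to impose at an unknown point.

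First I would establish the analytic properties of the constructed objects. Using Proposition~\ref{prop:main} (which gives $q^*=1$ for the stated parameter ranges) together with the sign information on $m,\ell,n$ in Lemma~\ref{lem:n} and the surrounding remarks, I would check that $n$ is positive on $[0,1)$, that $N(q)=n(q)^{-R}(1-q)^{R-1}$ is strictly monotone on $[0,1)$ with $N(0)=1$, and that $N(q)\to\infty$ as $q\uparrow1$ when $R<1$ (respectively $N(q)\to h^*$ when $R>1$), so that the inverse $W$ is well-defined and monotone. Next I would analyse the integral in the definition of $\gamma$: a local expansion near $q=0$ shows $W(s)\sim c(s-1)$ as $s\downarrow1$, so $\int_v^\infty\frac{1-W(s)}{sW(s)}\,ds$ diverges as $v\downarrow1$ and $\gamma(v)\to-\infty$ there, while a local expansion near $q=1$ (using $\ell(1)\ge 0$, equivalently $b_3\le\frac{b_1}{1-R}+b_2R$) shows the integrand is integrable at $\infty$; consequently $\gamma:(1,\infty)\to\mathbb R$ is a strictly monotone bijection, $h=\gamma^{-1}$ and $g(z)=(b_4R/b_1)^R h(\ln z)$ are well-defined and $C^2$ on $(0,\infty)$, with $g(0+)$ equal to the Merton constant and the asymptotics of $g,g',g''$ as $z\to\infty$ determined by the expansion of $\gamma$ near $v=\infty$, matching the stated boundary value $V(0,y,\theta,t)$. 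The case $R>1$ is handled identically using \eqref{eqn:gammadefR>1} and the reversed monotonicities.

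The next step is to verify that $g$ solves the second-order nonlinear ODE for which $V(x,y,\theta,t)=e^{-\beta t}\frac{x^{1-R}}{1-R}g(y\theta/x)$ satisfies the interior HJB equation $\sup_{c,\pi}\{\cdots\}=0$ on $\{x>0\}$, with $c$ and $\pi$ attaining the supremum at the feedback values \eqref{eq:Cdef}--\eqref{eq:Pidef}. This is exactly the reduction machinery of the paper run in reverse: differentiating the identities defining $N$, $\gamma$ and $h$ and substituting reproduces \eqref{eq:n} for $n$, so the constructed $n$ lifts to a solution $g$ of the $g$-ODE. In addition I would verify the no-sale gradient inequality $V_\theta\ge yV_x$ on $\{x>0\}$ — equivalently $(1+z)g'(z)\ge(1-R)g(z)$ for $R<1$ (and the reversed inequality for $R>1$) — and check that it holds with equality in the limit $z\to\infty$; this last condition plays the role that smooth fit at the free boundary plays in Theorem~\ref{thm:maincase}, encoding that sales occur only on $\{x=0\}$. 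Together with the limiting behaviour of $g$ this shows the candidate $V$ satisfies the full variational inequality for the problem.

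The final step is the stochastic verification. I would invoke the Skorokhod map~\cite[Lemma VI.2.1]{RY} and a Picard/fixed-point argument to obtain existence and pathwise uniqueness of $(K,L)$ solving the reflected SDE with coefficients $\hat\Lambda,\hat\Sigma,\hat\Gamma$, after checking these coefficients are sufficiently regular (in particular near $k=0$, where $g(1/k)$ and $g'(1/k)$ are evaluated at $z\to\infty$). Setting $\Theta^*_t=\theta_0 e^{-L_t}$ and $X^*_t=Y_t\Theta^*_t K_t$, an It\^o computation should confirm that, under the feedback controls $C^*,\Pi^*$, the ratio $K$ satisfies the stated SDE, that $X^*\ge 0$ with $dL$ carried by $\{X^*=0\}$, and that the sales are self-financing. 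For an arbitrary admissible triple, applying It\^o to $e^{-\beta t}V(X_t,Y_t,\Theta_t)$ and using the interior HJB inequality and $V_\theta\ge yV_x$ shows $e^{-\beta t}V(X_t,Y_t,\Theta_t)+\int_0^t e^{-\beta s}\frac{C_s^{1-R}}{1-R}\,ds$ is a supermartingale, hence $V$ dominates the objective; along the candidate controls all inequalities become equalities, so the same process is a local martingale, and a transversality estimate $e^{-\beta t}\mathbb E[V(X^*_t,Y_t,\Theta^*_t)]\to0$ (which uses the Standing Assumption $b_1>0$) upgrades this to equality, establishing optimality and identifying $V$. The formulas \eqref{eq:optimalT}--\eqref{eq:optimalX} and \eqref{eq:Cdef}--\eqref{eq:CEdef} for $\Theta^*,X^*,C^*,\Pi^*,p$ then follow from the feedback forms together with the computed value of $g(0)$, and the $R>1$ case differs only in the sign conventions already recorded in the statement.

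I expect the main obstacle to be twofold: (a) the boundary analysis at $x=0$, that is, pinning down the precise (subleading) asymptotics of $g,g',g''$ as $z\to\infty$ well enough to justify that the reflecting dynamics of $K$ at $0$ is the legitimate limit of the interior behaviour, that the SDE coefficients extend admissibly to $k=0$, and that the candidate $V$ matches the stated value there; and (b) the transversality and uniform-integrability estimates needed to close the infinite-horizon verification, which are the customary delicate point in singular stochastic control and must be carried out separately in the cases $R<1$ and $R>1$, using $b_1>0$ in each.
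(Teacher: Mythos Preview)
Your proposal is correct and follows essentially the same route as the paper: a verification argument parallel to Theorem~\ref{thm:maincase}, with the inverted ratio $K=X/(Y\Theta)$ solving a reflected SDE at $0$, admissibility checked via an It\^o computation, and the supermartingale/martingale dichotomy closed by the analogues of Lemmas~\ref{lem:concave}, \ref{lem:operator} and \ref{lem:N4N5} (stated in the paper as Lemmas~\ref{lem:2ndconcave}--\ref{lem:N4N5b}, with the $R>1$ transversality handled separately in Appendix~\ref{app:R>1}). Your write-up is in fact more explicit than the paper on the analytic preliminaries---in particular the convergence of the integral defining $\gamma$ and the bijectivity of $\gamma$---which the paper leaves implicit; the only item you might add for completeness is the joint concavity of $\sG$ in $(x,\theta)$ (the analogue of Lemma~\ref{lem:concave}), which is what the paper uses to control the jump terms $N^3$ for general admissible $\Theta$, not merely the first-order inequality $V_\theta\ge yV_x$.
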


\begin{rem}
Based on the results in Theorems~\ref{thm:maincase} and \ref{thm:ndeg2} we can show (Corollary~\ref{cor:pmonb}) that the certainty equivalent value of the holdings in the endowed asset are increasing in $b_3$ and decreasing in $b_1$ and $b_2$. Thus, for example, the certainty equivalent value is increasing in $\zeta$ the Sharpe ratio of the endowed asset, and decreasing in $\beta$ the discount factor.
\end{rem}

\begin{rem}
Here is one, perhaps surprising, consequence of Theorem~\ref{thm:ndeg2} which holds for $b_2>1$.
Define the stopping time $\tau = \inf \{t; X^*_t = 0\}$. Then, for the parameter combinations studied in Theorem~\ref{thm:ndeg2} and under optimal behaviour,
the investments in the risky asset are such that
$\Pi^*_{\tau} = \Pi_\tau(0, Y_\tau, \Theta^*_\tau) \neq 0$. This implies that even when liquid wealth is zero, it is not optimal to invest zero amount
in the financial asset. Adverse movements in the price of the financial asset have a negative impact on liquid wealth, and must be financed through
sales of the endowed asset. Conversely, beneficial movements in the price of the financial asset will generate positive liquid wealth for the agent. In particular,
$X^*=0$ is not an absorbing state.

In contrast, for $b_2=1$ and for the parameter combinations where Theorem~\ref{thm:ndeg2} applies, no sales of the endowed asset
take place until liquid wealth has been exhausted, but once liquid wealth is zero, there are no investments in the financial asset, cash wealth is maintained at zero,
and consumption is financed through sales of the endowed asset, continuously over time.
\end{rem}

\section{Proofs and verification arguments: the degenerate cases}

For $\sH=\sH(x,y,\theta) : [0,\infty) \times(0,\infty) \times [0,\infty) \mapsto \R$ with $\sH \in C^{2,2,1}$ and such that $\sH_x > 0$ define operators
\begin{eqnarray*}
\mathcal{L} \sH & = & \sup_{(c>0, \pi)}\left\{ \frac{c^{1-R}}{1-R} -c \sH_{x}+\alpha y\sH_{y}+\lambda\sigma\pi \sH_{x}+rx\sH_{x}+\frac{1}{2}\sigma^{2}\pi^{2}\sH_{xx}+\frac{1}{2}\eta^{2}y^{2}\sH_{yy}+\sigma\eta\rho y\pi \sH_{xy}\right\}\\
& = &  \sH_{x}^{1-1/R} \frac{R}{1-R} + rx \sH_x + \alpha y \sH_y + \frac{1}{2} \eta^2 y^2 \sH_{yy} - \frac{(\eta \rho y \sH_{xy} + \lambda \sH_x)^2}{2\sH_{xx}},\\
\mathcal{M} \sH  & = & \sH_{\theta}-y\sH_{x}.
\end{eqnarray*}
$\mathcal{L}\sH$ is defined on $(0, \infty)\times(0,\infty)\times[0,\infty)$.
However we can extend the domain of definition of $\sL$ to $[0, \infty)\times(0,\infty)\times[0,\infty)$ by extending the
definition of $\sH$ to the region $-y \theta<x \leq 0$ in such a way that the derivatives of $\sH$ are continuous
at $x=0$. $\mathcal{M}\sH$ is defined on $(0, \infty)\times(0,\infty)\times(0,\infty)$. Note that we will not need $\sM \sH$ at $\theta = 0$, but we can extend the domain of definition of
$\sM$ to $x=0$ using the same extension of $\sH$ to $x \leq 0$.

\subsection{The Verification Lemma in the case of a depreciating asset.}

Suppose $b_3 \leq 0$. Our goal is to show that the conclusions of
Theorem~\ref{thm:4cases}(1) hold.

From Proposition~\ref{prop:main} we know $q^* = 0$. Define the candidate value function via $G(x,y,\theta,t)
= e^{-\beta t} \sG(x,y,\theta)$ where
\begin{equation}
\sG(x,y,\theta) =  \left(\frac{b_1}{b_4 R}\right)^{-R} \frac{(x + y\theta)^{1-R}}{1-R}.
\label{eq:vdeg1}
\end{equation}
The candidate optimal strategy is to sell all units of the risky asset immediately.

Prior to the proof of the theorem, we need the following lemma.

\begin{lem}
\label{deg1:inequal}
Suppose $b_3 \leq 0$. Consider the candidate value function constructed in (\ref{eq:vdeg1}). Then $\mathcal{M} \sG = 0$, and $\mathcal{L} \sG- \beta \sG \leq 0$ with equality at
$\theta = 0$.
\end{lem}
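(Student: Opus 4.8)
The plan is to exploit the fact that the candidate $\sG$ depends on $(x,y,\theta)$ only through the aggregate wealth variable $w:=x+y\theta$; up to the multiplicative constant $K:=\left(\frac{b_1}{b_4R}\right)^{-R}$ it is $\sG = K\,w^{1-R}/(1-R)$, which is precisely the Merton value function of total wealth, i.e.\ the value of liquidating the endowed holding at time $0$. All computations take place on $\{x+y\theta>0\}$, where $\sG$ is smooth and the operators are defined after the extension described above. From $\sG_x = Kw^{-R}$ and $\sG_\theta = Kw^{-R}y = y\sG_x$ one reads off $\sM\sG = \sG_\theta - y\sG_x = 0$ immediately.

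For the second claim I would compute the remaining derivatives, $\sG_{xx} = -RKw^{-R-1}$, $\sG_y = Kw^{-R}\theta$, $\sG_{yy} = -RKw^{-R-1}\theta^2$, $\sG_{xy} = -RKw^{-R-1}\theta$; since $\sG_{xx}<0$ (using $R>0$) the supremum defining $\sL\sG$ is attained and the closed form for $\sL\sG$ applies. Substituting these into $\sL\sG - \beta\sG$, writing $a:=y\theta$ so that $x = w-a$, and dividing out the common positive factor $Kw^{-R-1}$, what remains is a quadratic $Q(w,a)$ whose coefficients of $w^2$, $aw$ and $a^2$ I would evaluate using $K^{-1/R}=\frac{b_1}{b_4R}$, the identity $\frac{b_1}{b_4}=\beta-r(1-R)-\frac{\lambda^2(1-R)}{2R}$, the relation $\eta(\zeta-\lambda\rho)=\frac{b_3}{b_4}$ and $\frac12\eta^2(1-\rho^2)=\frac1{b_4}$. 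The $w^2$-coefficient then cancels identically --- this is just the statement that the Merton function solves the frictionless HJB equation --- leaving
\[
\sL\sG - \beta\sG \;=\; Kw^{-R-1}\,\frac{y\theta}{b_4}\bigl(b_3(x+y\theta)-R\,y\theta\bigr).
\]

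With this identity the conclusion is immediate: $K>0$, $w^{-R-1}>0$, $b_4>0$, $y\theta\ge 0$ and $R>0$, while the hypothesis $b_3\le 0$ gives $b_3(x+y\theta)-R\,y\theta\le 0$; hence $\sL\sG - \beta\sG\le 0$, and since $y>0$ the right-hand side vanishes exactly when $\theta = 0$, giving the asserted equality on that boundary.

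The computation is routine rather than subtle; the one point requiring care is the algebraic reduction of $\sL\sG-\beta\sG$ to the displayed product, and in particular recognising that the vanishing of the $w^2$-coefficient is forced by the definition of $b_1$ (equivalently, that $\sG$ restricted to $\theta=0$ is exactly the Merton value function). Only after that cancellation does the sign of the whole expression collapse onto the sign of $b_3$, which is where the hypothesis $b_3\le 0$ enters.
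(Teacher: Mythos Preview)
Your proof is correct and follows essentially the same route as the paper: a direct computation of $\sM\sG$ and $\sL\sG-\beta\sG$ for the candidate $\sG$, reducing the latter to an explicit expression whose sign is determined by $b_3\le 0$ and which vanishes at $\theta=0$. The only cosmetic difference is the choice of variables---the paper writes the answer in terms of $z=y\theta/x$ as $x^{1-R}R(b_1/(b_4R))^{1-R}(1+z)^{1-R}\bigl[\tfrac{b_3}{b_1}\tfrac{z}{1+z}-\tfrac{R}{b_1}(\tfrac{z}{1+z})^2\bigr]$, which is algebraically identical to your expression in $(w,a)$.
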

\begin{proof}
Given the form of the candidate value function in (\ref{eq:vdeg1}), $\mathcal{M} \sG = 0$ follows immediately.
On the other hand, writing $z = y\theta/x$,
\[
\mathcal{L}\sG - \beta \sG = x^{1-R} R \left(\frac{b_1}{b_4 R}\right)^{1-R} (1+z)^{1-R}
\left[\frac{b_3}{b_1}\frac{z}{1+z} - \frac{R}{b_1} \left(\frac{z}{1+z}\right)^2 \right]
\leq 0,
\]
with equality at $z = 0$, which completes the proof.
\end{proof}

\begin{thm}
\label{thm:deg1}
%Suppose $R<1$.
Suppose $b_3 \leq 0$. Then the value function $V$ is given by
\begin{equation}
V(x,y,\theta,t) =  e^{-\beta t} \left(\frac{b_1}{b_4 R}\right)^{-R} \frac{(x + y\theta)^{1-R}}{1-R}= G(x,y,\theta,t).
\label{eq:vsimple}
\end{equation}
The optimal holdings $\Theta^*_t$ of the endowed asst, the optimal consumption process $C^*_t$ and the resulting wealth process are given by
\[ %\begin{equation}
(\bigtriangleup \Theta^{*})_{t = 0} = -\theta_0,
\hspace{6mm}
C^{*}_t = \frac{b_1}{b_4 R} X^*_t,
\hspace{6mm}
\Pi^*_t = \frac{\lambda}{\sigma R} X^*_t,
\] %\end{equation}
\begin{equation}
X^*_t = (x_0 + y_0 \theta_0) \exp \left\{\left(\frac{\lambda^2}{R} + r - \frac{b_1}{b_4 R} - \frac{\lambda^2}{2R^2}\right)t + \frac{\lambda}{R}B^1_t\right\}.
\label{eq:optmal33}
\end{equation}
The certainty equivalence price is given by $p(x,y,\theta,t) = y\theta$.
\end{thm}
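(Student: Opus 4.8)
The plan is a standard verification argument, using the candidate function $G$ from \eqref{eq:vsimple} together with the inequalities established in Lemma~\ref{deg1:inequal}, so the proof splits into two halves: an upper bound $V \leq G$ valid for every admissible strategy, and a lower bound $V \geq G$ achieved by the explicit candidate strategy. For the upper bound, I would fix an arbitrary admissible triple $(C,\Pi,\Theta) \in \mathcal{A}(x,y,\theta)$, apply It\^o's formula to $e^{-\beta t} \sG(X_t, Y_t, \Theta_t)$, and split the increment of $\Theta$ into its continuous part and its jumps. The continuous drift terms are controlled by $\mathcal{L}\sG - \beta\sG \leq 0$ (after taking the supremum over the instantaneous $(c,\pi)$ in the definition of $\mathcal{L}$, which dominates the actual contribution of $(C_t, \Pi_t)$), while the $d\Theta$ terms, continuous and jump alike, are nonpositive because $\mathcal{M}\sG = 0$ gives $\sG_\theta - y\sG_x = 0$, so the sale terms integrate to zero (the jump at $t=0$ from $\theta_0$ to $\Theta_0$ likewise leaves $\sG$ unchanged since $\sG$ depends on $(x,\theta)$ only through $x + y\theta$). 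This yields
\[
\mathbb{E}\!\left[ e^{-\beta T}\sG(X_T, Y_T, \Theta_T)\right] + \mathbb{E}\!\left[\int_0^T e^{-\beta t}\frac{C_t^{1-R}}{1-R}\,dt\right] \leq \sG(x_0 + y_0\theta_0),
\]
up to the usual localisation.

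Passing to the limit $T \to \infty$ is the step I expect to be the main obstacle, and it requires a separate treatment of the two sign regimes for $R-1$. When $R > 1$ the utility integrand is negative, so $\sG \leq 0$ and the boundary term $\mathbb{E}[e^{-\beta T}\sG(X_T,\ldots)]$ is automatically $\leq 0$; monotone (or dominated) convergence on the integral then gives $\mathbb{E}\big[\int_0^\infty e^{-\beta t} C_t^{1-R}/(1-R)\,dt\big] \leq \sG(x_0 + y_0\theta_0)$ directly. When $R < 1$ the integrand is positive but one must show the boundary term does not contribute a positive amount in the limit; here I would invoke Standing Assumption~1 ($b_1 > 0$), which is exactly the condition ensuring $\sG$ has the right integrability, together with a transversality estimate on $\mathbb{E}[e^{-\beta T}(X_T + Y_T\Theta_T)^{1-R}]$. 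The cleanest route is to note that liquid-plus-endowed wealth $X_t + Y_t\Theta_t$ is dominated by the wealth process of the frictionless Merton problem with the asset $Y$ freely tradable and $b_1 > 0$, whose discounted value function is finite; standard Merton estimates then force $e^{-\beta T}\mathbb{E}[\sG(X_T, Y_T, \Theta_T)] \to 0$. Since the candidate strategy here is degenerate (immediate liquidation), one may alternatively reduce directly to the classical Merton transversality lemma applied to $X^*$.

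For the lower bound, I would take the explicit candidate strategy: sell all $\theta_0$ units at $t=0$ (so $\Theta^*_t \equiv 0$, $X^*_0 = x_0 + y_0\theta_0$), then run the Merton optimal feedback controls $C^*_t = (b_1/b_4 R)X^*_t$ and $\Pi^*_t = (\lambda/\sigma R)X^*_t$. Substituting these into the SDE for $X$ gives the geometric Brownian motion \eqref{eq:optmal33} by a direct computation, and because $b_1 > 0$ this process stays strictly positive, so the strategy is admissible. Along this strategy every inequality above is an equality: $\mathcal{M}\sG = 0$ is used at the $t=0$ jump, and $\mathcal{L}\sG - \beta\sG = 0$ holds because $\theta = 0$ (equivalently $z = 0$) for $t > 0$, with the supremising $(c,\pi)$ being exactly the chosen feedback controls. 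Evaluating $\mathbb{E}\big[\int_0^\infty e^{-\beta t} (C^*_t)^{1-R}/(1-R)\,dt\big]$ by plugging in \eqref{eq:optmal33} gives precisely $\sG(x_0 + y_0\theta_0) = (b_1/b_4 R)^{-R}(x_0+y_0\theta_0)^{1-R}/(1-R)$, matching the upper bound and hence establishing \eqref{eq:vsimple}. Finally, the certainty equivalent price follows immediately: $\mathcal{V}_0(x) = (b_1/b_4R)^{-R} x^{1-R}/(1-R)$ from the $\theta = 0$ case, so $\mathcal{V}_0(x + p) = \mathcal{V}(x,y,\theta) = \mathcal{V}_0(x + y\theta)$ forces $p = y\theta$, and $t$-independence is as already noted in the text.
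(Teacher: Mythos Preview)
Your overall plan is right and matches the paper's, but the treatment of the two sign regimes in the upper bound is reversed, and this is a genuine gap. When $R<1$ one has $\sG \geq 0$, so from
\[
\mathbb{E}\!\left[e^{-\beta T}\sG(X_T,Y_T,\Theta_T)\right] + \mathbb{E}\!\left[\int_0^T e^{-\beta t}\frac{C_t^{1-R}}{1-R}\,dt\right] \leq \sG(x_0,y_0,\theta_0)
\]
the boundary term is non-negative and can simply be dropped; no transversality estimate is needed. Moreover, the way the paper actually gets this inequality (rather than just a localised version) is by observing that $M_t \geq 0$ forces the local martingale $N^4+N^5$ to be bounded below by $-M_0$, hence a supermartingale --- your ``up to the usual localisation'' hides exactly this step.

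Conversely, for $R>1$ one has $\sG \leq 0$, and then your argument breaks: dropping a non-positive boundary term from the left-hand side of the inequality \emph{increases} the left-hand side, so you cannot conclude $\mathbb{E}[\int_0^\infty\cdots]\leq \sG_0$ ``directly''. One really does need to show the boundary term tends to zero, and this is precisely where the paper invokes the Davis--Norman $\varepsilon$-shift (Appendix~\ref{app:R>1}): replace $\sG(x,y,\theta)$ by $\sG(x+\varepsilon,y,\theta)$, which for $R>1$ is bounded below by $\varepsilon^{1-R}(b_1/b_4R)^{-R}/(1-R)$, localise with explicit stopping times, and let $\varepsilon\downarrow 0$ at the end. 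Your proposed alternative --- a domination of $X_t+Y_t\Theta_t$ by the frictionless Merton wealth --- is not correct in general for an arbitrary admissible strategy, so it would not repair the argument even in the regime where you intended to use it.
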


\begin{proof}
We prove the result at $t=0$, ie we show that $\sV(x,y,\theta)=V(x,y,\theta,0) = G(x,y,\theta,0) = \sG(x,y,\theta)$; the case of general $t$ follows from the time-homogeneity of the problem.

Note that under proposed strategies in (\ref{eq:optmal33}), the optimal strategy is to sell the entire endowed asset holding immediately, which gives
$X^*_{0+} = x_0 + y_0\theta_0$ and to finance investment and consumption from liquid wealth thereafter.
Hence, the wealth process $(X^*_t)_{t \geq 0}$ evolves as $dX^*_t = (\lambda \sigma \Pi^*_t + rX^*_t - C^*_t)dt + \sigma \Pi^*_t dB^1_t$. This gives
$X^*_t =(x_0 + y \theta_0) \exp \left\{\left(\frac{\lambda^2}{R} + r - \frac{b_1}{b_4 R} - \frac{\lambda^2}{2R^2}\right)t + \frac{\lambda}{R}B^1_t\right\}$.

The value function under strategy proposed in (\ref{eq:optmal33}) is
\begin{eqnarray*}
\lefteqn{\mathbb{E} \left[\int_0^\infty e^{-\beta t} \frac{{C^{*}_t}^{1-R}}{1-R} dt \right]} \\
 & = &
\left(\frac{b_1}{b_4 R}\right)^{1-R} \frac{(x_0 + y_0 \theta_0)^{1-R}}{1-R} \int_0^\infty  \exp\left\{ \left[ \frac{\lambda^2(1-R)}{2R} + r(1-R) - \frac{b_1(1-R)}{b_4 R} - \beta \right] t \right\}dt \\
& = &
\left(\frac{b_1}{b_4 R}\right)^{1-R} \frac{(x_0 + y_0 \theta_0)^{1-R}}{1-R} \int_0^\infty  \exp\left\{ \left( -\frac{b_1}{b_4 R} \right) t \right\}dt \\
& = & \left(\frac{b_1}{b_4R}\right)^{-R} \frac{(x_0 + y_0\theta_0)^{1 - R}}{1 - R} = \sG(x_0+y_0\theta_0,y_0,0).
\end{eqnarray*}
Hence $V(x,y,\theta,0) \geq G(x,y,\theta,0)$.

Now, consider general admissible strategies. Suppose first that $R<1$. Define the process
$M = (M_t)_{t \geq 0}$ by
\[ %\begin{equation}
M_t = \int_0 ^t e^{-\beta s} \frac{C_s ^{1-R}}{1-R}ds + e^{-\beta t}\sG\left(X_t, Y_t, \Theta_t\right).
\] %\end{equation}
Applying the generalised It\^{o}'s formula~\cite[Section 4.7]{gito} to $M_t$ leads to
\begin{eqnarray}
\nonumber
M_t - M_0 & = & \int_0^t  e^{-\beta s} \Big[ \frac{C_s^{1-R}}{1-R} -C_s \sG_{x}+\alpha Y_s \sG_{y} +\lambda\sigma\Pi_s \sG_{x}+rX_s \sG_{x} \\
\nonumber
& & \hspace{10mm} +\frac{1}{2}\sigma^{2}\Pi_s^{2} \sG_{xx}+\frac{1}{2}\eta^{2}Y_s^{2}\sG_{yy}+\sigma\eta\rho Y_s\Pi_s \sG_{xy} - \beta \sG\Big]ds \\
\nonumber
& & + \int_0^t e^{-\beta s} (\sG_\theta - Y_s \sG_x) d\Theta_s \\
\label{eq:Mexpansion}
& & + \sum_{\substack{0<s \leq t}} e^{-\beta s}
\left[\sG(X_s, Y_s, \Theta_s) - \sG(X_{s-}, Y_{s-}, \Theta_{s-}) - \sG_x (\bigtriangleup X)_s
- \sG_\theta (\bigtriangleup \Theta)_s \right] \\
\nonumber
& & +  \int_0^t e^{-\beta s} \sigma \Pi_s \sG_x dB^1_s \\
\nonumber
& & + \int_0^t e^{-\beta s} \eta Y_s \sG_y dB^2_s \\
\nonumber
& = & N_t^1 + N_t^2 + N_t^3 + N_t^4 + N_t^5.
\end{eqnarray}
Note that for a general admissible strategy $\Theta$ and $X$ do not need to be continuous, so
that here the arguments of $\sG_{\cdot}$ are
$(X_{s-},Y_{s},\Theta_{s-})$.

Lemma~\ref{deg1:inequal} implies that $\mathcal{L}\sG - \beta \sG\leq 0$ and $\mathcal{M} \sG = 0$, which leads to
$N_t^1 \leq 0$ and $N_t^2 = 0$. Using the fact that $(\Delta X)_s = -Y_s (\Delta \Theta)_s$ and
writing $\theta = \Theta_{s-}$, $x = X_{s-}$, $\chi = - (\Delta
\Theta)_s$ each non-zero jump in $N^3$ is of the form
\[
(\Delta N^3)_s = e^{-\beta s} \left\{ \sG(x+y \chi,y,\theta - \chi) - \sG(x,y,\theta) +
\chi \left[ \sG_\theta (x,y,\theta) - y \sG_x (x,y,\theta) \right] \right\}.
\]
Given the form of the candidate value function in (\ref{eq:vdeg1}), it is easy to see that
 $\sG(x+y \phi,y,\theta - \phi)$ is constant in $\phi$, whence
 $y \sG_x = \sG_{\theta}$ and $(\Delta N^3) = 0$.
Then, since $R<1$, we have $0\leq M_t \leq M_0 + N_t ^4 + N_t ^ 5$, and
$(N^4 + N^5)_{t \geq 0}$, as the sum of two local martingales, is a local martingale and is bounded from below and hence a supermartingale. By taking expectations we find $\mathbb{E}(M_t) \leq M_0 = G(x_0, y_0, \theta_0, 0)$, which gives
\[
G(x_0, y_0, \theta_0, 0) \geq \mathbb{E} \int_0^t e^{-\beta s} \frac{C_s ^{1-R}}{1-R}ds +
\mathbb{E} \left[ e^{-\beta s} \sG(X_t, Y_t, \Theta_t)\right] \geq \mathbb{E} \int_0^t e^{-\beta s} \frac{C_s ^{1-R}}{1-R}ds,
\]
where the last inequality follows since $\sG(X_t, Y_t, \Theta_t) \geq 0$ for $R\in(0,1)$. Letting $t \to \infty$, we have
\[
G(x_0, y_0, \theta_0, 0) \geq \mathbb{E}\left(\int_0^\infty e^{-\beta t} \frac{C_t^{1-R}}{1-R}dt\right),
\]
and taking a supremum over admissible strategies leads to $G(x_0, y_0, \theta_0, 0) \geq V(x_0, y_0, \theta_0, 0)$.

The case $R>1$ will be considered in Appendix~\ref{app:R>1}.

\end{proof}

\subsection{Proof of the ill-posed case of Theorem~\ref{thm:4cases}.}
In the case $R<1$ and $b_3 \geq \frac{b_1}{1-R} + b_2 R$
it is sufficient to give an example of an admissible strategy for which the expected utility of
consumption is infinite.

The condition $b_3 \geq \frac{b_1}{1-R} + b_2 R$ can be rewritten in terms of the original parameters as
\begin{equation}
\zeta \eta (1-R) - \frac{1}{2} \eta^2 R(1-R) + r(1 - R) - \beta \geq 0.
\label{eq:pa}
\end{equation}
Consider the following consumption, investment and sale strategies
\begin{equation}
\tilde{C}_t = \lambda \eta \tilde{X}_t,
\hspace{6mm}
\tilde{\Pi}_t = \frac{\eta}{\sigma} \tilde{X}_t,
\hspace{6mm}
\frac{d\tilde{\Theta}_t}{\tilde{\Theta}_t} = - \zeta \eta \frac{\tilde{X}_t}{Y_t \tilde{\Theta}_t} dt.
\label{eq:illstra}
\end{equation}

Note first that $\tilde{\Theta}$ is a non-increasing process. The corresponding wealth process $\tilde{X}$ satisfies $d\tilde{X}_t = (\zeta \eta + r)\tilde{X}_t dt + \eta \tilde{X}_t dB_t ^1$,
which gives
\[
\tilde{X}_t = x_0 \exp \left\{\left(\zeta \eta + r - \frac{1}{2}\eta^2 \right)t + \eta B^1_t\right\}.
\]
Hence $\tilde{X}_t \geq 0$ and the strategies defined in (\ref{eq:illstra}) are admissible.

The expected utility from consumption $\tilde{G}$ corresponding to the consumption, sale and investment strategies $(\tilde{C}, \tilde{\Theta}, \tilde{\Pi})$ is given by
\begin{eqnarray*}
\tilde{G} & = &\mathbb{E}^{(\tilde{c}, \tilde{\theta}, \tilde{\pi})}
\left[ \int_0^ \infty e^{-\beta t}  \frac{{\tilde{C}_t}^{1-R}}{1-R} dt \right]  \\
& = & \frac{\left(\lambda \eta x_0 \right)^{1-R}}{1-R} \int_0^\infty \exp \left\{\left[
\zeta \eta (1-R) - \frac{1}{2} \eta^2 R(1-R) + r(1 - R) - \beta
\right]t\right\} dt \\
& = & \infty,
\end{eqnarray*}
where the last equality follows from (\ref{eq:pa}).

\section{Some preliminaries on $n$ and $q^*$.}

Recall the definitions of $\varphi$ and $\upsilon$:
\begin{eqnarray*}
\upsilon(q,n) & = & \varphi(q,n) - \sgn(1-R) \sqrt{\varphi(q,n)^2 + E(q)^2}, \\
\varphi(q,n) & = & b_1 (n-1) + (1-R)(b_3 - 2R)q + (2 - b_2) R(1-R);
\end{eqnarray*}
where $E(q)^2 = 4 R^2(1-R)^2 (b_2-1)(1-q)^2 \geq 0$.

Results not proved in the main body of this section are proved in Appendix~\ref{app:n}.

\begin{lem} For $q \in [0,1]$, $(1-R)q + R > 0$ and
\begin{eqnarray*}
\varphi(q,m(q)) & = & R(1-R) \left\{ (1-q)^2 - (b_2 - 1) \right\} \\
\varphi(q,\ell(q)) & = & R(1-q) \left\{ (1-R)q + R  - \frac{(b_2 - 1)R^2}{(1-R)q + R} \right\} \\
\upsilon(q,m(q)) & = & - 2R(1-R)(b_2 - 1) \\
\upsilon(q,\ell(q)) & = & \frac{-2 R^2 (1-R) (1-q) (b_2-1)}{(1-R)q + R}
\end{eqnarray*}
Suppose $b_2>1$. Then, for $R<1$, $\upsilon(q,n)<0$ and for fixed $q$, $\upsilon(q,n)$ is an increasing, concave function of $n$.
For $R>1$, $\upsilon(q,n)>0$ and for fixed $q$, $\upsilon(q,n)$ is an increasing, convex function of $n$.

For $b_2=1$, $\upsilon(q,n)=0$.
\label{lem:varphi}
\end{lem}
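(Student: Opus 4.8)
The lemma is proved by a chain of direct substitutions whose only recurring ingredient is that a certain quadratic is a perfect square. \emph{Step 1 (positivity).} Write $(1-R)q+R = q+R(1-q)$; on $[0,1]$ this is a sum of two nonnegative terms, not both zero (it equals $R>0$ at $q=0$, and the first term is positive for $q>0$), hence strictly positive. I shall abbreviate $w=(1-R)q+R$. \emph{Step 2 (the $\varphi$ identities).} Since $\varphi(q,\cdot)$ is affine with slope $b_1$, one only has to plug in $n=m(q)$ and $n=\ell(q)$. From the definition of $m$, $b_1(m(q)-1)=R(1-R)q^2-b_3(1-R)q$; substituting, the $b_3$-linear terms cancel and completing the square in $q$ gives $\varphi(q,m(q))=R(1-R)\big[(1-q)^2-(b_2-1)\big]$. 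For $\ell$ I would use $\varphi(q,\ell(q))=\varphi(q,m(q))+b_1\big(\ell(q)-m(q)\big)$ together with the explicit formula for $\ell-m$, and then, using $R(1-q)+q=w$ and $1-q/w=R(1-q)/w$, collect the ``square'' terms into $(1-R)(1-q)\,w$ and the terms carrying the factor $b_2-1$ over the common denominator $w$ into $-R^2(1-R)(b_2-1)(1-q)/w$, obtaining $\varphi(q,\ell(q))=(1-R)(1-q)\big[w-(b_2-1)R^2/w\big]$.

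\emph{Step 3 (the $\upsilon$ identities).} With $E(q)^2=4R^2(1-R)^2(b_2-1)(1-q)^2$, the point is that $\varphi^2+E^2$ is a perfect square: $[(1-q)^2-(b_2-1)]^2+4(b_2-1)(1-q)^2=[(1-q)^2+(b_2-1)]^2$, and $[w-(b_2-1)R^2/w]^2+4R^2(b_2-1)=[w+(b_2-1)R^2/w]^2$. Since $b_2\ge1$ and $q\le1$, both $(1-q)^2+(b_2-1)$ and $w+(b_2-1)R^2/w$ are nonnegative, so the square root introduces no extra sign, and $\sgn(1-R)|1-R|=1-R$ makes the factor $\sgn(1-R)$ disappear; subtracting then yields $\upsilon(q,m(q))=-2R(1-R)(b_2-1)$ and $\upsilon(q,\ell(q))=-2R^2(1-R)(b_2-1)(1-q)/w$.

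\emph{Step 4 (qualitative behaviour in $n$).} Write $\upsilon(q,n)=f(\varphi(q,n))$ with $f(\varphi)=\varphi-\sgn(1-R)\sqrt{\varphi^2+E^2}$ and $E^2:=E(q)^2$ treated as a constant; since $\varphi(q,\cdot)$ is affine and increasing ($\partial_n\varphi=b_1>0$), monotonicity and convexity in $n$ are inherited from those of $f$ in $\varphi$. For $R<1$, rationalising gives $f(\varphi)=-E^2/(\varphi+\sqrt{\varphi^2+E^2})$; when $b_2>1$ and $q<1$ one has $E^2>0$, so the denominator is positive and $f<0$, while $f'(\varphi)=1-\varphi/\sqrt{\varphi^2+E^2}>0$ and $f''(\varphi)=-E^2/(\varphi^2+E^2)^{3/2}<0$, i.e. $f$ is increasing and concave, a property preserved under composition with an affine map. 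The case $R>1$ is identical with $f(\varphi)=\varphi+\sqrt{\varphi^2+E^2}=E^2/(\sqrt{\varphi^2+E^2}-\varphi)>0$, which is increasing and convex; and for $b_2=1$ we have $E(q)\equiv0$, so $\upsilon(q,n)=\varphi(q,n)-\sgn(1-R)|\varphi(q,n)|$, which vanishes precisely when $\sgn\varphi(q,n)=\sgn(1-R)$, the range in which $\upsilon$ is used in the sequel (cf. the remark following Proposition~\ref{prop:main}). There is no genuine obstacle anywhere; the only points needing care are spotting the perfect-square identity, the bookkeeping of $\sgn(1-R)$ and of absolute values, and the boundary value $q=1$ (where $E(q)=0$) in the strict sign statement, which is why I would read ``$\upsilon(q,n)<0$'' as holding for $q\in[0,1)$.
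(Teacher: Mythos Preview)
Your proposal is correct and follows essentially the same route as the paper: the identities for $\varphi$ and $\upsilon$ at $m(q)$ and $\ell(q)$ are obtained by direct substitution (the paper simply asserts this, whereas you helpfully isolate the perfect-square identity that makes the $\upsilon$ computations transparent), and the monotonicity/concavity in $n$ is obtained by differentiating $\upsilon$ in $n$ via the chain rule through $\varphi$, exactly as the paper does when it computes $\partial\upsilon/\partial n$ and $\partial^2\upsilon/\partial n^2$.

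Your caveats about $b_2=1$ and about $q=1$ are well taken and, if anything, sharper than the lemma's own wording: the paper's unqualified ``$\upsilon(q,n)=0$ for $b_2=1$'' is indeed only valid on the range $\sgn\varphi(q,n)=\sgn(1-R)$, which is precisely the regime relevant to the analysis (as the remark following Proposition~\ref{prop:main} makes explicit), and the strict sign of $\upsilon$ for $b_2>1$ likewise relies on $E(q)^2>0$, i.e.\ $q<1$.
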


Recall that $n$ solves (\ref{eq:n}). This expression can be written in several ways.

\begin{lem}
\label{lem:n'}
$n'(q) = F(q,n(q))$ where
\begin{eqnarray}
\nonumber\lefteqn{F(q,n)} \\
 & = & n \left( \frac{1-R}{R(1-q)} - \frac{(1-R)^2}{b_1 R}\frac{q}{\ell(q) - n} +
\frac{(1 - R)q}{2b_1 R (1-q) [(1-R)q + R]} \frac{\upsilon(q, n)}{\ell(q) - n} \right) \label{eq:nA}  \\
       & = &  \frac{(1-R)n}{R(1-q)} - \frac{ 2(1-R)^2 q n/R }{2(1-R)(1-q)[(1-R)q + R] -
                     \varphi(q,n) - sgn(1-R)\sqrt{\varphi(q,n)^2 + E(q)^2}}  \label{eq:nB}  \\
       & = &  - \frac{(1-R) n \left[ 2 b_1 [ (1-R)q+ R ](n-m(q)) - q (\upsilon(q,n) - \upsilon(q,m(q))) \right]}{
                  2R(1-q) ( (1-R)q+ R ) \{ S(q) - b_1 (n - m(q)) \} }  \label{eq:nC}
\end{eqnarray}
where $S(q) = b_1 (\ell(q) - m(q)) = (1-R)q(1-q) + (b_2-1) R (1-R) q /( (1-R)q+ R )$.
\end{lem}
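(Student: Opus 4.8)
The three displayed expressions for $F$ are merely rearrangements of the right-hand side of (\ref{eq:n}), so the plan is a chain of algebraic reductions whose only non-routine inputs are the evaluations of $\varphi$ and $\upsilon$ at $m(q)$ and $\ell(q)$ collected in Lemma~\ref{lem:varphi}. Formula (\ref{eq:nA}) is obtained at once by multiplying both sides of (\ref{eq:n}) by $n(q)$; the content of the lemma is the equivalence of this form with (\ref{eq:nB}) and (\ref{eq:nC}), which I treat in turn. The only real obstacle throughout is bookkeeping --- in particular carrying the radical $\sqrt{\varphi(q,n)^2+E(q)^2}$ and the sign $\sgn(1-R)$ through the manipulation without presuming a sign for $1-R$ --- but since $\sgn(1-R)^2=1$ all of that sign-dependence is eliminated at the telescoping step below, so the computation is uniform in $R<1$ and $R>1$.

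To reach (\ref{eq:nB}) I would first put the last two summands inside the bracket of (\ref{eq:nA}) over the common denominator $\ell(q)-n$; writing $A(q):=2(1-R)(1-q)[(1-R)q+R]$ this gives
\[
F(q,n)=\frac{(1-R)n}{R(1-q)}+\frac{(1-R)qn}{2b_1R(1-q)[(1-R)q+R]}\cdot\frac{\upsilon(q,n)-A(q)}{\ell(q)-n}.
\]
Then, using that $\varphi$ is affine in its second argument with slope $b_1$ (so $b_1(\ell(q)-n)=\varphi(q,\ell(q))-\varphi(q,n)$) together with the definition $\upsilon(q,n)=\varphi(q,n)-\sgn(1-R)\sqrt{\varphi(q,n)^2+E(q)^2}$, cross-multiplying turns the equality of this display with (\ref{eq:nB}) into the single scalar identity
\[
\bigl(\upsilon(q,n)-A(q)\bigr)\bigl(A(q)-\varphi(q,n)-\sgn(1-R)\sqrt{\varphi(q,n)^2+E(q)^2}\bigr)=-2A(q)\,b_1\bigl(\ell(q)-n\bigr).
\]
Since $\sgn(1-R)^2=1$, expanding the left-hand side collapses the radical and leaves $2A(q)\varphi(q,n)-A(q)^2+E(q)^2$; substituting $b_1(\ell(q)-n)=\varphi(q,\ell(q))-\varphi(q,n)$ on the right and cancelling the common term $2A(q)\varphi(q,n)$, one is left with the parameter identity
\[
A(q)^2=2A(q)\,\varphi(q,\ell(q))+E(q)^2 ,
\]
which I would check by inserting the closed form of $\varphi(q,\ell(q))$ from Lemma~\ref{lem:varphi} and $E(q)^2=4R^2(1-R)^2(b_2-1)(1-q)^2$: the contributions carrying the factor $(b_2-1)R^2$ cancel and both sides reduce to $4(1-R)^2(1-q)^2[(1-R)q+R]^2$.

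For (\ref{eq:nC}) I would instead bring the whole bracket of (\ref{eq:nA}) over the common denominator $2b_1R(1-q)[(1-R)q+R]\,(\ell(q)-n)$. Because $S(q)=b_1(\ell(q)-m(q))$ gives $S(q)-b_1(n-m(q))=b_1(\ell(q)-n)$, this denominator already coincides with the one in (\ref{eq:nC}), so after matching numerators the only thing left to check is
\[
2b_1[(1-R)q+R]\bigl(\ell(q)-m(q)\bigr)-2(1-R)q(1-q)[(1-R)q+R]+q\,\upsilon(q,m(q))=0 .
\]
Substituting $b_1(\ell(q)-m(q))=S(q)=(1-R)q(1-q)+(b_2-1)R(1-R)q/[(1-R)q+R]$, the first two terms merge into $2R(1-R)(b_2-1)q$ and the identity collapses to $\upsilon(q,m(q))=-2R(1-R)(b_2-1)$, precisely the value supplied by Lemma~\ref{lem:varphi}. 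This completes the plan; everything above is elementary polynomial algebra organised around the four identities of Lemma~\ref{lem:varphi}.
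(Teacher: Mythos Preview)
Your argument is correct and follows essentially the same algebraic route as the paper: both halves reduce the claim to the identities of Lemma~\ref{lem:varphi} after putting things over the appropriate common denominator. For (\ref{eq:nA})$\Leftrightarrow$(\ref{eq:nB}) the paper computes $b_1(\ell(q)-n)+\varphi(q,n)$ from scratch and then factors a resulting difference of squares, whereas you shortcut this by using the affinity $b_1(\ell(q)-n)=\varphi(q,\ell(q))-\varphi(q,n)$ to land directly on the scalar identity $A(q)^2=2A(q)\,\varphi(q,\ell(q))+E(q)^2$ --- a slightly tidier but equivalent piece of bookkeeping; the (\ref{eq:nA})$\Leftrightarrow$(\ref{eq:nC}) step is identical to the paper's.
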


\begin{lem}
\label{lem:n}
\begin{enumerate}
\item For $R \in (0,1)$, $n'(0)$ is the smaller root of $\Phi(\chi) = 0$ , where
\[
\Phi(\chi) = b_1 R \chi^2 + R(1-R)(b_3 - b_2 - \frac{b_1}{R}) \chi - b_3 (1-R)^2,
\]
For $R \in (1, \infty)$, $n'(0)$ is the larger root.

\item For $q \in (0,q_n)$, $(1-R)n(q) \leq (1-R)\ell(q)$.

\item For $q \in (0, q_n)$, $n'(q) < 0$ if and only if $n(q) > m(q)$,
$n'(q)>0$ if and only if $n(q)<m(q)$, and $n'(q)=0$ if and only if $n(q)=m(q)$.

\item If $R>1$ then either $q_n = q^*=1$ or $q_n > q^*$. If $R<1$ and $\ell(1) \geq 0$ then $q_n = q_{\ell} = 1$.
If $R <1$ and $\ell(1) < 0$ then $q_n = q_{\ell} < q^*$.

\item If $0 \leq q^* < 1$ then $q^* > \frac{b_3}{2R}$ and $(1-R)m$ is
increasing on $(q^*,1)$.

\end{enumerate}
\end{lem}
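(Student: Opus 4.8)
\emph{Plan.} The five statements build on one another, and I would prove them in the order (1), (2), (3), (5), (4).

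For (1), observe that since $\ell(0)=n(0)=1$, each of the three equivalent forms of the ODE in Lemma~\ref{lem:n'} is of the type $0/0$ at $q=0$. Assuming the local regularity of $n$ at $q=0$ (established in Appendix~\ref{app:n}), write $\chi=n'(0)$ and $\delta=\ell'(0)-\chi$, so $\ell(q)-n(q)=\delta q+o(q)$. Passing to the limit $q\downarrow0$ in (\ref{eq:nA}), and using $\ell'(0)=(1-R)(b_2-b_3)/b_1$ together with $\upsilon(0,1)=\upsilon(0,m(0))=-2R(1-R)(b_2-1)$ from Lemma~\ref{lem:varphi}, one gets $R\chi=(1-R)-(1-R)^2b_2/(b_1\delta)$; clearing denominators and substituting $\delta=(1-R)(b_2-b_3)/b_1-\chi$ gives exactly $\Phi(\chi)=0$. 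To identify which root: the same relation, viewed as a quadratic in $\delta$, is $R\delta^2-(R\ell'(0)-(1-R))\delta-(1-R)^2b_2/b_1=0$, whose roots multiply to $-(1-R)^2b_2/(b_1R)<0$ (the Standing Assumption $b_1>0$ enters here), hence are of opposite sign. The imposed inequality $n'(0)/(1-R)<\ell'(0)/(1-R)$ is precisely $\delta/(1-R)>0$, which picks the positive $\delta$-root when $R<1$ and the negative one when $R>1$; since $n'(0)=\ell'(0)-\delta$, this is the smaller root of $\Phi$ for $R<1$ and the larger for $R>1$.

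For (2) and (3) I would use the form (\ref{eq:nC}). Set $\mathcal N(q,n)=2b_1[(1-R)q+R](n-m(q))-q(\upsilon(q,n)-\upsilon(q,m(q)))$; using $S(q)-b_1(n-m(q))=b_1(\ell(q)-n)$, (\ref{eq:nC}) becomes $n'(q)=-(1-R)n\,\mathcal N(q,n)/\bigl(2Rb_1(1-q)[(1-R)q+R](\ell(q)-n)\bigr)$. For (2): by Part (1), $(1-R)(\ell-n)>0$ on a right-neighbourhood of $0$; suppose $q_0\in(0,q_n)$ is the first point with $n(q_0)=\ell(q_0)$. As $n>0$ on $(0,q_n)$ we must have $\ell(q_0)>0$. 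Evaluating $\mathcal N$ along $n=\ell$, using $\ell-m=S/b_1$ and the formulae for $\upsilon(q,\ell(q))$ and $\upsilon(q,m(q))$ in Lemma~\ref{lem:varphi}, one finds $(1-R)\mathcal N(q_0,\ell(q_0))=2(1-R)^2q_0(1-q_0)\bigl[((1-R)q_0+R)^2+(b_2-1)R^2\bigr]/((1-R)q_0+R)>0$. Hence the bounded product $(\ell-n)n'$ has a strictly negative limit as $q\to q_0$, so $\frac{d}{dq}(\ell-n)^2=2(\ell-n)\ell'-2(\ell-n)n'$ has a strictly positive limit at $q_0$; thus $(\ell-n)^2$ is strictly increasing near $q_0$, contradicting $(\ell-n)^2\ge0$ with $(\ell-n)^2(q_0)=0$. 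Therefore $\sgn(\ell-n)=\sgn(1-R)$ throughout $(0,q_n)$. For (3): by Lemma~\ref{lem:varphi}, $\upsilon(q,\cdot)$ is increasing and $\partial_n\upsilon=b_1\bigl(1-\sgn(1-R)\varphi/\sqrt{\varphi^2+E^2}\bigr)\in[0,2b_1]$, so $\upsilon(q,n)-\upsilon(q,m(q))$ carries the sign of $n-m(q)$ with $|\upsilon(q,n)-\upsilon(q,m(q))|\le2b_1|n-m(q)|$; hence for $q<1$, $\mathcal N(q,n)/(n-m(q))\ge2b_1[(1-R)q+R]-2b_1q=2b_1R(1-q)>0$, so $\mathcal N$ has the sign of $n-m(q)$ and vanishes when $n=m(q)$. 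On $(0,q_n)$ the factors $n$ and $2Rb_1(1-q)[(1-R)q+R]$ are positive and, by (2), $\ell-n$ has the sign of $1-R$; so $n'(q)$ has the sign of $-(n(q)-m(q))$, which is (3).

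For (5), fix $q^*\in(0,1)$; by continuity $n(q^*)=m(q^*)$, and since $\ell(q^*)-m(q^*)=S(q^*)/b_1\neq0$ for $q^*\in(0,1)$, the vanishing of $\mathcal N$ at $n=m(q^*)$ forces $n'(q^*)=0$. With $d=n-m$ we have $d(0)=d(q^*)=0$ and, by the definition of $q^*$ and (3), $d$ keeps the sign of $1-R$ on $(0,q^*)$; comparing $d'(q^*)=-m'(q^*)$ with this sign gives $(1-R)m'(q^*)\ge0$, i.e.\ $q^*\ge b_3/(2R)$, strict inequality following from a second-order Taylor comparison of $n$ and $m$ at the common stationary point $q^*$. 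Since $(1-R)m'(q)=\frac{(1-R)^2}{b_1}(2Rq-b_3)>0$ for $q>b_3/(2R)$, $(1-R)m$ is increasing on $(q^*,1)$. For (4): if $R>1$, then on $(0,q^*)$, $n<m$, so by (3) $n'>0$ and $n\ge1>0$ on $[0,q^*]$, giving $q_n>q^*$ when $q^*<1$ and $q_n=1=q^*$ when $q^*=1$. If $R<1$: (2) gives $0\le n\le\ell$ on $(0,q_n)$, so $q_\ell<q_n$ would force $n(q_\ell)=0$; conversely while $\ell>0$ the factor $\ell-n$ in $n'$ stays bounded away from $0$, so $n'=O(n)$ and $n$ cannot reach $0$ before $\ell$ does — hence $q_n=q_\ell$. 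Inspection of the explicit function $\ell$ shows (for $R<1$) that $\ell>0$ on $(0,1)$ exactly when $\ell(1)\ge0$, giving $q_n=q_\ell=1$ in that case. If $\ell(1)<0$ then $q_\ell<1$ and, since $b_3>\frac{b_1}{1-R}+b_2R\ge\frac{b_1}{1-R}+R$ is equivalent to $m(1)<0$, the upward parabola $m$ is negative at $q_\ell$ (as $m(q_\ell)=-S(q_\ell)/b_1<0$) and at $1$, hence on all of $[q_\ell,1]$; combined with $n\equiv0$ on $(q_\ell,1)$, with $n>m$ near $0$ (a consequence of Part (1) and $b_3>0$), and with the nullcline property $\{n'=0\}=\{n=m\}$ from (3) — which, together with the behaviour of $n-m$ near a putative first crossing $q_1\in(0,q_\ell)$, forces $q_1\ge b_3/(2R)$, a range on which $m<0\le n$ — this yields $n>m$ throughout $(0,1)$ and hence $q^*=1>q_\ell=q_n$.

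The main obstacle is Part (2): it requires understanding the behaviour of $n'$ along the singular curve $\{n=\ell\}$, and the $(\ell-n)^2$-monotonicity argument only closes after the sign identity $(1-R)\mathcal N(q,\ell)>0$, which itself rests on the explicit evaluation of $\upsilon(q,\ell)$ in Lemma~\ref{lem:varphi}. Secondary difficulties are the strict inequality $q^*>b_3/(2R)$ in (5), which genuinely needs a second-order expansion at the stationary point rather than a Rolle-type argument, and the case bookkeeping in (4) relating $q_n$, $q_\ell$ and $q^*$.
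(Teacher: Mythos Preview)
Your proof is essentially correct and follows the same approach as the paper. A few comparative remarks: for Part~(2), the paper argues more directly that $\lim_{n\uparrow\ell(q)}F(q,n)=-\infty$ (from (\ref{eq:nC}), since the denominator $S(q)-b_1(n-m)=b_1(\ell-n)\to 0$ while the numerator stays bounded away from zero), which is equivalent to your $(\ell-n)^2$-monotonicity argument but slightly more economical. For Part~(3) the two proofs are identical in substance. For Part~(4), the paper is extremely terse on the case $\ell(1)<0$ (it says only ``the result follows''), and you fill this in; note however that your clause ``a range on which $m<0\le n$'' implicitly uses that the vertex value $m(b_3/(2R))$ is negative, which follows from $b_3>b_2R+\frac{b_1}{1-R}\ge R+\frac{b_1}{1-R}\ge 2\sqrt{Rb_1/(1-R)}$ via AM--GM --- you may want to make that explicit. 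For Part~(5) you are right to flag the strict inequality $q^*>b_3/(2R)$ as needing a second-order comparison; the paper simply asserts $n'(q^*)=0<m'(q^*)$ without further comment.
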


\begin{lem}
\label{lem:monbi}
$(1-R) n$ is monotone increasing in $b_1$, monotone increasing in $b_2$ and monotone decreasing in $b_3$ for $q \in [0,q^*]$. Moreover $q^*$ is decreasing as a function of $b_1$, decreasing as a function of $b_2$ and
increasing as a function of $b_3$. Finally $(1-R)n(q^*)^{-R}$ is decreasing in $b_1$, decreasing in $b_2$ and increasing in $b_3$.
\end{lem}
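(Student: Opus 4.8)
The plan is an ODE comparison argument, carried out one parameter at a time (with $R$ fixed), exploiting that in the representation $n'=F(q,n)$ of Lemma~\ref{lem:n'}, and in the curves $\ell,m$, each of $b_1,b_2,b_3$ appears through an explicit smooth expression. We may assume $b_3>0$, since for $b_3\le0$ we have $q^*=0$ and all three assertions are vacuous on $[0,q^*]=\{0\}$. Fix two parameter vectors differing only in one coordinate $b_i$; let $n,\tilde n$ be the corresponding solutions of (\ref{eq:n}), so $n(0)=\tilde n(0)=1$, and compare on $[0,q^*\wedge\tilde q^*]$ (the monotonicity of $q^*$, proved below, then lets the comparison be read off for either parameter value). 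Monotonicity of $(1-R)n$ rests on two ingredients. The initial-slope comparison: by Lemma~\ref{lem:n}(1) $(1-R)n'(0)$ is $(1-R)$ times the root of the quadratic $\Phi$ designated there, and differentiating the root formula (equivalently, using monotone dependence of that root on the coefficients of $\Phi$) shows $(1-R)n'(0)$ is strictly increasing in $b_1$, strictly increasing in $b_2$, strictly decreasing in $b_3$; so $(1-R)n$ and $(1-R)\tilde n$ leave their common value $1$ at $q=0$ strictly ordered in the asserted direction. The vector-field comparison: with $(q,n)$ fixed, $(1-R)F(q,n)$ is increasing in $b_1$, increasing in $b_2$, decreasing in $b_3$ on $\{0<q<q^*\}\cap\{(1-R)n\le(1-R)\ell(q)\}$, a region on which (Lemma~\ref{lem:n}(2)) the denominators of $F$ keep a fixed sign, and on which, in the regime $0<q^*<1$, Lemma~\ref{lem:n}(4) --- together with $\ell(1)>0$, from the remarks after Proposition~\ref{prop:main}, when $R<1$ --- gives $q_n>q^*$, so that the solution avoids the singularity $n=\ell$. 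Granting these, a standard first-crossing argument finishes: if $(1-R)n$ and $(1-R)\tilde n$ were to coincide at a least $q_0\in(0,q^*\wedge\tilde q^*]$, comparison of the two derivatives at $q_0$ would contradict the strict ordering of the vector field.

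For $q^*$, recall (Lemma~\ref{lem:n}(3)) that on $[0,q_n)$ the turning points of $n$ are exactly the points where $n=m$. The function $D:=(1-R)(n-m)$ satisfies $D(0)=0$, $D'(0)=(1-R)(n'(0)-m'(0))>0$ --- one computes $\Phi(m'(0))=R(1-R)^2 b_2 b_3/b_1>0$, so $m'(0)$ lies outside the root interval of $\Phi$, on the far side of the designated root $n'(0)$ --- and, since $q^*>b_3/(2R)$ by Lemma~\ref{lem:n}(5), $D'(q^*)=-(1-R)m'(q^*)=-\frac{(1-R)^2}{b_1}(2Rq^*-b_3)<0$; hence $q^*=\inf\{q>0:D(q)=0\}$ with $D>0$ on $(0,q^*)$. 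It therefore suffices to show $D$ is pointwise monotone in $b_i$ in the direction pushing this first zero the right way --- $D$ decreasing in $b_1$ and in $b_2$, increasing in $b_3$ --- so that $q^*$ inherits the opposite monotonicity. This is the machinery above applied to the first-order ODE for $D$ (obtained from $F$ by subtracting $(1-R)m'$): redo the initial-slope comparison for $D'(0)$ from $\Phi$ and $m'(0)=-\frac{(1-R)b_3}{b_1}$, and redo the vector-field comparison --- equivalently, combine the quantitative $(1-R)n$-comparison with $\partial_{b_3}[(1-R)m(q)]=-\frac{(1-R)^2q}{b_1}$, $\partial_{b_1}[(1-R)m(q)]=\frac{(1-R)^2q}{b_1^2}(b_3-Rq)$ and $\partial_{b_2}[(1-R)m(q)]=0$. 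The essential point is that monotonicity of $q^*$ forces the comparison to be run for $D$, not merely for $(1-R)n$.

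The last assertion follows cheaply: since $n$ turns at $q^*$, $n'(q^*)=0$, so $\frac{d}{db_i}\big[(1-R)n(q^*(b_i);b_i)\big]=(1-R)\partial_{b_i}n(q;b_i)\big|_{q=q^*}$, whose sign is given by the first assertion; more globally, the pointwise monotonicity of $(1-R)n$, the monotonicity of $q^*$, and the monotonicity of $(1-R)n$ on $(0,q^*)$ (Lemma~\ref{lem:n}(3)) all reinforce one another, so $(1-R)n(q^*;\cdot)$ is increasing in $b_1$ and $b_2$ and decreasing in $b_3$; raising to the power $-R$ and tracking the sign of $1-R$ gives that $(1-R)n(q^*)^{-R}$ is decreasing in $b_1$, decreasing in $b_2$ and increasing in $b_3$ --- for both $R<1$ and $R>1$. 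The degenerate cases ($b_3\le0$, and $q^*=1$, the latter handled on $[0,1-\varepsilon]$ as $\varepsilon\downarrow0$) are trivial or routine.

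The hard part will be the vector-field sign computations: $b_1,b_2,b_3$ enter $F$ simultaneously through $\ell$, $m$, $\varphi$, $\upsilon$ and the rational prefactors, and the term $\sgn(1-R)\sqrt{\varphi^2+E^2}$ must be controlled throughout. I expect representation (\ref{eq:nC}) of Lemma~\ref{lem:n'} --- in which the dependence on $m(q)$, on $S(q)$ and on $\upsilon(q,\cdot)-\upsilon(q,m(q))$ is the most transparent --- to be the right form to differentiate, using Lemma~\ref{lem:varphi} for the monotonicity and convexity/concavity of $\upsilon(q,\cdot)$ and the sign of $\varphi(q,m(q))$. The secondary difficulty, noted above, is that handling $q^*$ requires running the comparison for $D=(1-R)(n-m)$ rather than for $(1-R)n$ alone.
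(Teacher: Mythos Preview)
Your plan matches the paper's proof essentially step for step: initial-slope comparison via the quadratic $\Phi$, vector-field comparison for $F$, first-crossing argument for $(1-R)n$, the passage to $\zeta=n-m$ (your $D$) with its own ODE for the monotonicity of $q^*$, and the $n'(q^*)=0$ trick for $(1-R)n(q^*)^{-R}$. Two minor points of divergence worth flagging. First, for the vector-field computation in $b_3$ the paper uses representation (\ref{eq:nB}) rather than (\ref{eq:nC}); since $\partial\varphi/\partial b_3=(1-R)q$, the denominator $D(q,n;b_3)=2(1-R)(1-q)[(1-R)q+R]-\varphi-\sgn(1-R)\sqrt{\varphi^2+E^2}$ differentiates cleanly and the sign falls out in two lines, whereas (\ref{eq:nC}) carries $m$ and $\upsilon(q,m)$ in both numerator and denominator and is messier for $b_3$. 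Second, for $b_1$ the paper does not attack $F$ directly but first passes to $\hat m(q)=b_1(m(q)-1)$, $\hat n$, $\hat\ell$, which strips the $b_1$-dependence out of $m$ and reduces the argument to the $b_3$ template; you may find this cleaner than differentiating the raw expressions.
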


We have not yet made the connection between $n$ and the value function, but accepting for the moment the relationships in (\ref{eq:884}), (\ref{eq:883}) and (\ref{eq:v}) this lemma has the following important corollary.

\begin{cor}
\label{cor:pmonb}
The certainty equivalent value $p=p(x,y,\theta)$ is decreasing in $b_1$ and $b_2$ and increasing in $b_3$.
\end{cor}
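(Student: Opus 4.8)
The plan is to reduce Corollary~\ref{cor:pmonb} to the monotonicity statements already proved in Lemma~\ref{lem:monbi} by tracking how the dependence on $(b_1,b_2,b_3)$ propagates through the chain of constructions $n \mapsto N \mapsto W \mapsto h \mapsto g \mapsto p$. First I would record that, by the formulas (\ref{eq:Cdef})--(\ref{eq:CEdef}), the certainty equivalent value is $p(x,y,\theta) = x\big[ g(z)/g(0)\big]^{1/(1-R)} - x$ with $z = y\theta/x$ fixed; since the map $w \mapsto x(w^{1/(1-R)} - 1)$ is increasing in $w$ when $R<1$ and decreasing when $R>1$, it suffices to show that the ratio $g(z)/g(0)$ is monotone in each $b_i$ in the appropriate direction, namely that $\mathrm{sgn}(1-R)\,[g(z)/g(0)]$ is decreasing in $b_1$ and $b_2$ and increasing in $b_3$. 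Here I would note that the factors $(b_1/(b_4 R))^{-R}$ appearing in (\ref{eq:884}) and in $\gamma$ cancel in the ratio $g(z)/g(0)$, so $b_1$ enters $g(z)/g(0)$ only through $n$, $q^*$ and the integral terms.

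Next I would split into the two regimes of Theorems~\ref{thm:maincase} and \ref{thm:ndeg2}. In the case $0<b_3<b_{3,crit}$, on $z \in (z^*,\infty)$ we have $g(z)/g(0) = (n(q^*)^{-R})(1+z)^{1-R} / g(0)$ and $g(0) = (b_1/(b_4 R))^{-R} h(-\infty) = (b_1/(b_4 R))^{-R}$ after the normalisation $h(-\infty)=1$ is divided out — more carefully, one tracks $g(z)/g(0) = n(q^*)^{-R}(1+z)^{1-R}$ on that range and, on $z\in[0,z^*]$, $g(z)/g(0) = h(\ln z)/h(-\infty)$ where $h$ solves (\ref{eq:885}) with data depending on $q^*$ and on $W = N^{-1}$, $N(q) = n(q)^{-R}(1-q)^{R-1}$. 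Since Lemma~\ref{lem:monbi} gives monotonicity of $(1-R)n$ on $[0,q^*]$, of $q^*$, and of $(1-R)n(q^*)^{-R}$, the plan is: (i) deduce monotonicity of $(1-R)N$ on $[0,q^*]$ and hence of $W$ in the relevant direction; (ii) conclude monotonicity of the right-hand side of (\ref{eq:885}) as a function of $h$ and of the endpoint data $(u^*,h^*) = (e^{z^*}, N(q^*))$, which by a comparison/ODE-monotonicity argument transfers to monotonicity of $h(u)$ for each fixed $u$; (iii) combine with the monotonicity of $n(q^*)^{-R}$ on the outer range $(z^*,\infty)$. A parallel argument handles $R>1$, where all inequalities flip and the integral representation of $h$ changes sign accordingly, which is exactly why the $\mathrm{sgn}(1-R)$ bookkeeping is convenient. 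The degenerate case of Theorem~\ref{thm:ndeg2} ($q^*=1$) is handled the same way, using the explicit formula (\ref{eqn:gammadefR<1}) (resp. (\ref{eqn:gammadefR>1})) for $\gamma$: the terms $\tfrac{1}{1-R}\ln v$, $\tfrac{R}{1-R}\ln n(1)$, and $-\tfrac{1}{1-R}\int \tfrac{1-W}{sW}ds$ are each monotone in $(b_1,b_2,b_3)$ once one knows $(1-R)n(1)$ and $(1-R)N$ (hence $W$) are, and $h = \gamma^{-1}$ inherits the monotonicity.

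The main obstacle I anticipate is step (ii): showing that monotone perturbation of the \emph{integrand} $1/((1-R)fW(f))$ in (\ref{eq:885}) together with monotone perturbation of the \emph{endpoint} $(u^*,h^*)$ yields monotone perturbation of the solution $h(u)$ at each fixed $u$. This is a standard comparison principle for the first-order ODE $du/dh = -1/((1-R)hW(h))$, but one must be careful that the perturbations of $W$ and of the endpoint push in the same direction (they do, because larger $b_3$ both increases $q^*$ hence $z^*$ hence $u^*$, and increases $(1-R)n$ hence affects $N$, $W$ consistently — this consistency is precisely the content of Lemma~\ref{lem:monbi}), and that $W$ remains well-defined (positive, with $fW(f)>0$ in the sign sense needed) throughout the perturbation, which follows from the monotonicity of $N$ proved in Theorems~\ref{thm:maincase} and \ref{thm:ndeg2}. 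A secondary technical point is continuity/differentiability of all these objects in the parameters, so that ``monotone'' statements are meaningful; this is routine from the smooth dependence of solutions of (\ref{eq:n}) on parameters away from the singular set, but should be stated. Once these are in place, the corollary follows by assembling the pieces, taking the ratio $g(z)/g(0)$, applying $w\mapsto x(w^{1/(1-R)}-1)$, and noting the claimed directions of monotonicity in $b_1$, $b_2$, $b_3$.
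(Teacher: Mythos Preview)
Your proposal is correct and follows essentially the same route as the paper: reduce the monotonicity of $p$ to that of $h$ (equivalently $g(z)/g(0)$), and deduce the latter from the monotonicity of $n$, $N$, $W$ already supplied by Lemma~\ref{lem:monbi}.

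The one place where the paper's execution differs from your sketch is your step~(ii), the comparison argument for $h$. Rather than treating the integrand of (\ref{eq:885}) and the moving endpoint $(u^*,h^*)$ as two separate perturbations that must be checked to push in the same direction, the paper first extends $n$ to all of $[0,1]$ (taking it constant equal to $n(q^*)$ on $(q^*,1]$) and correspondingly extends $h$ to all of $(-\infty,\infty)$ via the explicit formula from the region $z>z^*$. This removes the moving-endpoint issue entirely: one now has the asymptotic ordering $\lim_{u\to\infty} e^{-(1-R)u}h(u)=n(q^*)^{-R}$ directly from Lemma~\ref{lem:monbi}, and then a single last-crossing argument for the ODE $h'=w(h)$ (if $h(\cdot;\underline{b}_1)$ and $h(\cdot;\overline{b}_1)$ meet at some largest $\tilde u$, compare slopes there using only the monotonicity of $w$ in the parameter) gives the ordering of $h$ for all $u$. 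Your approach would also go through, but the domain-extension trick makes the bookkeeping lighter and handles the two regimes of Theorems~\ref{thm:maincase} and~\ref{thm:ndeg2} uniformly.
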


\begin{lem}
\label{lem:b3critbound}
\begin{enumerate}
\item Suppose $b_2 = 1$.  Then $b_{3,crit}(b_1, 1, R) = \bar{b}_3$ where recall $\bar{b}_3 = \min \{ 2R, R + \frac{b_1}{1-R} \}$ if $R<1$ and $\bar{b}_3 = 2R$ if $R>1$.
\item Let $b_2 \to \infty$. Then $\lim_{b_2 \to \infty} b_{3,crit} (b_1, b_2, R) = R$.
\end{enumerate}
Hence $R < b_{3,crit} \leq \bar{b}_3$.
\end{lem}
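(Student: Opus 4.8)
The plan is to combine the equivalence $q^* < 1 \iff b_3 < b_{3,crit}$ (Proposition~\ref{prop:main}) with the behaviour of $n$ near $q=1$. Two special regimes make the first order equation tractable: when $b_2=1$ one has $\upsilon\equiv 0$ on the relevant side of $m$, so that (using $\ell-m=\tfrac{1-R}{b_1}q(1-q)$) equation (\ref{eq:nC}) of Lemma~\ref{lem:n'} collapses to $n'(q)=-\tfrac{(1-R)\,n(q)(n(q)-m(q))}{R(1-q)(\ell(q)-n(q))}$; and when $b_2\to\infty$ one has $\ell-n\to\pm\infty$ and the right hand side of (\ref{eq:n}) tends to $0$. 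For the forward inequality in (1), $b_{3,crit}(b_1,1,R)\le\bar{b}_3$, it suffices to show $q^*=1$ whenever $b_3\ge\bar{b}_3$: if $\bar{b}_3=2R$ then $b_3\ge 2R$ puts the vertex $b_3/(2R)$ of $m$ outside $(0,1)$, so $m$ is monotone on $[0,1]$ and Lemma~\ref{lem:n}(5) gives $q^*=1$; if $\bar{b}_3=R+\tfrac{b_1}{1-R}<2R$ (which forces $R<1$) then, since $b_2=1$, $b_3\ge\bar{b}_3$ reads $b_3\ge\tfrac{b_1}{1-R}+b_2R$, and $q^*=1$ is one of the cases of Proposition~\ref{prop:main}(4).

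For the reverse inequality in (1) suppose, for contradiction, that $q^*=1$ while $0<b_3<\bar{b}_3$; then $b_3<2R$ gives $\sgn(1-R)m'(1)=|1-R|(2R-b_3)/b_1>0$ and $b_3<R+\tfrac{b_1}{1-R}$ gives $m(1)>0$. Write $d=\sgn(1-R)(n-m)$ and $w=\sgn(1-R)(\ell-n)$. By Lemma~\ref{lem:n} (parts (2), (3), and for $R<1$ part (4)) we have $d>0$ and $w>0$ on $(0,1)$, $q_n=1$, $n$ strictly monotone on $(0,1)$, and $n(1^-)$ exists; since $b_2=1$ gives $\ell(1)=m(1)$, letting $q\uparrow 1$ in $w\ge 0$ and in $d>0$ forces $d(1^-)=0$, i.e.\ $n(1^-)=m(1)$. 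Now $d+w=\tfrac{|1-R|}{b_1}q(1-q)$ gives $0<w(q)\le\tfrac{|1-R|}{b_1}(1-q)$, while Lemma~\ref{lem:n}(3) gives $\sgn(1-R)n'\le 0$ on $(0,1)$, so for $q$ near $1$, $d'(q)=\sgn(1-R)n'(q)-\sgn(1-R)m'(q)\le-\delta$ for some $\delta>0$; as $d(1^-)=0$ this yields $d(q)\ge\delta(1-q)$. Substituting into the collapsed ODE and using that $n$ is bounded below by a positive constant near $q=1$ (it decreases/increases monotonically to $m(1)>0$),
\[
|n'(q)|=\frac{|1-R|\,n(q)\,d(q)}{R(1-q)\,w(q)}\ \ge\ \frac{|1-R|\,c_0\,\delta(1-q)}{R(1-q)\cdot\frac{|1-R|}{b_1}(1-q)}=\frac{b_1 c_0\delta}{R(1-q)},
\]
so $\int^1|n'|=\infty$, contradicting $n(1^-)=n(q_0)+\int_{q_0}^1 n'$. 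Hence $q^*<1$, giving $b_{3,crit}(b_1,1,R)\ge\bar{b}_3$ and completing (1).

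For (2), I first show $q^*<1$ whenever $0<b_3\le R$, for every $b_2$, which gives $b_{3,crit}>R$ always (and $\liminf_{b_2\to\infty}b_{3,crit}\ge R$). Since $m(b_3/R)=1=n(0)$ and, by Lemma~\ref{lem:n}(3), $\sgn(1-R)n$ is strictly decreasing on $(0,q^*)$: for $b_3<R$, if $q^*>b_3/R$ then strict monotonicity forces $\sgn(1-R)(n-m)(b_3/R)<0$, contradicting $b_3/R<q^*$, so $q^*\le b_3/R<1$; for $b_3=R$, the same comparison as $q\uparrow 1$ (where $m(1)=1$) together with $\sgn(1-R)n(1^-)\ge\sgn(1-R)m(1)$ (valid when $q^*=1$) contradicts strict monotonicity, so $q^*<1$. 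For the upper bound fix $b_3>R$ (only $b_3<2R$ is nontrivial). A continuous–dependence argument shows $n(\cdot;b_2)\to 1$ uniformly on $[0,1-\delta]$ as $b_2\to\infty$: on $[q_1,1-\delta]$ one has $|\ell-n|\to\infty$, hence $F(q,n)\to 0$ uniformly, and near $q=0$ the solution stays close to $n(0)=1$ (here $n'(0)\to 0$ and, along the solution, $|\ell-n|\gtrsim b_2q$, so the singular terms of $F$ stay controlled). Since $\sgn(1-R)(m(q)-1)<0$ on $(0,1)$ for $b_3>R$, and $\sgn(1-R)(n-m)>0$ on $(0,b_3/(2R))$ always (Lemma~\ref{lem:n}(5) and $n(0)=m(0)$), one deduces $\sgn(1-R)(n-m)>0$ on all of $(0,1)$ for $b_2$ large, i.e.\ $q^*=1$, i.e.\ $b_{3,crit}(b_1,b_2,R)\le b_3$; letting $b_3\downarrow R$ gives $\limsup_{b_2\to\infty}b_{3,crit}\le R$, hence $\lim_{b_2\to\infty}b_{3,crit}=R$.

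Finally, since $b_{3,crit}(b_1,1,R)=\bar{b}_3$ and $\lim_{b_2\to\infty}b_{3,crit}=R\le\bar{b}_3$, the monotonicity of $b_{3,crit}$ in $b_2$ (Lemma~\ref{lem:monbi}) must be non-increasing, and together with $b_{3,crit}>R$ this yields $R<b_{3,crit}\le\bar{b}_3$ for all $b_2\ge 1$. The main obstacle is the reverse inequality in (1): one must identify $n(1^-)$ exactly — which is where the identity $\ell(1)=m(1)$ for $b_2=1$ is crucial — and then extract the non-integrable singularity of $n'$ at $q=1$; the collapse of (\ref{eq:nC}) for $b_2=1$ is precisely what makes both steps transparent. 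The limit in (2) is routine continuous dependence, the only delicate point being the removable singularity of (\ref{eq:n}) at $q=0$.
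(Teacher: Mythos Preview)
Your argument is correct, and for part (1) it takes a genuinely different route from the paper. For the reverse inequality $b_{3,crit}(b_1,1,R)\ge\bar b_3$, the paper first bounds $n$ above (for $R<1$) by the line $k(q)=1+q(R-b_3)(1-R)/b_1$ via Proposition~\ref{prop:NWw}(3), and then shows that $(1-q)F(q,m(1))\to\kappa<0$ as $q\uparrow 1$, forcing $n$ to cross the horizontal level $m(1)$ (and hence $m$) before $q=1$. Your approach is more self-contained: assuming $q^*=1$, you use only that for $b_2=1$ the ODE collapses and $\ell(1)=m(1)$, pin down $n(1^-)=m(1)$ from $d+w\to 0$, and then extract the non-integrable singularity $|n'(q)|\gtrsim (1-q)^{-1}$ directly from the collapsed equation. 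This avoids invoking the comparison line $k$ altogether; the paper's route, on the other hand, reuses machinery already developed for Proposition~\ref{prop:NWw}. For part (2) and the final monotonicity conclusion your approach coincides with the paper's, though you supply the extra detail that $q^*<1$ for every finite $b_2$ when $b_3\le R$, which sharpens the strict inequality $b_{3,crit}>R$.

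One small expository point: your sentence ``$b_3<R+\tfrac{b_1}{1-R}$ gives $m(1)>0$'' is only available when $R<1$ (for $R>1$ one has $\bar b_3=2R$ and no such bound on $b_3$). This does not damage the argument, since for $R>1$ the monotonicity of $n$ already gives $n(q)\ge 1$ and hence the required positive lower bound $c_0$; indeed if $m(1)\le 1$ the contradiction arises already at the step $n(1^-)=m(1)$ without needing the blow-up estimate. You may wish to flag this case split explicitly.
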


\begin{proof}[Proof of Proposition \ref{prop:main}.]
Suppose $R<1$. (The proof for $R>1$ is similar, except that the final paragraph is not necessary.)

Recall the definition of $\Phi$ in Lemma~\ref{lem:n}.
Note that $\Phi(m'(0)) = (1-R)^2Rb_2b_3$. Then, if
$b_3<0$ we have $n'(0)<m'(0)$ and $q^*=0$.
If $b_3 = 0$ then $n'(0)=m'(0)$ and
more care is needed.
Since $b_3 \leq 0$, $m$ is increasing. Suppose
$n\left(\hat{x}\right)>m\left(\hat{x}\right)$ for some $\hat{x}$
in $\left[0,1\right].$ Let $\underline{x}=\sup\left\{
x<\hat{x}:n\left(x\right)=m\left(x\right)\right\} $. Then on
$\left(\underline{x},\hat{x}\right)$ we have
$n'\left(x\right)<0<m'\left(x\right)$ and
$m\left(\hat{x}\right)-n\left(\hat{x}\right)
=m\left(\underline{x}\right)-n\left(\underline{x}\right)
+\int_{\underline{x}}^{\hat{x}}
[m'\left(y\right)-n'\left(y\right)]dy>0$,
a contradiction. Hence $q^*=0$.

Now suppose $b_3>0$. Then $n'(0)> m'(0)$ and at least initially $n>m$, and $q^*>0$.
By Lemma~\ref{lem:monbi}, there exists a critical value of $b_3$, namely $b_{3,crit}$, such that $q^*(b_3)<1$ if $b_3 < b_{3,crit}$, and $q^*=1$ for $b_3 \geq b_{3,crit}$.

Note that if $b_3 \geq \frac{b_1}{1-R} + b_2 R$ then $\ell(1) \leq 0$. The results in (4) now follow from Lemma~\ref{lem:n}(4). Conversely, if $b_3 \geq 2R$ then $m$ is decreasing and $q^* \geq q_m$.

\end{proof}

Recall the definition $N(q) = n(q)^{-R}(1-q)^{R-1}$ and that $W$ is inverse to $N$. We have $h^* = N(q^*)$. Define $w(s)=(1-R)sW(s)$.
We close this section with some useful results about $W$ and $w$.

\begin{prop}
\label{prop:NWw}
(1) For $R<1$, $N$ is increasing on $[0,q^*]$. Moreover $W$ is increasing
and such that $0 < W(v) < q^*$ on $(1,h^*)$.
For $R>1$, $N$ is decreasing on $[0,q^*]$.
and $W$ is decreasing such that $0 < W(v) < q^*$ on $(h^*, 1)$.

(2) Let $w(s) = (1-R)sW(s)$.
Then $w$ satisfies $w(h^*) =(1-R)h^*W(h^*) = (1-R) h^* q^*$ and
solves
\begin{eqnarray}
 (w(s)w'(s))^2 + \left\{b_1 \left[s - \frac{1}{1-R}w(s)\right]^{\frac{R-1}{R}} - [b_1 + b_2R(1-R)]s + (b_3 + 2R - 2)w(s)\right\} w(s)w'(s) && \nonumber \\
  + [(2R-1) (b_3 - 1) + R^2(1 - b_2)]w(s)^2 + [(1 - 2R) b_1 + b_2R(1-R) - R(1-R)b_3]sw(s) && \nonumber \\
   + b_1R(1-R)h^2 + b_1[(2R-1) w(s) - R(1-R)s]\left[s - \frac{1}{1-R}w(s)\right]^{\frac{R-1}{R}} & = & 0 . \nonumber
%\label{eq:w}
\end{eqnarray}

(3) For $R<1$ and $1 < s < h^*$, and for $R>1$ and $h^* < s < 1$ we have $ 1-R <
w'(s)
< 1 - R w(s)/((1-R)s)$ with $w'(h^*)
= 1 - R w(h^*)/((1-R)h^*)$.

\end{prop}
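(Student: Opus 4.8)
Everything is deduced from the ODE (\ref{eq:n}) for $n$ together with the algebraic identities of Lemmas~\ref{lem:varphi}, \ref{lem:n'} and \ref{lem:n}. I work in the non-degenerate regime $b_3>0$, $0<q^*\le 1$ (the case $b_3\le 0$, $q^*=0$ is trivial). For \textbf{Part (1)} I would logarithmically differentiate $N(q)=n(q)^{-R}(1-q)^{R-1}$, getting $N'/N = -R\,n'/n + (1-R)/(1-q)$, then substitute (\ref{eq:n}); the $(1-q)^{-1}$ terms cancel and this collapses to
\[
\frac{N'(q)}{N(q)} \;=\; \frac{(1-R)q}{b_1\,(\ell(q)-n(q))}\left[(1-R) - \frac{\upsilon(q,n(q))}{2(1-q)\,[(1-R)q+R]}\right].
\]
By Lemma~\ref{lem:n}(2), $(1-R)(\ell-n)\ge 0$, strictly on $(0,q^*]$ since $q^*<q_n$; by Lemma~\ref{lem:varphi} the sign of $\upsilon(q,n(q))$ is that of $-(1-R)$; and $(1-R)q+R>0$. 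Reading off signs gives $N'>0$ on $(0,q^*]$ when $R<1$ and $N'<0$ when $R>1$. At $q=0$, where the display is $0/0$, I would instead use $N'(0)/N(0)=(1-R)-R\,n'(0)$, which has the right sign because $n'(0)>-1$ when $R<1$ (from $\Phi(-1)=b_1+(1-R)(Rb_2-b_3)>0$, a consequence of the non-degeneracy bound $b_3<b_1/(1-R)+b_2R$, together with the fact that $\Phi$ is an upward parabola whose roots straddle $0$ and $n'(0)$ is its smaller root) and because $n'(0)>0$ when $R>1$ (then $n'(0)$ is the larger root and $\Phi(0)<0$). Thus $N$ is a strictly monotone continuous bijection of $[0,q^*]$ with $N(0)=1$, $N(q^*)=h^*$, and all the asserted properties of $W=N^{-1}$ follow.

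\textbf{Part (2).} The value $w(h^*)=(1-R)h^*W(h^*)=(1-R)h^*q^*$ is immediate from $W(h^*)=q^*$. For the ODE I would change variables to $s=N(q)$: then $W(s)=q$, $w(s)=(1-R)sq$, so $q=w(s)/((1-R)s)$, $1-q=(s-w(s)/(1-R))/s$, and, since $n(q)^{-R}=N(q)(1-q)^{1-R}$, also $n(q)=s^{-1}(s-w(s)/(1-R))^{(R-1)/R}$; differentiating $w(N(q))=(1-R)N(q)q$ gives $w'(s)=(1-R)q+(1-R)N(q)/N'(q)$, with $N'/N$ as in Part~(1). Substituting these into the form (\ref{eq:nB}) of the ODE — which displays the square root — then isolating the square root and squaring, or equivalently using (\ref{eq:n}) to write $\upsilon(q,n(q))$ as an affine function of $w'(s)$ and inserting it into the identity $\upsilon^2-2\varphi\upsilon-E^2=0$ that $\upsilon$ satisfies by definition, produces after clearing the $(R-1)/R$ powers and tracking $\sgn(1-R)$ a quadratic equation in $w'(s)$, hence in $w(s)w'(s)$. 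Matching coefficients against the displayed identity is a lengthy but entirely mechanical verification; the $(ww')^2$ term is precisely the residue of squaring the square root.

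\textbf{Part (3): the endpoint and the easy bounds.} Since $q^*\in(0,1)$ is a turning point of $n$ (Lemma~\ref{lem:n}(3) and the remark after Proposition~\ref{prop:main}), $n'(q^*)=0$, so $N'(q^*)/N(q^*)=(1-R)/(1-q^*)$ directly, giving $w'(h^*)=(1-R)q^*+(1-q^*)=1-Rq^*=1-Rw(h^*)/((1-R)h^*)$ because $w(h^*)/((1-R)h^*)=q^*$. For the interior inequalities I would rewrite them via $w'(s)=(1-R)q+(1-R)N/N'$ and $N'/N=-Rn'/n+(1-R)/(1-q)$: a direct manipulation shows $w'(s)<1-Rw(s)/((1-R)s)$ is equivalent to $n'(q)<0$, which holds on $(0,q^*)$ for $R<1$ since $n>m$ there (Lemma~\ref{lem:n}(3)); for $R>1$ it shows $w'(s)>1-R$ is automatic (as $q<1$ forces $(1-R)q>1-R$ and $(1-R)N/N'>0$), while $w'(s)<1-Rw/((1-R)s)$ is equivalent to $n'(q)>0$, which holds on $(0,q^*)$ for $R>1$ since $n<m$ there.

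\textbf{Part (3): the hard bound (main obstacle).} The remaining inequality $w'(s)>1-R$ for $R<1$ is the crux; the same manipulation shows it is equivalent to $n'(q)(1-q)+n(q)>0$ on $(0,q^*)$, i.e. to monotonicity of $q\mapsto n(q)/(1-q)$. Writing the ODE in the form (\ref{eq:nC}) and using $S(q)-b_1(n-m)=b_1(\ell-n)$, I would reduce this — after clearing positive factors and noting that $\upsilon(q,\cdot)$ is increasing and $n>m$, so the $\upsilon$-correction has the favourable sign — to the invariant-region estimate $n(q)<R\,\ell(q)+(1-R)\,m(q)$ on $(0,q^*)$. This holds infinitesimally at $q=0$: with $\Delta=R\ell+(1-R)m-n$ one has $\Delta(0)=0$ and $\Delta'(0)=(1-R)(Rb_2-b_3)/b_1-n'(0)>0$, since $\Phi$ evaluated at $(1-R)(Rb_2-b_3)/b_1$ equals $-\tfrac{(1-R)^2Rb_2}{b_1}\,[\,b_1+(1-R)(Rb_2-b_3)\,]<0$, placing that point strictly above the smaller root $n'(0)$. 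To propagate it I would argue by contradiction at a putative first interior zero $q_0$ of $\Delta$: there $n(q_0)-m(q_0)=R(\ell(q_0)-m(q_0))$ and $\ell(q_0)-n(q_0)=(1-R)(\ell(q_0)-m(q_0))$, so (\ref{eq:nC}) gives $n'(q_0)$ explicitly (it equals $-n(q_0)/(1-q_0)$ plus a non-negative multiple of $\upsilon(q_0,n(q_0))-\upsilon(q_0,m(q_0))$), and comparing with $(R\ell+(1-R)m)'(q_0)=m'(q_0)+(R/b_1)S'(q_0)$ — controlling the $\upsilon$-term via $0\le\upsilon_n\le 2b_1$ and the explicit forms of $m$ and $S$ — should force $\Delta'(q_0)>0$, contradicting $\Delta'(q_0)\le 0$. (Alternatively one may run Part~(3) directly from the quadratic of Part~(2): $w(s)w'(s)$ is a root of $x^2+B(s,w)x+C(s,w)=0$, and the claimed bounds become the statement that this root lies between $(1-R)w(s)$ and $w(s)-Rw(s)^2/((1-R)s)$, checkable by evaluating the quadratic at these two points and using the branch pinned down by the endpoint value $w'(h^*)$.) The $R>1$ case of the proposition follows from the same computations with all inequalities reversed — and as noted, there this last step is not needed. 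The delicate part throughout is the sign bookkeeping in propagating $\Delta\ge 0$, especially the interplay with the $\upsilon$-term when $b_2>1$.
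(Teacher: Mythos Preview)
Your treatment of Part~(1), Part~(2), and the easy inequalities in Part~(3) matches the paper's proof essentially line for line (your expression for $N'/N$ is just a factored form of the one the paper writes down). One small slip: in handling $q=0$ for $R<1$ you argue that $n'(0)>-1$, but what you actually need for $N'(0)>0$ is $n'(0)<(1-R)/R$, and this follows immediately from $n'(0)<0$ (smaller root of $\Phi$ with $\Phi(0)<0$); the bound $n'(0)>-1$ is irrelevant here, though it does happen to be exactly the $q=0$ case of the hard inequality in Part~(3).

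The real divergence is in the hard bound $w'(s)>1-R$ for $R<1$, and here there is a genuine gap. You correctly reduce it to $n'(q)/n(q)>-1/(1-q)$ and then try to propagate the barrier $\Delta=R\ell+(1-R)m-n>0$. The trouble is that at a putative first zero $q_0$ of $\Delta$ you obtain $n'(q_0)=-n(q_0)/(1-q_0)+A$ with $A\ge 0$, and to get a contradiction you must show $A<(R\ell+(1-R)m)'(q_0)+n(q_0)/(1-q_0)$. Your proposed control $0\le\upsilon_n\le 2b_1$ only yields $A\le n(q_0)q_0/\bigl((1-q_0)[(1-R)q_0+R]\bigr)$, which after simplification reduces the needed inequality to $P'[(1-R)q+R]+RP>0$ for $P=R\ell+(1-R)m$; this is not obviously true, and you yourself signal uncertainty (``should force''). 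The step is not closed.

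The paper avoids this by choosing a \emph{different} barrier: the straight line $k(q)=1+q(1-R)(b_2R-b_3)/b_1$, which it verifies satisfies $F(q,k(q))/k(q)=-1/(1-q)$ \emph{exactly} (one checks $\varphi(q,k(q))=R(1-R)(2-b_2)(1-q)$ and $\upsilon(q,k(q))=-2R(1-R)(b_2-1)(1-q)$, then substitutes in (\ref{eq:nA})). Because $F(q,\cdot)/(\cdot)$ is decreasing in its second argument, $n<k$ immediately gives $n'/n>-1/(1-q)$; and at a crossing $n(q_0)=k(q_0)$ one gets $n'(q_0)=-k(q_0)/(1-q_0)$ with no residual $A$-term, so the comparison $n'(q_0)<k'(q_0)$ is equivalent to $k'(0)+1>0$, i.e.\ precisely the non-degeneracy condition $b_3<b_1/(1-R)+b_2R$. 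Your barrier $R\ell+(1-R)m$ actually lies \emph{below} $k$ (the difference is $-(b_2-1)R(1-R)^2q^2/\bigl(b_1[(1-R)q+R]\bigr)\le 0$), so you are trying to prove a strictly stronger statement than needed, and the crossing argument correspondingly acquires an uncontrolled positive term. Switching to the line $k$ closes the gap with a one-line computation.
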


\section{The verification lemma in the first non-degenerate case with finite critical exercise ratio}
Suppose $0 < b_3 < b_{3,crit} (b_1, b_2, R)$,
so that $0 < q^* < 1$. Suppose we have constructed $n$ and $N$ on $[0,q^*]$ and $W$ and $w$ on $[1,h^*]$. Set
$z^* = q^*/(1-q^*)$ and $u^* = e^{z^*}$.
Define $h$ via $h(u^*)=h^*$ and for $-\infty < u < u^*$  set $\frac{dh}{du} = w(h)$. Then also $\frac{d^2 h}{du^2} = w'(h) w(h)$. Then $h $ solves (\ref{eq:885}).
Define $g$ via (\ref{eq:884}).

\begin{lem}
\label{lem:smoothfit}
Let $z^*$ and $g$ be as given in Equations~\ref{eq:886} and
\ref{eq:884}
of
Theorem~\ref{thm:maincase}.
Then,
$g\left(z\right)$, $g'\left(z\right)$, $g''\left(z\right)$
are continuous at $z=z^{*}$.
\end{lem}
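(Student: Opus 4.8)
The plan is to verify continuity of $g$, $g'$, $g''$ at $z^*$ by matching the two branches in (\ref{eq:884}). Write $g_+(z) = K n(q^*)^{-R}(1+z)^{1-R}$ for the branch on $(z^*,\infty)$ (where $K = (b_1/(b_4R))^{-R}$) and $g_-(z) = K h(\ln z)$ for the branch on $[0,z^*]$. Since $h$ is defined by $h(u^*) = h^* = N(q^*)$ with $u^* = \ln u^* $... more precisely $u^* = e^{z^*}$ so $\ln u^* = z^*$, we have $g_-(z^*) = K h(z^*) = K h^* = K N(q^*)$. I would first check value matching: $g_+(z^*) = K n(q^*)^{-R}(1+z^*)^{1-R}$, and since $1+z^* = 1/(1-q^*)$, this equals $K n(q^*)^{-R}(1-q^*)^{R-1} = K N(q^*)$ by the definition of $N$. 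So $g_+(z^*) = g_-(z^*)$; this is exactly the identity $n(q^*)^{-R} = h^*(1-q^*)^{1-R}$ recorded in Theorem~\ref{thm:maincase}.

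Next I would compute the first derivatives. On the upper branch, $g_+'(z) = K n(q^*)^{-R}(1-R)(1+z)^{-R}$, so $g_+'(z^*) = K n(q^*)^{-R}(1-R)(1-q^*)^R$. On the lower branch, by the chain rule $g_-'(z) = K h'(\ln z)/z$, and since $h' = w(h)$ with $w(s) = (1-R)sW(s)$, at $z = z^*$ we get $g_-'(z^*) = K w(h^*)/z^* = K(1-R)h^* W(h^*)/z^* = K(1-R)h^* q^*/z^*$, using $W(h^*) = q^*$. Since $z^* = q^*/(1-q^*)$ this is $K(1-R)h^*(1-q^*)$, and substituting $h^* = n(q^*)^{-R}(1-q^*)^{R-1}$ gives $K(1-R)n(q^*)^{-R}(1-q^*)^R = g_+'(z^*)$. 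So the first derivatives agree.

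For the second derivatives, on the upper branch $g_+''(z^*) = K n(q^*)^{-R}(1-R)(-R)(1-q^*)^{R+1} = -R(1-R)K n(q^*)^{-R}(1-q^*)^{R+1}$. On the lower branch, $g_-''(z) = K[h''(\ln z)/z^2 - h'(\ln z)/z^2] = K(h'' - h')/z^2$, and since $h'' = w'(h)w(h) = w'(h)h'$, this is $K h'(w'(h) - 1)/z^2$. At $z^*$, using $h' = w(h^*) = (1-R)h^* q^*$ and the boundary relation $w'(h^*) = 1 - Rw(h^*)/((1-R)h^*) = 1 - Rq^*$ from Proposition~\ref{prop:NWw}(3), we get $g_-''(z^*) = K(1-R)h^* q^*(-Rq^*)/(z^*)^2$. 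Substituting $z^* = q^*/(1-q^*)$ turns $(z^*)^2$ into $(q^*)^2/(1-q^*)^2$, so $g_-''(z^*) = -KR(1-R)h^*(1-q^*)^2$, and then $h^* = n(q^*)^{-R}(1-q^*)^{R-1}$ yields $g_-''(z^*) = -KR(1-R)n(q^*)^{-R}(1-q^*)^{R+1} = g_+''(z^*)$. The only nonobvious ingredient is the boundary value $w'(h^*) = 1 - Rw(h^*)/((1-R)h^*)$, which is precisely what Proposition~\ref{prop:NWw}(3) supplies; the rest is routine algebra converting between $q^*$ and $z^*$ and using the definitions of $N$, $W$, $w$, $h$, $g$. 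The main point to get right is tracking the chain-rule factors of $1/z$ and $1/z^2$ coming from the substitution $u = \ln z$ and identifying each resulting expression with the closed form on the Merton branch.
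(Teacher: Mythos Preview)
Your proposal is correct and follows essentially the same route as the paper: both compute $zg'(z)$ and $z^2g''(z)$ on the inner branch via $h'=w(h)$ and $h''=w'(h)w(h)$, and then match to the outer branch using the boundary values $w(h^*)=(1-R)h^*q^*$ and $w'(h^*)=1-Rw(h^*)/((1-R)h^*)$ supplied by Proposition~\ref{prop:NWw}. There is a small notational slip in your value-matching step (you write $g_-(z^*)=Kh(z^*)$ where $g_-(z^*)=Kh(\ln z^*)$ is meant, and the paper's own definition $u^*=e^{z^*}$ is itself garbled here --- the intended identification is that $h$ at the argument corresponding to $z^*$ equals $h^*$), but this does not affect the substance.
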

\begin{proof}
These results follow from the definition of $g$. For the smooth fit of first and second order, it is convenient to
note that $dh/du = w(h)$ and then for $z<z^*$,
\begin{eqnarray}
\nonumber
zg'(z) & = & \left(\frac{b_1}{b_4 R}\right)^{-R} h'(\ln z) =
\left(\frac{b_1}{b_4 R}\right)^{-R} w(h) \\
z^2 g''(z) = \left(\frac{b_1}{b_4 R}\right)^{-R} (h'' - h') =
\left(\frac{b_1}{b_4 R}\right)^{-R} (w'(h) - 1)w(h)  \label{eqn:2OSF}
\end{eqnarray}
The results now follow from the expressions for $w(h^*)$ and $w'(h^*)$ given in Proposition~\ref{prop:NWw}.
%We have
%\[
%g(z^*+)  =  \left(\frac{b_1}{b_4 R}\right)^{-R}
%h^* (1-q^*)^{1-R} \left(1+z^*\right)^{1-R} =  \left(\frac{b_1}{b_4 R}\right)^{-R} h^* = \left(\frac{b_1}{b_4 R}\right)^{-R} h(u^*) = g(z^*-).
%\]

%For the first derivative, with $K=n(q^*)^{-R}$, we have for $z>z^*$,
%\[
%zg'(z) = \left(\frac{b_1}{b_4 R}\right)^{-R}n(q^*)^{-R}(1+z)^{-R}z(1-R)
% = (1-R)  \left(\frac{b_1}{b_4 R}\right)^{-R}n(q^*)^{-R} (1+z)^{1-R} \frac{z}{1+z}
% =  (1-R) \left( \frac{z g(z)}{1+z} \right),
%\]
%and then since $\frac{z^*}{1+z^*} = q^*$, $z^* g'(z^*) = (1-R)
%\left(\frac{b_1}{b_4 R}\right)^{-R} h^* q^*$.
%Meanwhile, for $z<z^*$, and noting that $\frac{dh}{du} = w(h)$
%\[
%zg'(z) = \left(\frac{b_1}{b_4 R}\right)^{-R} h'(\ln z) =
%\left(\frac{b_1}{b_4 R}\right)^{-R} w(h)
%\]
%so that $z^* g'(z^*-) = \left(\frac{b_1}{b_4 R}\right)^{-R} w(h^*)$. The continuity of the first derivative follows from the identity $w(h^*) =(1-R)h^*W(h^*) = (1-R) h^* q^*$.

%Finally, for $z>z^*$
%\begin{equation}
%\label{eqn:2OSF} z^2 g''(z) = -R(1-R) \left(\frac{b_1}{b_4 R}\right)^{-R}n(q^*)^{-R}
% (1+z)^{1-R} \left(
%\frac{z}{1+z} \right)^2 = -R(1-R) g(z) \left(\frac{z}{1+z} \right)^2
%\end{equation}
%and $(z^*)^2 g''(z^*+) = -R(1-R)g(z^*) (x^*)^2$. For $z<z^*$,
%\[ z^2 g''(z) = \left(\frac{b_1}{b_4 R}\right)^{-R} (h'' - h') =
%\left(\frac{b_1}{b_4 R}\right)^{-R} (w'(h) - 1)w(h) \]
%and at $z^*$, $(z^*)^2 g''(z^*-) = -R(1-R) \left(\frac{b_1}{b_4 R}\right)^{-R} h^*
%(q^*)^2$ where we use Proposition~\ref{prop:NWw} (3).
\end{proof}

\begin{prop}
\label{prop:gprop}
Suppose g solves (\ref{eq:884}). Then for $R<1$, $g$ is an increasing concave function such that $g(0) = \left(\frac{b_1}{b_4R}\right)^{-R}$.
Otherwise, for $R>1$, $g$ is a decreasing convex function such that $g(z) \geq 0$. Further, for
$z \leq z^*$ we have $0 \geq R g'(z)^2 \geq (1-R)g(z) g''(z)$ with equality at $z = z^*$.
\end{prop}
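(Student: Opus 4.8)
The plan is to treat the two pieces of the definition (\ref{eq:884}) separately and glue them using the smooth fit already in hand from Lemma~\ref{lem:smoothfit}. On $z>z^*$ the function is the explicit power $g(z)=\left(\frac{b_1}{b_4R}\right)^{-R}n(q^*)^{-R}(1+z)^{1-R}$; since $n(q^*)>0$ in the non-degenerate case (by the preliminaries on $n$, e.g. Lemma~\ref{lem:n}(4) gives $q_n>q^*$), the prefactor is a positive constant, so $g>0$, $\sgn g'(z)=\sgn(1-R)$, and $g''(z)$ has the sign of $-R(1-R)$. Thus on $(z^*,\infty)$, $g$ is positive, increasing and concave when $R<1$, and positive, decreasing and convex when $R>1$. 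Because $g,g',g''$ are continuous at $z^*$ by Lemma~\ref{lem:smoothfit}, it then suffices to prove the same qualitative properties and the second‑derivative inequality on $[0,z^*)$ and patch at $z^*$.

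On $[0,z^*)$ I would use the substitution $z=e^u$, $g(z)=\left(\frac{b_1}{b_4R}\right)^{-R}h(u)$ with $h'=w(h)$, $h''=w'(h)w(h)$, and the identities recorded in (\ref{eqn:2OSF}), namely $zg'(z)=\left(\frac{b_1}{b_4R}\right)^{-R}w(h)$ and $z^2g''(z)=\left(\frac{b_1}{b_4R}\right)^{-R}w(h)(w'(h)-1)$, where $h$ ranges over $(1,h^*)$ if $R<1$ and over $(h^*,1)$ if $R>1$. Since $w(s)=(1-R)sW(s)$ with $0<W<q^*$ on the relevant interval, $w(h)$ has the sign of $1-R$, so $g'(z)=\left(\frac{b_1}{b_4R}\right)^{-R}w(h)/z$ has the sign of $1-R$: $g$ is increasing for $R<1$ and decreasing for $R>1$. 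For concavity/convexity, Proposition~\ref{prop:NWw}(3) gives $w'(s)<1-Rw(s)/((1-R)s)=1-RW(s)/s<1$, so $z^2g''(z)$ has the sign of $-w(h)$, i.e. of $-(1-R)$: $g$ is concave for $R<1$ and convex for $R>1$, and $g\ge0$ for $R>1$ since $h>0$. Finally $g(0)=\left(\frac{b_1}{b_4R}\right)^{-R}h(-\infty)=\left(\frac{b_1}{b_4R}\right)^{-R}$, using $h(-\infty)=1$.

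For the remaining bound, which (since $R>0$) amounts to $Rg'(z)^2+(1-R)g(z)g''(z)\le0$ with equality at $z=z^*$, I would substitute the two identities above to get
\[
Rg'(z)^2+(1-R)g(z)g''(z)=\left(\frac{b_1}{b_4R}\right)^{-2R}\frac{w(h)}{z^2}\bigl(Rw(h)+(1-R)h\,(w'(h)-1)\bigr),
\]
and then show the bracket $B:=Rw(h)+(1-R)hw'(h)-(1-R)h$ is negative for $R<1$ and positive for $R>1$: multiplying the strict inequality $w'(s)<1-Rw(s)/((1-R)s)$ from Proposition~\ref{prop:NWw}(3) by $(1-R)h$ (positive when $R<1$, negative when $R>1$, reversing the inequality) yields exactly $B<0$, respectively $B>0$. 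Combined with $\sgn w(h)=\sgn(1-R)$ the product is negative in both regimes, and it vanishes precisely when $w'(h)=1-Rw(h)/((1-R)h)$, which by the last clause of Proposition~\ref{prop:NWw}(3) occurs only at $h=h^*$, i.e. $z=z^*$; there the one‑sided second derivatives agree by Lemma~\ref{lem:smoothfit} (and the power‑function side gives the same value), so there is genuine equality. The endpoint $z=0$ is handled by letting $z\downarrow0$, where $h\to1$ and $w(1)=(1-R)W(1)=0$ because $N(0)=1$.

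The main obstacle is purely bookkeeping: $w$, the factor $1-R$, and the bounds in Proposition~\ref{prop:NWw}(3) all change sign between $R<1$ and $R>1$, so every inequality must be tracked through both cases, and the endpoint $z=0$ (where $w(h)\to0$ and $g'$ is a limit) needs a short separate argument; there is no genuine analytic difficulty once Proposition~\ref{prop:NWw}(3) and Lemma~\ref{lem:smoothfit} are available.
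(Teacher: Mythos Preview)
Your proof is correct and follows essentially the same route as the paper: both use the substitution $g(z)=(b_1/b_4R)^{-R}h(\ln z)$ with the identities $zg'=Cw(h)$ and $z^2g''=Cw(h)(w'(h)-1)$ from (\ref{eqn:2OSF}), and then appeal to the bounds in Proposition~\ref{prop:NWw}(3) together with Lemma~\ref{lem:smoothfit}. Your treatment of the final inequality $Rg'(z)^2+(1-R)g(z)g''(z)\le0$ is in fact more explicit than the paper's own proof (which defers the computation to the proof of Lemma~\ref{lem:concave}); note only the small slip $Rw(s)/((1-R)s)=RW(s)$ rather than $RW(s)/s$, though this does not affect the conclusion $w'<1$.
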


\begin{proof}
Consider first $R<1$. Since the statements are immediate in the region $z \geq z^*$, and
since there is second order smooth fit at $z^*$ the result will
follow if $h(-\infty)=1$, $h$ is increasing and, using the second part of (\ref{eqn:2OSF}),
$w(h) w'(h) - w(h) \leq 0$. The last two properties follow from
Proposition~\ref{prop:NWw} since $w(h)
\geq 0$ and $w'(h) < 1$.

To see that $h(-\infty)=1$ note that $w(h)=(1-R)hW(h)$ is bounded away from zero for $h$ bounded away from $1$. Then since $W'(1)$ is non-zero finite we conclude that $h(-\infty)=1$.

For $R>1$, and $z \geq z^*$, the statement holds immediately. For $z \leq z^*$,
 Proposition~\ref{prop:NWw} implies that
$h$ is decreasing and $w(h) \leq 0$, $w'(h) > 1$. Together with (\ref{eqn:2OSF}),
we have $g$ is a decreasing convex function and $g(z) \geq 0$ given that $h \in [0,1]$.
\end{proof}

Define the candidate value function at $t=0$ by
\begin{equation}
\sG(x,y,\theta) =  \frac{x^{1-R}}{1-R} g\left(\frac{y\theta}{x}\right).
\label{eq:vf}
\end{equation}

\begin{lem}
\label{lem:concave}
Fix $y$. Then $\sG=\sG(x,\theta)$ is concave in $x$ and
$\theta$. In particular, if $\psi(\chi) = \sG(x - \chi y \phi, y ,
\theta
+ \chi \phi)$, then $\psi$ is concave in $\chi$.
\end{lem}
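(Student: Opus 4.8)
We read the statement as joint concavity of $\sG(\cdot,y,\cdot)$ in $(x,\theta)$ for fixed $y$; the ``in particular'' then follows immediately, since $\chi\mapsto(x-\chi y\phi,\theta+\chi\phi)$ is affine and a concave function composed with an affine map is concave. By Lemma~\ref{lem:smoothfit}, $g\in C^2([0,\infty))$, so $\sG$ is $C^2$ on $\{x>0,\theta\ge0\}$, and it suffices to show that its Hessian in $(x,\theta)$ is negative semidefinite there; the conclusion then extends to $x=0$ and to $\theta\ge0$ by continuity, and to the region $-y\theta<x\le0$ via the $C^2$ extension of $\sG$, on which $\sG=\frac1{1-R}\left(\frac{b_1}{b_4R}\right)^{-R}n(q^*)^{-R}(x+y\theta)^{1-R}$ is visibly concave.

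Writing $z=y\theta/x$ and $D(z)=g(z)-\frac{z}{1-R}g'(z)$, the routine computation gives $\sG_x=x^{-R}D(z)>0$ and
\[
\sG_{xx}=\frac{x^{-R-1}}{1-R}A(z),\qquad
\sG_{x\theta}=-\frac{x^{-R-1}y}{1-R}\bigl(Rg'(z)+zg''(z)\bigr),\qquad
\sG_{\theta\theta}=\frac{x^{-R-1}y^2}{1-R}g''(z),
\]
with $A(z)=-R(1-R)g(z)+2Rzg'(z)+z^2g''(z)$. Factoring out the scalar $x^{-R-1}/(1-R)$, which is positive for $R<1$ and negative for $R>1$, negative semidefiniteness of the Hessian is equivalent to the symmetric matrix with diagonal entries $A$ and $C:=y^2g''$ and off-diagonal entry $B:=-y(Rg'+zg'')$ being negative semidefinite if $R<1$ and positive semidefinite if $R>1$. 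Its determinant simplifies cleanly to $AC-B^2=-Ry^2\bigl((1-R)g(z)g''(z)+R\,g'(z)^2\bigr)$.

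Now I would split into the two regions. For $0\le z\le z^*$, Proposition~\ref{prop:gprop} gives that $g''$ has the sign of $1-R$ (so $C=y^2g''$ is $\le0$ for $R<1$ and $\ge0$ for $R>1$, as required of a diagonal entry in each case) together with the inequality $(1-R)gg''+Rg'^2\le0$, whence $AC-B^2\ge0$. For the remaining diagonal entry $A$, using $g>0$ to divide that inequality by $(1-R)g$ and substituting the resulting bound on $g''$ into $A$, one gets after completing the square
\[
-A=R(1-R)g-2Rzg'-z^2g''\ \ge\ \frac{R}{(1-R)g}\bigl((1-R)g-zg'\bigr)^2\ \ge\ 0\qquad(R<1),
\]
and symmetrically $A\ge\frac{R}{(R-1)g}\bigl((R-1)g+zg'\bigr)^2\ge0$ for $R>1$ (indeed strictly, since $(1-R)g-zg'=(1-R)D\neq0$, so $\sG_{xx}<0$). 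This gives the claimed semidefiniteness on $\{0\le z\le z^*\}$. For $z\ge z^*$, $\sG=\frac1{1-R}\left(\frac{b_1}{b_4R}\right)^{-R}n(q^*)^{-R}(x+y\theta)^{1-R}$ is the composition of the concave map $u\mapsto\frac{K}{1-R}u^{1-R}$ ($K>0$) with the linear functional $(x,\theta)\mapsto x+y\theta$, hence concave. By the second-order smooth fit of Lemma~\ref{lem:smoothfit} the Hessian of $\sG$ is continuous across the ray $\{y\theta/x=z^*\}$ and negative semidefinite on all of $\{x>0,\theta\ge0\}$; hence $\sG$ is concave there, and by the extensions noted above on the whole relevant domain, so $\psi$ is concave.

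The only step requiring genuine care is the sign analysis on $[0,z^*]$: the determinant condition is exactly the inequality supplied by Proposition~\ref{prop:gprop}, whereas obtaining the trace condition $\sG_{xx}\le0$ needs the rewriting of $R(1-R)g-2Rzg'-z^2g''$ as $\frac{R}{(1-R)g}\bigl((1-R)g-zg'\bigr)^2$ plus a nonnegative remainder coming from that same inequality. The matching at $z=z^*$ is immediate from smooth fit and the region $z\ge z^*$ is trivial.
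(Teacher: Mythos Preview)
Your proof is correct and follows essentially the same Hessian computation as the paper. The paper computes $\sG_{xx},\sG_{x\theta},\sG_{\theta\theta}$, shows the Hessian determinant equals a constant times $(1-R)g\,z^2g''+R(zg')^2$, and invokes the $w,h$ representation from Proposition~\ref{prop:NWw} to get the sign; then it checks only $\sG_{\theta\theta}\le0$ via the concavity/convexity of $g$. You do the same determinant computation but cite Proposition~\ref{prop:gprop} directly in the $g$ variables, and in addition you verify the other diagonal entry $\sG_{xx}\le0$ by the neat completing-the-square bound $R(1-R)g-2Rzg'-z^2g''\ge \frac{R}{(1-R)g}\bigl((1-R)g-zg'\bigr)^2$. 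That extra step is a genuine (small) improvement: the paper's argument implicitly relies on $g''\neq0$ to conclude the second diagonal entry has the right sign, whereas yours does not. Your separate treatment of $z\ge z^*$ as a concave function of a linear functional, glued via the second-order smooth fit of Lemma~\ref{lem:smoothfit}, is equivalent to the paper's observation that the Hessian determinant vanishes identically there.
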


\begin{proof}
%Consider first $R<1$.
In order to show the concavity of the candidate value function
it is sufficient
to show that the Hessian matrix given by
\[
H_{\sG}=\left(\begin{array}{cc}
\sG_{xx} & \sG_{x\theta}\\
\sG_{x\theta} & \sG_{\theta\theta}
\end{array}\right).
\]
has a positive determinant, and that one of the diagonal entries is
non-positive.

Direct computation leads to
\begin{eqnarray*}
\sG_{xx}\left(x,y,\theta\right) & = &
x^{-R-1}\left[-Rg\left(z\right)
+\frac{2R}{1-R}zg'\left(z\right)+\frac{1}{1-R}z^{2}g''
\left(z\right)\right],\\
\sG_{x\theta}\left(x,y,\theta\right) & = &
-x^{-R-1}\frac{y}{1-R}\left[Rg'\left(z\right)
+zg''\left(z\right)\right],\\
\sG_{\theta\theta}\left(x,y,\theta,t\right) & = &
x^{-R-1}\frac{y^{2}}{1-R}g''\left(z\right),
\end{eqnarray*}
and the determinant of Hessian matrix is
\begin{equation}
\label{eqn:Hess}
\sG_{xx}\sG_{\theta\theta}-\left(\sG_{x\theta}\right)^{2}
=  -x^{-2R}\theta^{-2}\frac{R}{1-R}
\left[g\left(z\right)z^{2}g''\left(z\right)
+\frac{R}{1-R}\left(zg'\left(z\right)\right)^{2}\right].
\end{equation}
If $z \geq z^*$ then the expression on the right-hand-side of
(\ref{eqn:Hess}) is zero by Proposition~\ref{prop:gprop}.
For $z \leq z^*$, Proposition~\ref{prop:NWw} yields
\begin{equation}
\label{eq:Hessian}
(1-R)g\left(z\right)z^{2}g''\left(z\right)+R
\left(zg'\left(z\right)\right)^{2}
=(1-R)  h\left[w\left(h\right)w'\left(h\right)-w\left(h\right)\right]
+R w\left(h\right)^{2}
\leq 0
\end{equation}
with equality at $h=h^{*}$ by the smooth fit condition.
Further, since $g$ is concave we have that $\sG_{\theta \theta} \leq
0$.

In order to show the concavity of $\psi$
in $\chi$, it is equivalent to examine the sign of
$\frac{d^2 \psi}
{d \chi^{2}}$. But
\[ \frac{d^2 \psi}
{d \chi^{2}}=\phi^{2}
\left[y^{2}\sG_{xx}+\sG_{\theta\theta}-2y\sG_{x\theta}\right] = \phi^2
(y,1)\det(H_\sG) (y,1)^T \leq 0\].

\end{proof}

\begin{lem}
\label{lem:operator}
Consider the candidate function constructed in (\ref{eq:vf}). Then

(a) For $0 < x \leq y \theta/z^*$, $\sM \sG=0$ and $\sL \sG  - \beta \sG \leq 0$.

(b) For $x \geq y \theta/ z^*$, $\sL \sG - \beta \sG =0$ and $\sM \sG \geq 0$.

Note that the derivatives of $\sG$ are well-defined and continuous at $x=0$, so that the results in (a) hold at $x=0$.
\end{lem}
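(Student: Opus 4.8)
The plan is to verify the two regimes separately, exploiting the product form $\sG(x,y,\theta)=\frac{x^{1-R}}{1-R}g(z)$ with $z=y\theta/x$. First I would record the basic observation that $\sM\sG=\sG_\theta - y\sG_x = x^{-R}\left[\frac{1}{1-R}zg'(z) - g(z) + \frac{1}{1-R}zg'(z)\right]$ wait --- more carefully, since $\sG_\theta = x^{-R}\frac{y}{1-R}g'(z)\cdot\frac{x}{y} $... let me instead simply compute $\sM\sG = x^{-R}\bigl[\tfrac{1}{1-R}g'(z) z \cdot \tfrac{1}{z}\cdot z - g(z)\bigr]$; the clean statement is that $\sM\sG$ has the same sign as $g'(z)\,z/(1-R) - g(z)$ up to the positive prefactor $x^{-R}$, and vanishes exactly when a certain first-order ODE in $g$ holds. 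The key algebraic fact is that the equation $\sM\sG=0$, i.e.\ $(1-R)g(z) = zg'(z)$, forces $g(z)=\mathrm{const}\cdot z^{1-R}$, which is precisely the form of $g$ on $(z^*,\infty)$ in \eqref{eq:884}. Hence $\sM\sG=0$ for $x\le y\theta/z^*$ (equivalently $z\ge z^*$), giving the first half of part (b)... no: $z \ge z^*$ corresponds to $x \le y\theta/z^*$, so this is the first half of part (a). Let me restate: for $z\ge z^*$, $g(z)= \left(\frac{b_1}{b_4R}\right)^{-R} n(q^*)^{-R}(1+z)^{1-R}$, and a direct computation shows this does \emph{not} satisfy $(1-R)g=zg'$ exactly (because of the $1+z$ rather than $z$), so instead I expect $\sM\sG=0$ to come from the complementary region. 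I would recompute carefully: the candidate should satisfy $\sM\sG=0$ on the \emph{no-transaction} region. Given the statement of the lemma, $\sM\sG=0$ on $0<x\le y\theta/z^*$ (the sell region should actually have $\sM\sG=0$ at the boundary and the generator equality in the interior of no-transaction); I will follow the lemma's assignment literally and verify each inequality by the scaling reduction.

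Concretely, for part (a), after dividing through by the positive factor $x^{1-R}R\left(\frac{b_1}{b_4R}\right)^{1-R}$ (or the analogous normalisation), the inequality $\sL\sG-\beta\sG\le 0$ reduces to a one-variable inequality in $z$ on $[z^*,\infty)$, which I would verify by substituting the explicit power form $g(z)=c(1+z)^{1-R}$ and checking that the resulting expression in $z/(1+z)$ matches the bracket $\left[\frac{b_3}{b_1}\frac{z}{1+z} - \frac{R}{b_1}\left(\frac{z}{1+z}\right)^2\right]$ from Lemma~\ref{deg1:inequal}, now evaluated with the constant $n(q^*)^{-R}$; the sign is controlled because $z/(1+z)\le q^*$ and $q^*>b_3/(2R)$ by Lemma~\ref{lem:n}(5), so the quadratic $q\mapsto b_3 q - Rq^2$ evaluated on $[0,q^*]$... actually one needs its sign relative to the correction coming from $n(q^*)\ne 1$. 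The cleaner route for part (a) in the region $z\ge z^*$ is: the function $\tilde g(z)= (1+z)^{1-R}$ on $z \ge z^*$ is (up to the constant $n(q^*)^{-R}$) the Merton-type value, and one checks $\sL-\beta$ applied to $\frac{x^{1-R}}{1-R}\tilde g(z)$ equals $\frac{x^{1-R}}{1-R}\tilde g(z)$ times a nonpositive quantity using that $q^*$ is the first crossing point, where the crossing condition $n(q^*)=m(q^*)$ (Lemma~\ref{lem:n}(3), noted in the Remark) pins down the constant. I expect the main obstacle to be exactly this matching: showing that with the constant $n(q^*)^{-R}$ inserted, $\sL\sG-\beta\sG\le 0$ holds on all of $z\ge z^*$ and not merely at $z=z^*$; this requires the monotonicity of $(1-R)m$ on $(q^*,1)$ from Lemma~\ref{lem:n}(5) together with the behaviour of $z/(1+z)$ on $[z^*,\infty)$ (which ranges over $[q^*,1)$).

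For part (b), i.e.\ $x\ge y\theta/z^*$ (equivalently $z\le z^*$), I would show $\sL\sG-\beta\sG=0$ by direct substitution of $g(z)=\left(\frac{b_1}{b_4R}\right)^{-R}h(\ln z)$ and use of the ODE $h'=w(h)$, $h''=w'(h)w(h)$ together with the quadratic relation for $w$ stated in Proposition~\ref{prop:NWw}(2): after rewriting $\sL\sG$ using \eqref{eqn:Psidef} for $\Psi_g$ and the identities $zg'(z)=\left(\frac{b_1}{b_4R}\right)^{-R}w(h)$, $z^2g''(z)=\left(\frac{b_1}{b_4R}\right)^{-R}(w'(h)-1)w(h)$ from \eqref{eqn:2OSF}, the PDE equality $\sL\sG-\beta\sG=0$ should become precisely (a multiple of) the $w$-equation in Proposition~\ref{prop:NWw}(2), with the identification $s=h$, $h\leftrightarrow s$, $b_1R(1-R)h^2$ matching the term $b_1 R(1-R)h^2$ there --- this is essentially a reverse-engineering check that the $w$-ODE was \emph{defined} to make this hold. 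The remaining claim $\sM\sG\ge0$ on $z\le z^*$ follows from concavity: $\sM\sG$ vanishes at $z=z^*$ (by the smooth fit of Lemma~\ref{lem:smoothfit} and the first-order behaviour there, using $w(h^*)=(1-R)h^*q^*$, which is exactly the relation $(1-R)g(z^*)=z^*g'(z^*)\cdot$something), and $\frac{d}{dz}\sM\sG$ has a definite sign for $z<z^*$ by the concavity of $\sG$ established in Lemma~\ref{lem:concave}; more precisely $\psi(\chi)=\sG(x-\chi y\phi,y,\theta+\chi\phi)$ is concave with a stationary point detected by $\sM\sG$, so $\sM\sG\ge0$ to the left of $z^*$ and $\le0$ to the right is forced. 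I would close by remarking that the continuity of the derivatives of $\sG$ at $x=0$, hence validity of (a) there, is immediate from $g,g',g''$ being continuous (Lemma~\ref{lem:smoothfit}) and $g(0)=\left(\frac{b_1}{b_4R}\right)^{-R}$ finite (Proposition~\ref{prop:gprop}).
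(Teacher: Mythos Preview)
Your overall strategy matches the paper's, but there is a concrete computational error that derails part~(a) and causes the confusion you notice. You write $\sM\sG=0\Leftrightarrow (1-R)g(z)=zg'(z)$; in fact
\[
\sG_\theta=\frac{x^{-R}y}{1-R}\,g'(z),\qquad \sG_x=x^{-R}\Bigl[g(z)-\tfrac{z}{1-R}g'(z)\Bigr],
\]
so $\sM\sG=x^{-R}y\bigl[\tfrac{(1+z)}{1-R}g'(z)-g(z)\bigr]$, and the correct condition is $(1-R)g(z)=(1+z)g'(z)$. This is exactly satisfied by $g(z)=c(1+z)^{1-R}$, so $\sM\sG=0$ on $z\ge z^*$ is immediate --- the paper simply says ``immediate from the definition of $\sG$''. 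Once you have the right formula, the confusion about which region is the sell region disappears: $z\ge z^*$ is the sell region, as the lemma asserts. Your plan for $\sL\sG-\beta\sG\le 0$ on $z\ge z^*$ is then correct and is what the paper does: substituting the explicit form yields a constant times $[m(q^*)-m(z/(1+z))]$ (using $n(q^*)=m(q^*)$), and the sign follows from the monotonicity of $(1-R)m$ on $(q^*,1)$ in Lemma~\ref{lem:n}(5), since $z/(1+z)$ ranges over $[q^*,1)$ for $z\ge z^*$.

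For part~(b) your plan for $\sL\sG-\beta\sG=0$ via Proposition~\ref{prop:NWw}(2) is precisely the paper's route. For $\sM\sG\ge 0$ you propose a different argument from the paper: the paper defines $\psi(z)=(1+z)/(1-R)-g(z)/g'(z)$, shows $\psi(z^*)=0$ by smooth fit, and shows $\psi'\le 0$ directly using the inequality $w'\le 1-\tfrac{Rw}{(1-R)h}$ from Proposition~\ref{prop:NWw}(3). Your concavity idea is valid but your statement ``$\tfrac{d}{dz}\sM\sG$ has a definite sign by concavity'' is imprecise. What concavity (Lemma~\ref{lem:concave}) actually gives is that $\chi\mapsto -\sM\sG(x+\chi y,y,\theta-\chi)$ is nonincreasing; starting from any $(x,y,\theta)$ with $z\le z^*$ one can move backwards along this line to a point with ratio $z^*$, where $\sM\sG=0$, and conclude $\sM\sG(x,y,\theta)\ge 0$. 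Note also that Lemma~\ref{lem:concave} itself rests on exactly the same inequality from Proposition~\ref{prop:NWw}(3), so the two routes are equivalent at the level of what is being used.

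Finally, your remark on $x=0$ is misdirected: as $x\downarrow 0$ we have $z\to\infty$, not $z\to 0$, so what matters is the explicit power form of $g$ on $(z^*,\infty)$ (which makes $\sG$ and its derivatives extend continuously to $x=0$), not the value $g(0)$.
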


\begin{proof}
(a) For $z \geq z^*$, $\sM \sG=0$ is immediate from the definition of $\sG$.
For
$\sL \sG - \beta \sG$ we have
that $\sG(x,y,\theta) = \left(\frac{b_1}{b_4 R}\right)^{-R}
n(q^*)^{-R}  \frac{x^{1-R}}{1-R} \left(1+z\right)^{1-R}$ and then
\[
\mathcal{L}\sG - \beta \sG %& =&  \frac{x^{1-R}}{1-R} R \left(\frac{b_1}{b_4R}\right)^{1-R}
%n(q^*)^{-R} (1+z)^{1-R} \left[K^{-\frac{1}{R}} - 1 + \frac{b_3}{b_1} (1-R)\frac{z}{1+z} - \frac{R(1-R)}{b_1}
%\left(\frac{z}{1+z}\right)^2\right] \\
=
 \frac{x^{1-R}}{1-R} R \left(\frac{b_1}{b_4 R}\right)^{1-R} n(q^*)^{-R} (1+z)^{1-R}
\left[m(q^*) - m\left(\frac{z}{1+z}\right)\right],
\]
where we use the fact that $n(q^*) = m(q^*)$. The required inequality
follows from Lemma~\ref{lem:n} and the fact that $m$ is increasing on $(q^*,1)$.

(b) In order to prove $\sL \sG - \beta \sG=0$ we calculate
\begin{eqnarray*}
\sL \sG - \beta \sG & = &  \frac{x^{1-R}}{1-R}
\left\{ \rule{0mm}{8mm}
(z^2g''(z))^2  \right. \\
&& \hspace{2mm}
+ \left[b_4R\left[g - \frac{1}{1-R} zg'(z)\right]^{\frac{R-1}{R}} - [b_1 + b_2R(1-R)]g
+ (b_3 + 2R)zg'(z)\right] z^2 g''(z)  \\
&& \hspace{2mm}
+ b_4 R [2Rzg'(z) - R(1-R)g] \left[g - \frac{1}{1-R} zg'(z)\right]^{\frac{R-1}{R}}
- [2Rb_1 + R(1-R)b_3]gzg'(z)\\
& & \hspace{2mm} + \left[2Rb_3 - R^2b_2 +R^2\right](zg'(z))^2 + b_1 R(1-R)g^2 \Bigg\} \\
& = &  \frac{x^{1-R}}{1-R} \left(\frac{b_1}{b_4 R}\right)^{-2R}
\left\{
(w(h)w'(h))^2 + \left(
b_1 \left[h - \frac{1}{1-R}w(h)\right]^{\frac{R-1}{R}} - \left[b_1 + b_2R(1-R)\right]h \right.\right. \\
& & \hspace{2mm}
+ \left(b_3 + 2R - 2\right)w(h)\Bigg) w(h)w'(h)  + \left[(2R-1) (b_3 - 1) + R^2(1 - b_2)\right]w(h)^2\\
& & \hspace{2mm}
 + \left[(1 - 2R) b_1 + b_2R(1-R) - R(1-R)b_3\right]hw(h) + b_1R(1-R)h^2 \\
& & \hspace{2mm}
+ b_1\left[(2R-1) w(h) - R(1-R)h\right] \left[h - \frac{1}{1-R}w(h)\right]^{\frac{R-1}{R}} \Bigg\} \\
& = & 0,
\end{eqnarray*}
where the last equality follows from Proposition~\ref{prop:NWw}(2). Note that $\sG$ and $\sL \sG$ are well-defined and continuous at $\theta = 0$.

Now consider $\sM \sG$. We have
\[
\mathcal{M} \sG
=  x^{-R}y\left[\frac{(1+z)}{1-R}g'
\left(z\right) - g\left(z\right)\right].
\]
Hence it is sufficient to show that
$\psi(z) \geq 0$ on $(0,z^*]$ where
\[
\psi\left(z\right)=\frac{1+z}{1-R}-\frac{g\left(z\right)}{g'
\left(z\right)},
\]
It follows from value matching and smooth fit
%$g(z^*) = \left(\frac{b_1}{b_4R}\right)^{-R} n(q^*)^{-R} \left(1+z^{*}\right)^{1-R}$ and $z^* g'(z^*) =\left(\frac{b_1}{b_4R}\right)^{-R} n(q^*)^{-R} (1-R)\left(1+z^{*}\right)^{-R}$. Hence
that $\psi(z^*)=0$ and hence it
it is sufficient to show that $\psi$ is decreasing.
But
\[ %\begin{eqnarray}
\psi^{'}\left(z\right)
=  \frac{R}{1-R}
+\frac{g\left(z\right)g^{''}\left(z\right)}{g^{'}\left(z\right)^{2}}
=
\frac{R}{1-R}
+\frac{h\left[w\left(h\right)w^{'}\left(h\right)-w\left(h\right)\right]}
 {w\left(h\right)^{2}}
\leq  0
\] %end{eqnarray}
where the last inequality follows from the final part of Proposition~\ref{prop:NWw}.

\end{proof}

\begin{prop}
\label{prop:Z}
Let $X^*$, $\Theta^*$, $C^*$ and $\Pi^*$ be as defined in Theorem~\ref{thm:maincase}.
Then they correspond to an admissible wealth process. Moreover $Z_t^* = Y_t
\Theta^*/X^*_t$ satisfies $0 \leq Z^*_t \leq z^*$.
\end{prop}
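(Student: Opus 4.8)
The plan is to derive every assertion from the explicit formulae in Theorem~\ref{thm:maincase}, once the reflected stochastic differential equation defining the pair $(J,L)$ has been shown to be well posed and to confine $J$ to $[0,z^*]$. Observe first that $Z^*_t := Y_t\Theta^*_t/X^*_t = z^*-J_t$ is forced by the very definition $X^*_t = Y_t\Theta^*_t/(z^*-J_t)$, so the claimed inequality $0\le Z^*_t\le z^*$ is literally the statement $0\le J_t\le z^*$; this is where the real work lies.

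I would first check that the coefficients $\tilde\Lambda,\tilde\Sigma,\tilde\Gamma$ are locally Lipschitz on $[0,z^*]$. The function $g$ is $C^{2}$: it is an explicit power function on $(z^{*},\infty)$, its first and second derivatives match at $z^{*}$ by Lemma~\ref{lem:smoothfit}, and on $[0,z^{*}]$ it equals $(b_1/(b_4R))^{-R}h(\ln z)$ with $h'=w(h)$ and $w=(1-R)(\cdot)W(\cdot)\in C^{1}$ (here $W=N^{-1}\in C^{1}$ because $N$ is strictly monotone by Proposition~\ref{prop:NWw}). The two denominators occurring in $\Lambda,\Sigma,\Gamma$ --- the denominator $R(1-R)g-2Rzg'-z^{2}g''$ of $\Psi_g$ and the base $g-\frac{1}{1-R}zg'$ of the power term --- are nonvanishing on the compact $[0,z^{*}]$: for the base, $zg'(z)/g(z)=h'(\ln z)/h(\ln z)=(1-R)W$ and $W\le q^{*}<1$, so $g-\frac{1}{1-R}zg'$ has the sign of $1-R$ times a positive quantity; for the denominator of $\Psi_g$, substituting the Hessian inequality $(1-R)gz^{2}g''+R(zg')^{2}\le 0$ of equation~(\ref{eq:Hessian}) bounds it below in absolute value by $\frac{R}{(1-R)g}\big((1-R)g-zg'\big)^{2}>0$. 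Hence $\Lambda,\Sigma,\Gamma\in C^{1}([0,z^{*}])$ and are bounded; extending them to $\mathbb{R}$ in any locally Lipschitz way, the Skorokhod problem of reflection at $0$ for the resulting diffusion has a unique strong solution $(J,L)$ obtained pathwise from the Skorokhod map~\cite[Lemma VI.2.1]{RY} and a Picard iteration, with $J\ge 0$, $L$ continuous and increasing, and $dL$ carried by $\{J=0\}$.

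Next, the confinement. By inspection $\Lambda(0)=\Sigma(0)=\Gamma(0)=0$, so at $j=z^{*}$ (where $z^{*}-j=0$) the drift and diffusion of $J$ both vanish while $L$ does not act (since $z^{*}\neq 0$); by pathwise uniqueness the constant $z^{*}$ is the only solution started at $z^{*}$, so $z^{*}$ is absorbing for $J$. Since $J$ is continuous and $J_0=z^{*}-\min(z_0,z^{*})\in[0,z^{*}]$, it cannot cross $z^{*}$, so $0\le J_t\le z^{*}$, which is precisely $0\le Z^*_t\le z^{*}$. When $\theta_0>0$ we have in addition $J_0<z^{*}$, and since the diffusion coefficient vanishes only linearly at $z^{*}$ the absorbing point is not reached, so $Z^*_t=z^{*}-J_t>0$ and hence $X^*_t=Y_t\Theta^*_t/(z^{*}-J_t)$ is finite and strictly positive (the case $\theta_0=0$ is the frictionless Merton problem and is degenerate). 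With $X^{*}>0$ in hand the remaining admissibility properties follow: $\Theta^*_t=\Theta^*_0\exp\big(-L_t/(z^{*}(1+z^{*}))\big)$ is continuous, non-increasing and strictly positive with $\Theta^*_{0-}=\theta_0$; $\Pi^*_t=\frac{\lambda}{\sigma}X^*_t\Psi_g(Z^*_t)$ is a continuous functional of $(X^{*},Y,\Theta^{*})$ and so progressively measurable; and $C^*_t=X^*_t\big(g(Z^*_t)-\frac{1}{1-R}Z^*_tg'(Z^*_t)\big)^{-1/R}$ is strictly positive because its base is. Finally, an application of It\^{o}'s formula to $X^*_t=Y_t\Theta^*_t/(z^{*}-J_t)$, matched against the self-financing wealth equation $dX=(\lambda\sigma\Pi+rX-C)\,dt+\sigma\Pi\,dB^{1}-Y\,d\Theta$ with $\Pi=\Pi^{*}$, $C=C^{*}$, $\Theta=\Theta^{*}$, shows that $X^{*}$ is exactly the wealth process of the strategy $(C^{*},\Pi^{*},\Theta^{*})$; indeed $\Lambda,\Sigma,\Gamma$ were defined so that this identity holds.

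I expect the genuinely delicate point to be the confinement, and in particular the non-attainability by $J$ of the absorbing boundary $z^{*}$ from below --- which is what guarantees $X^{*}<\infty$ --- since it depends on the linear vanishing of the diffusion coefficient there and is cleanest to establish by a Feller-type boundary test or by comparison with a geometric Brownian motion. By contrast the It\^{o} verification that $X^{*}$ solves the wealth equation, while lengthy, is entirely mechanical, and the case $R>1$ reproduces every step above once the signs of $1-R$, $g'$, $g''$ and $W$ are flipped in concert.
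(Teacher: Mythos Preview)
Your approach is essentially the same as the paper's: invoke the Skorokhod reflected SDE (citing \cite[Lemma VI.2.1]{RY}) to produce $(J,L)$ with $J\ge 0$, observe that the coefficients $\tilde\Lambda,\tilde\Sigma,\tilde\Gamma$ all vanish at $j=z^{*}$ so that $J$ cannot exceed $z^{*}$, and then apply It\^{o}'s formula to $X^{*}=Y\Theta^{*}/(z^{*}-J)$ to recover the self-financing wealth equation. You are more careful than the paper on two points the authors leave implicit --- the Lipschitz regularity of the coefficients (via the $C^{2}$ structure of $g$ and the nonvanishing of the two denominators) and the non-attainability of $z^{*}$ from below (needed so that $X^{*}$ is finite) --- but these are elaborations of the same argument rather than a different route.
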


\begin{proof}

Note that if $y_0 \theta_0/x_0 > z^*$ then the optimal strategy includes a sale of the endowed asset at time zero, and the effect of the sale is to move to new state variables $(X^*_0, y_0, \Theta^*_0, 0)$ with the property that $Z^*_0 = y_0 \Theta^*_0/X^*_0 = z^*$. Thus we may assume that $Z_0 = y_0 \Theta^*_0/X^*_0 \leq z^*$.

Consider the equation
\begin{equation}
\label{eq:JL}
 \hat{J}_t = \hat{J}_0 - \int_0^t \tilde{\Lambda}(J_s) ds - \int_0^t
\tilde{\Sigma}(J_s) dB^{1}_s -
\int_0^t \tilde{\Gamma}(J_s) dB^{2}_s
+ \hat{L}_t,
\end{equation}
subject to $\hat{J}_0 = (z^* - z_0)^+$. This equation is associated with a stochastic differential equation with reflection (Revuz and Yor~\cite[p385]{RY}) and has a unique solution $(J, L)$ for which $(J, L)$ is adapted, $J \geq 0$, and $L$ is increasing and continuous, $L_0  =0$ and $L$ only increases when $J$ is zero. Let $(J,L)$ be the solution to (\ref{eq:JL}) with these properties.

Note that $\tilde{\Lambda}(z^*) = \Lambda(0) = 0 = \tilde{\Gamma}(z^*) = \tilde{\Sigma}(z^*)$ and hence $J$ is bounded above by $z^*$.

Now let $Z^*_t = z^* - J_t$, and $\Theta^*_t = \Theta^*_0 \exp \{ - L_t/(z^*(1+z^*)) \}$, and note that whenever $L$ is increasing, (equivalently $\Theta$ is decreasing) we have $Z = z^*$. It follows that the dynamics of $Z$ are governed by
\[
dZ^*_t =  \Lambda(Z^*_t) dt + \Sigma(Z^*_t)dB^1_t + \Gamma(Z^*_t) dB^2_t + Z^*_t(1 + Z^*_t) \frac{d\Theta^*_t}{\Theta^*_t}.
\]
Now set
$X^*_t = Y_t \Theta^*_t/Z^*_t$, $C^*_t = X^*_t[g(Z^*_t) - Z^*_t g'(Z^*_t) /(1-R)]^{-1/R}$ and
$\Pi^*_t = \frac{\lambda}{\sigma} X^*_t \Psi_g(Z^*_t)$.
Then $X^*$ and $C^*$ are positive and adapted and moreover
\begin{eqnarray*}
dX^*_t & = & \frac{Y_t\Theta^*_t}{Z^*_t} \left[\frac{d\Theta^*_t}{\Theta^*_t} + \frac{dY_t}{Y_t} - \frac{dZ^*_t}{Z^*_t} +\left(\frac{dZ^*_t}{Z^*_t}\right)^2 - \frac{dY_t}{Y_t}\frac{dZ^*_t}{Z^*_t}  \right]\\
& = & X_t \bigg\{\left[\alpha - \frac{\Lambda(Z_t^*)}{Z_t^*} + \frac{\sigma (Z_t^*)^2}{(Z_t^*)^2} + \frac{\Gamma(Z_t^*)^2}{(Z_t^*)^2} + 2\rho \frac{\Gamma(Z_t^*) \Sigma(Z_t^*)}{(Z_t^*)^2} -\eta \rho \frac{\Sigma(Z_t^*)}{Z_t^*} - \eta \frac{\Gamma(Z_t^*)}{Z_t^*}\right]dt  \\
& & \hspace{10mm} + \left(\eta - \frac{\Gamma(Z_t^*)}{Z_t^*}\right)dB_t^2 - \frac{\Sigma(Z_t^*)}{Z_t^*} dB_t^1 - Z_t^* \frac{d\Theta^*_t}{\Theta^*_t} \bigg\} \\
& = & \left(\lambda \sigma \Pi^*_t + rX^*_t - C^*_t\right) dt + \sigma \Pi^*_tdB_t^1 - Y_t d\Theta^*_t,
\end{eqnarray*}
where we use the definitions of $\Lambda$, $\Sigma$, $\Gamma$ and $\Psi_g$ for the final equality. It follows immediately that $X^*$ is the wealth process arising from the consumption, portfolio and sale strategy $(C^*, \Pi^*, \Theta^*)$.
\end{proof}

\begin{proof}[Proof of Theorem \ref{thm:maincase}.]
First we show that there is a strategy such that the
candidate value function is
attained, and hence that $V \geq G$.

Observe first that if $z_0 = y_0 \theta_0/x_0 > z^*$ then
\[ \theta_0 - \Theta_0^* = \theta_0 \left( 1 - \frac{z^*}{1+z^*}
\frac{1+ z_0}{z_0} \right) \]
and
\[ X^*_0 = x_0 + y_0(\theta_0 - \Theta^*_0) = x_0 \frac{(1+
z_0)}{(1+z^*)} \]
so that $y \Theta_0^*/X^*_0 = z^*$. Then, since $g(z^*)/g(z_0) = (1+z^*)^{1-R}/(1+z_0)^{1-R}$, for $z_0>z^*$ we have
\[ G(X_0^*,y_0,\Theta_0^*,0) =
\frac{(X^*_0)^{1-R}}{1-R} g(z^*) =
\frac{x_0^{1-R}}{1-R} g(z_0) = \sG(x_0,y_0,\theta_0) .
\]
Hence, without loss of generality we may assume that $z_0 \leq z^*$ since if $z_0>z^*$ the agent transacts from $(x_0,y_0,\theta_0)$ to $(X^*_0, y_0, \Theta^*_0)$ at time zero with no change in value function.

For a general admissible strategy define the
process $M=\left(M_{t}\right)_{t\geq0}$ by
\begin{equation}
M_{t}=\int_{0}^{t}e^{-\beta
s}\frac{C_{s}^{1-R}}{1-R}ds+
e^{-\beta t} \sG\left(X_{t},Y_{t},\Theta_{t}\right).\label{eq:15}
\end{equation}
Write $M^*$ for the corresponding process under the proposed
optimal strategy.
Then $M^*_0 = \sG(X_0^*,y_0,\Theta_0^*) = \sG(x_0,y_0,\theta_0)$ so there is no jump
of $M^*$ at $t=0$.
Further, although the optimal strategy may include the sale of a
positive quantity of the risky asset at time zero, it follows from
Proposition~\ref{prop:Z} that
thereafter the process $\Theta^*$ is continuous and such that
$Z^*_t = Y_t
\Theta^*_t / X^*_t \leq z^*$.

From the form of the candidate value function and the definition of
$g$ given in
(\ref{eq:884}), we know that $\sG$
is $\mathbb{C}^{2,2,1}$.
Then applying It\^{o}'s formula to $M_{t}$, using the continuity of $X^*$ and
$\Theta^*$ for $t>0$, and writing
$\sG_\cdot$
as  shorthand
for $\sG_\cdot(X^*_s,Y_s, \Theta^*_s)$
we have
\begin{eqnarray}
M^*_t - M_0 & = & \int_0^t e^{-\beta s}\Big[ \frac{(C^*_s)^{1-R}}{1-R} -C^*_s \sG_{x}+\alpha Y_s\sG_{y} +\lambda\sigma\Pi^*_s \sG_{x}+rX^*_s \sG_{x} \nonumber \\
& & \hspace{15mm} +\frac{1}{2}\sigma^{2}{\Pi^*_s}^{2}\sG_{xx}+\frac{1}{2}\eta^{2}Y_s^{2}\sG_{yy}+\sigma\eta\rho Y_s \Pi^*_s \sG_{xy} - \beta \sG \Big]ds \nonumber \\
& & + \int_0^t e^{-\beta s} (\sG_\theta - Y_s \sG_x) d\Theta^*_s \nonumber \\
& & +  \int_0^t e^{-\beta s} \sigma \Pi^*_s \sG_x dB^1_s \nonumber \\
& & + \int_0^t e^{-\beta s} \eta Y_s \sG_y dB^2_s \nonumber \\
& = & N_t^1 + N_t^2 + N_t^4 + N_t^5. \nonumber
\end{eqnarray}
Since $Z^*_t \leq z^*$, and since $C^*_t =  \sG_x^{-1/R}$ and $\sL \sG - \beta \sG= 0$ for $z \leq z^*$ we
have $N_{t}^{1}=0$. Further, $d \Theta_s \neq 0$ if and only if
$Z^*_t = z^*$ and then $\sM \sG = 0$, so that $N_{t}^{2}=0$.

To complete the proof of the theorem we need the following lemma proved in Appendix~\ref{app:mgs}.

\begin{lem}
\begin{enumerate}
\item $N^4$ given by $N^4_t = \int_0^t e^{-\beta s} \sigma \Pi^*_s \sG_x dB^1_s$ is a martingale under the optimal strategy.
\item $N^5$ given by $N^5_t =  \int_0^t e^{-\beta s} \eta Y_s \sG_y dB^2_s$ is a martingale under the optimal strategy.
\item $\lim_{t \to \infty} \mathbb{E}[e^{-\beta t} \sG(X^*_t, Y_t, \Theta^*_t)] = 0$.
\end{enumerate}
\label{lem:N4N5}
\end{lem}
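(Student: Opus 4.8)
The plan is to reduce every path-functional in the statement to a moment of one well-behaved process. From Proposition~\ref{prop:Z}, under the proposed optimal strategy $Z^*_t = Y_t\Theta^*_t/X^*_t\in[0,z^*]$ and, for $t>0$, $X^*$ and $\Theta^*$ are continuous (and $\Theta^*$ of finite variation). Writing $z=Z^*_t$, each of
\[
\Pi^*_t\sG_x = \tfrac{\lambda}{\sigma}(X^*_t)^{1-R}\Psi_g(z)\big[g(z)-\tfrac{z}{1-R}g'(z)\big],\qquad
Y_t\sG_y = \tfrac{1}{1-R}(X^*_t)^{1-R}zg'(z),
\]
\[
\sG(X^*_t,Y_t,\Theta^*_t) = \tfrac{1}{1-R}(X^*_t)^{1-R}g(z),\qquad
(C^*_t)^{1-R} = (X^*_t)^{1-R}\big[g(z)-\tfrac{z}{1-R}g'(z)\big]^{-(1-R)/R}
\]
equals $(X^*_t)^{1-R}$ times a function of $z$ which, by Proposition~\ref{prop:gprop} and Lemma~\ref{lem:concave} (which give $g>0$, $g(z)-\tfrac{z}{1-R}g'(z)>0$ since $\sG_x>0$, and $\sG_{xx}<0$ so that $\Psi_g$ is finite on $[0,z^*]$), is continuous, hence bounded, on the compact interval $[0,z^*]$. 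The device I would use to make the moment bounds effortless is the liquidation wealth $\mathcal W_t := X^*_t + Y_t\Theta^*_t = (1+Z^*_t)X^*_t$: adding the dynamics of $X^*$ from Proposition~\ref{prop:Z} to $d(Y_t\Theta^*_t)=\Theta^*_t\,dY_t+Y_t\,d\Theta^*_t$, the two occurrences of $Y_t\,d\Theta^*_t$ cancel, leaving
\[
d\mathcal W_t = \mathcal W_t\big[\hat\mu(Z^*_t)\,dt + \hat\sigma_1(Z^*_t)\,dB^1_t + \hat\sigma_2(Z^*_t)\,dB^2_t\big],\qquad \mathcal W_0 = x_0+y_0\theta_0,
\]
with $\hat\mu,\hat\sigma_1,\hat\sigma_2$ continuous (hence bounded) on $[0,z^*]$ and with no local-time term. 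From the exponential representation of $\mathcal W$ and Novikov's criterion one gets $\E[\mathcal W_t^p]<\infty$ for every $p\in\R$ and every finite $t$; and since $\mathcal W_t/(1+z^*)\le X^*_t\le\mathcal W_t$, the same holds for $X^*_t$.

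With this in hand, parts~(1) and~(2) become routine. The integrand of $N^4$ is $e^{-\beta s}\sigma\Pi^*_s\sG_x$, which by the above equals $e^{-\beta s}$ times a bounded function of $Z^*_s$ times $(X^*_s)^{1-R}$, so
\[
\E\int_0^t\big(e^{-\beta s}\sigma\Pi^*_s\sG_x\big)^2\,ds \le C\int_0^t\E\big[(X^*_s)^{2(1-R)}\big]\,ds <\infty\qquad\text{for all }t,
\]
whence $N^4$ is a square-integrable martingale; the identical estimate handles $N^5$, whose integrand is $e^{-\beta s}\eta Y_s\sG_y$.

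For part~(3) I would exploit that consumption is comparable to the value function along the optimal path:
\[
\frac{(C^*_t)^{1-R}}{(1-R)\,\sG(X^*_t,Y_t,\Theta^*_t)} = \frac{\big[g(Z^*_t)-\tfrac{Z^*_t}{1-R}g'(Z^*_t)\big]^{-(1-R)/R}}{g(Z^*_t)},
\]
which is continuous and strictly positive on $[0,z^*]$, hence bounded below by some $\delta_0>0$. Recalling from the proof of Theorem~\ref{thm:maincase} that $N^1\equiv N^2\equiv0$ under the optimal strategy, so that $M^*_t = M^*_0 + N^4_t + N^5_t$ with $M^*_t = \int_0^t e^{-\beta s}\tfrac{(C^*_s)^{1-R}}{1-R}\,ds + e^{-\beta t}\sG(X^*_t,Y_t,\Theta^*_t)$, taking expectations (legitimate by parts~(1)--(2)) gives
\[
\E\big[e^{-\beta t}\sG(X^*_t,Y_t,\Theta^*_t)\big] = M^*_0 - \E\!\int_0^t e^{-\beta s}\tfrac{(C^*_s)^{1-R}}{1-R}\,ds .
\]
For $R<1$ the right side is nonnegative and nonincreasing in $t$, so it converges to some $L_\infty\ge0$ and $\E\!\int_0^\infty e^{-\beta s}\tfrac{(C^*_s)^{1-R}}{1-R}\,ds = M^*_0 - L_\infty<\infty$; but the comparison gives $e^{-\beta s}\tfrac{(C^*_s)^{1-R}}{1-R}\ge\delta_0 e^{-\beta s}\sG(X^*_s,Y_s,\Theta^*_s)$ pathwise, hence $\E\big[e^{-\beta s}\tfrac{(C^*_s)^{1-R}}{1-R}\big]\ge\delta_0\,\E\big[e^{-\beta s}\sG(X^*_s,Y_s,\Theta^*_s)\big]\ge\delta_0 L_\infty$ for every $s$, and integrability over $[0,\infty)$ forces $L_\infty=0$. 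The case $R>1$ is identical after replacing $\sG$ by $-\sG\ge0$ and $(1-R)^{-1}$ by $-(R-1)^{-1}$ throughout.

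The step I expect to carry the weight is the introduction of $\mathcal W$ together with the observation that its dynamics contain no reflecting/local-time term; without it one must control moments of $X^*$ directly, and $X^*$ solves an SDE with a local-time term driven by the sales of the endowed asset, which makes the estimates awkward. The remaining inputs --- boundedness of $\Psi_g$ and positivity of $g$ and of $g(z)-\tfrac{z}{1-R}g'(z)$ on $[0,z^*]$, so that $\delta_0>0$ --- are immediate from Proposition~\ref{prop:gprop} and Lemma~\ref{lem:concave}, and everything else is bookkeeping.
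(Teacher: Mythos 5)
Your argument is correct, but it takes a genuinely different technical route from the paper's. For parts (1)--(2) the paper never estimates moments of $X^*$ directly: it bounds the ratios $\sigma\Pi^*\sG_x/\sG$ (the bound (\ref{eq:pibd})) and $\eta Y_s\sG_y/\sG \le \eta|1-R|$, writes $G$ along the optimal path as $G_0\exp\{-\int_0^t n(q_s)\,ds\}H_t$ with $H$ an exponential martingale with bounded integrands, and deduces the $L^2$ condition from $\E[H_t^2]\le e^{Ct}$; part (3) then follows quantitatively from the identity $\frac{1}{1-R}\sG_x^{(R-1)/R}/\sG = n(q)\ge\min\{n(q^*),1\}>0$, which gives exponential decay of $e^{-\beta t}\sG$ in $L^1$. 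You instead obtain all moments of $X^*_t$ from the liquidation wealth $\mathcal W=(1+Z^*)X^*$, whose SDE has bounded coefficients and, crucially, no local-time term because the $Y\,d\Theta^*$ contributions cancel --- a clean observation that the paper does not use --- and for (3) you replace the decay-rate argument by a soft monotonicity/comparison argument (finiteness of $\E\int_0^\infty e^{-\beta s}(C^*_s)^{1-R}/(1-R)\,ds$ plus the pathwise bound $(C^*)^{1-R}/((1-R)\sG)\ge\delta_0$ forces the limit to vanish); note your $\delta_0$ is exactly the paper's $\min\{n(q^*),1\}$, so the two proofs of (3) rest on the same inequality, yours qualitatively, the paper's with a rate. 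One small point to tighten: the boundedness of $\Psi_g$ and hence of your coefficients on $[0,z^*]$ needs the denominator $R(1-R)g-2Rzg'-z^2g''$ to be bounded away from zero there; plain concavity from Lemma~\ref{lem:concave} only gives a weak inequality, and the strict bound comes from $W(h)\le q^*<1$ together with the inequalities $(1-R)<w'(h)<1-RW(h)$ and $h(1-R)W'(h)<1-W(h)$ of Proposition~\ref{prop:NWw}(3) --- precisely the estimates the paper uses to prove (\ref{eq:pibd}) --- so the fact is available, but you should cite that proposition rather than Lemma~\ref{lem:concave}.
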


Returning to the proof of the theorem, and taking expectations on both sides of $M_t ^* - M_0$, we have
\(
\mathbb{E}\left[M^*_{t}\right]=M_{0},
\)
which leads to
\begin{equation}
G\left(x_{0},y_{0},\theta_{0},0\right)
=\mathbb{E}\left(\int_{0}^{t}e^{-\beta
s}\frac{(C^*_{s})^{1-R}}{1-R}ds\right)
+\mathbb{E}\left[e^{-\beta t}\sG\left(X^*_t,Y_t,\Theta^*_t\right)\right].\label{eq:13}
\end{equation}
Then using Lemma~\ref{lem:N4N5} and applying monotone convergence theorem,
we have
\[
G\left(x_{0},y_{0},\theta_{0},0\right)
= \mathbb{E}\left(\int_{0}^{\infty}e^{-\beta
s}\frac{C_{s}^{*1-R}}{1-R}ds\right) \]
and hence $V \geq G$.

Now we consider general admissible strategies and show that $V \leq G$. Exactly as in (\ref{eq:Mexpansion}) and the proof of Theorem~\ref{thm:deg1} we have
\[ M_t - M_0 = N_t^1 + N_t^2 + N_t^3 + N_t^4 + N_t^5. \]

Lemma~\ref{lem:operator} implies that under general admissible
strategies, $N_{t}^{1}\leq0,\; N_{t}^{2}\leq0$.
Consider the jump term,
\[ %\begin{equation}
N_{t}^{3}= \sum_{0<s\leq t}e^{-\beta s}
\left[\sG\left(X_{s},Y_{s},\Theta_{s}\right)
-\sG\left(X_{s-},Y_{s},\Theta_{s-}\right) - \sG_x (\Delta X)_s
 - \sG_\theta (\Delta \Theta)_s
\right]
\] %\end{equation}
Using the fact that $(\Delta X)_s = -Y_s (\Delta \Theta)_s$ and
writing $\theta = \Theta_{s-}$, $y = Y_s$, $x = X_{s-}$, $\chi = - (\Delta
\Theta)_s$ each non-zero jump in $N^3$ is of the form
\[
(\Delta N^3)_s = e^{-\beta s} \Big\{ \sG(x+y \chi,y,\theta - \chi) - \sG(x,y,\theta,s) +
\chi \left[ \sG_\theta (x,y,\theta,s) - y \sG_x (x,y,\theta,s) \right] \Big\}.
\]
Note that by Lemma~\ref{lem:concave}, $\sG(x+y \chi,y,\theta - \chi)$ is
concave in $\chi$ and hence
$(\Delta N^3) \leq 0$.

For $R<1$ the rest of the proof is exactly as in Theorem \ref{thm:deg1}. The case of $R>1$ will be proved in Appendix~\ref{app:R>1}.
\end{proof}

\section{The Verification Lemma in the second non-degenerate case (scenario 3) with no finite critical exercise ratio.}
Throughout this section we suppose that $b_3 \geq b_{3,crit}(b_1, b_2, R)$ and $b_3 < \frac{b_1}{1-R} + b_2 R$
if $R<1$. It follows that $q^* = 1$ and $z^* = \infty$.

Recall the definition of $\gamma$ in (\ref{eqn:gammadefR<1}) or (\ref{eqn:gammadefR>1}), set $h=\gamma^{-1}$ and let $g$ be given by $g(z) = (R b_4/b_1)^R h(\ln z)$. Define the candidate value function as $G(x,y,\theta) = e^{-\beta t} \sG(x,y,\theta)$ where
\begin{equation}
\sG(x,y,\theta)  = \frac{x^{1-R}}{1-R} g\left(\frac{y\theta}{x}\right), \hspace{10mm} x>0, y >0, \theta \geq 0.
\label{eq:vf2}
\end{equation}
We extend the definition to $y \theta < x \leq 0$ via
\[ g(x,y,\theta) = \frac{(x+y \theta)^{1-R}}{1-R} \left(\frac{Rb_4}{b_1}\right)^R n(1)^{-R}. \]

\begin{proof}[Proof of Theorem~\ref{thm:ndeg2}]
Consider first the following stochastic differential equation with reflection
\[ K_t =  K_0 + \int_0^t \hat{\Lambda}(K_s) ds +\int_0^t \hat{\Sigma}(K_s) dB^{1}_s
+ \int_0^t \hat{\Gamma}(K_s) dB^{2}_s
+ L_t, \]
for which $K_0 = x_0 /(y_0 \theta_0)$. By the same argument as in Proposition~\ref{prop:Z}, this equation has a unique solution $(K,L)$ which is an adapted continuous process for which $K$ is non-negative, $L_0=0$ and $L$ only increases when $K$ is zero.

Let $\Theta^*_t = \theta_0 e^{-L_t}$, $X^*_t = Y_t \Theta^*_t K_t$, $C^*_t = X_t^* [g(1/K_t) - \frac{g'(1/K_t)}{K_t(1-R)}]^{-1/R}$ and $\Pi^*_t = \frac{\lambda}{\sigma} X^*_t \Psi_g(1/K_t)$.
Then $\Theta^*$ is decreasing and $X^* \geq 0$. Then with
\(
dK_t =  \hat{\Lambda}(K_t) dt + \hat{\Sigma}(K_t)dB^1_t + \hat{\Gamma}(K_t) dB^2_t -  \frac{d\Theta^*_t}{\Theta^*_t},
\) and, using $K_t dL_t = 0$ and hence $K_t d \Theta^*_t=0$ also,
\begin{eqnarray*}
dX^*_t & = & d(Y_t \Theta^*_t K_t) =  \Theta^*_t K_t dY_t + Y_t \Theta^*_t dK_t + \Theta^*_t d[Y,K]_t \\
& = & X^*_t\left\{\left[\alpha + \frac{\hat{\Lambda}(K_t)}{K_t} + \frac{\eta \hat{\Gamma}(K_t)}{K_t} + \frac{\eta \rho \hat{\Sigma}(K_t)}{K_t}\right]dt + \frac{\hat{\Sigma}(K_t)}{K_t} dB^1_t + \left(\eta + \frac{\hat{\Gamma}(K_t)}{K_t}\right) dB_t^2 \right\} - Y_t d\Theta^*_t \\
& = & \left(\lambda \sigma \Pi^*_t + rX^*_t - C^*_t\right) dt + \sigma \Pi^*_tdB_t^1 - Y_t d\Theta^*_t
\end{eqnarray*}
where we use the definitions of $\hat{\Lambda}$, $\hat{\Gamma}$ and $\hat{\Sigma}$ for the final equality.
It follows immediately that $X^*$ is the wealth process arising from the consumption and sale strategy $(C^*, \Pi^*, \Theta^*)$, and hence that $X^*$ is admissible.

Given admissibility of the candidate optimal strategy, the rest of the proof follows exactly as in the proof of Theorem~\ref{thm:maincase}, except that Lemmas~\ref{lem:concave}, \ref{lem:operator} and \ref{lem:N4N5} are replaced by the following three lemmas, the proofs of which follow in an identical fashion.
\begin{lem}
\label{lem:2ndconcave}
Fix $y$. Then for $x \geq 0$, $\sG=\sG(x,\theta)$ is concave in $x$ and
$\theta$. In particular, if $\psi(\chi) = \sG(x - \chi y \phi, y , \theta + \chi \phi)$, then $\psi$ is concave in $\chi$.
\end{lem}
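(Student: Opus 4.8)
\noindent The plan is to mirror the proof of Lemma~\ref{lem:concave} essentially line by line; the only genuinely new ingredient needed is the scenario-3 counterpart of Proposition~\ref{prop:NWw}. Since in scenario~3 the candidate value function still has the self-similar form $\sG(x,y,\theta)=\frac{x^{1-R}}{1-R}g(y\theta/x)$, the second derivatives $\sG_{xx},\sG_{x\theta},\sG_{\theta\theta}$ are given by exactly the formulas displayed in the proof of Lemma~\ref{lem:concave}, and in particular
\[ \sG_{xx}\sG_{\theta\theta}-(\sG_{x\theta})^{2}=-x^{-2R}\theta^{-2}\frac{R}{1-R}\left[g(z)\,z^{2}g''(z)+\frac{R}{1-R}(zg'(z))^{2}\right],\qquad z=\frac{y\theta}{x}. \]
As in that proof it therefore suffices to establish (i) the inequality $(1-R)g(z)z^{2}g''(z)+R(zg'(z))^{2}\le 0$, which forces $\det H_\sG\ge 0$ irrespective of the sign of $R-1$; and (ii) that $g$ is concave when $R<1$ and convex when $R>1$, which in either case gives $\sG_{\theta\theta}\le 0$ (since $\sG_{\theta\theta}$ has the sign of $g''/(1-R)$). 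Concavity of $\psi$ then follows from $\psi''(\chi)=\phi^{2}(y,1)H_\sG(y,1)^{T}\le 0$ exactly as before, and the assertion at $x=0$ follows by continuity of $\sG$ and its derivatives across into the region $y\theta<x\le 0$ where $g$ is extended as in the text.

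\medskip
\noindent To verify (i) and (ii) I would first record the link between $g$ and $w(s)=(1-R)sW(s)$. Differentiating the defining integral for $\gamma$ in (\ref{eqn:gammadefR<1}) (respectively (\ref{eqn:gammadefR>1})) gives $\gamma'(v)=\frac{1}{(1-R)vW(v)}=\frac{1}{w(v)}$, so with $h=\gamma^{-1}$ one obtains $h'=w(h)$ and $h''=w'(h)w(h)$, precisely the relations used in the proof of Lemma~\ref{lem:concave}. Since $g(z)=(b_4R/b_1)^{R}h(\ln z)$ this yields $zg'(z)=(b_4R/b_1)^{R}w(h)$ and $z^{2}g''(z)=(b_4R/b_1)^{R}(w'(h)-1)w(h)$, whence
\[ (1-R)g(z)z^{2}g''(z)+R(zg'(z))^{2}=\left(\frac{b_4R}{b_1}\right)^{2R}w(h)\left[(1-R)h\,(w'(h)-1)+R\,w(h)\right]. \]
Thus (i) reduces to a sign statement about $w$ and $w'$, while (ii) reduces to monotonicity of $h$ together with $h''-h'=(w'(h)-1)w(h)$ having the right sign; these are exactly the facts about $w$ that are invoked, via Proposition~\ref{prop:NWw}(3), in the proofs of Proposition~\ref{prop:gprop} and Lemma~\ref{lem:concave}.

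\medskip
\noindent Consequently the whole lemma comes down to proving the scenario-3 counterpart of Proposition~\ref{prop:NWw}: that $N(q)=n(q)^{-R}(1-q)^{R-1}$ is monotone on $[0,1)$ with a well-defined inverse $W$, and that $w(s)=(1-R)sW(s)$ satisfies $1-R<w'(s)<1-Rw(s)/((1-R)s)$ (with the inequalities reversed for $R>1$) --- but now on the full unbounded range of $s$, since here $q^{*}=1$, $z^{*}=\infty$, and there is no finite endpoint $h^{*}$ at which a smooth-fit equality is available. This is the main obstacle: one must rerun the ODE estimates for $w$ from the proof of Proposition~\ref{prop:NWw} without the cut-off, and separately control the limiting behaviour of $w$ and $w'$ as $s$ tends to the (now infinite) endpoint --- equivalently as $z\to\infty$, i.e. $x\downarrow 0$ --- in order both to carry the concavity of $\sG$ down to $x=0$ and to match smoothly with the extension $g(x,y,\theta)=\frac{(x+y\theta)^{1-R}}{1-R}(b_4R/b_1)^{R}n(1)^{-R}$. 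Once those monotonicity and derivative bounds are in hand, everything goes through ``in an identical fashion'' to the proofs of Proposition~\ref{prop:gprop} and Lemma~\ref{lem:concave}, and the concavity of $\sG$, hence of $\psi$, follows at once.
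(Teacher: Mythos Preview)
Your proposal is correct and takes essentially the same approach as the paper, which simply asserts that the proof ``follows in an identical fashion'' to that of Lemma~\ref{lem:concave}. One minor remark: the ``main obstacle'' you flag is smaller than you suggest, since the proof of Proposition~\ref{prop:NWw}(3) in Appendix~\ref{app:n} already uses only $q\in(0,q^*)$, that $(1-R)n$ is decreasing there, and (for $R<1$) the restriction $b_3<b_2R+b_1/(1-R)$ --- all of which hold verbatim in scenario~3 with $q^*=1$ --- so no new ODE estimate is needed and the only residual check is the limiting behaviour as $q\uparrow 1$ to ensure the smooth match with the explicit extension of $\sG$ at $x=0$.
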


\begin{lem}
\label{lem:2ndoperator}
Consider the candidate value function constructed in (\ref{eq:vf}). Then for $x \geq 0$, $\mathcal{L}\sG - \beta \sG= 0$, and $\mathcal{M} G \geq 0$ with equality at $x=0$.
\end{lem}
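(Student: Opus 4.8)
The plan for Lemma~\ref{lem:2ndoperator} is to repeat the proof of Lemma~\ref{lem:operator} almost word for word, using that in this regime $q^*=1$ and $z^*=\infty$, so that there is no ``Merton region'' $\{z>z^*\}$ and the representation $g(z)=(b_4R/b_1)^R h(\ln z)$ holds for every $z\in[0,\infty)$. Write $c=(b_4R/b_1)^R$ and $h=h(\ln z)$. From (\ref{eqn:gammadefR<1}) when $R<1$, and from (\ref{eqn:gammadefR>1}) when $R>1$, one has $\gamma'(v)=1/w(v)$, so the inverse $h=\gamma^{-1}$ satisfies $h'=w(h)$ and $h''=w'(h)w(h)$, whence
\[
z\,g'(z)=c\,w(h),\qquad z^{2}g''(z)=c\,(w'(h)-1)\,w(h),\qquad g(z)=c\,h .
\]
These are exactly the identities used in Lemma~\ref{lem:smoothfit}; the only change is that now $\ln z$ runs over all of $\R$, so they are available on the entire domain.

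For the identity $\sL\sG-\beta\sG=0$, I would substitute the three relations above into the explicit expansion of $\sL\sG-\beta\sG$ obtained in the proof of Lemma~\ref{lem:operator}(b). After regrouping, $\sL\sG-\beta\sG$ becomes $\tfrac{x^{1-R}}{1-R}$ times a positive constant times the left-hand side of the ordinary differential equation for $w$ in Proposition~\ref{prop:NWw}(2), evaluated at $s=h$, and so vanishes; at $z=0$ this is the statement that the Merton constant $b_1/(b_4R)$ is correctly chosen, and as $z\to\infty$ it is consistency with the boundary behaviour. This gives $\sL\sG-\beta\sG=0$ for all $x>0$, and it persists at $x=0$ by continuity of $\sL\sG$ there; the latter is legitimate because $\sG\in C^{2,2,1}$ up to and including $x=0$, which is exactly what the prescribed extension of $\sG$ to $y\theta<x\le0$ secures, matching the one-sided derivatives of the $h$-representation at $x=0$ with those of the explicit form $c\,n(1)^{-R}(x+y\theta)^{1-R}/(1-R)$.

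For $\sM\sG\ge0$, a direct computation gives $\sM\sG=\sG_{\theta}-y\sG_{x}=x^{-R}y\,g'(z)\,\psi(z)$, where, using $g(z)/g'(z)=z/\bigl((1-R)W(h)\bigr)$,
\[
\psi(z)=\frac{1+z}{1-R}-\frac{g(z)}{g'(z)}=\frac{1}{1-R}\Bigl(1-z\,\frac{1-W(h)}{W(h)}\Bigr).
\]
Exactly as in the proof of Lemma~\ref{lem:operator}(b), $\psi'(z)=\tfrac{R}{1-R}+\tfrac{g(z)g''(z)}{g'(z)^{2}}=\tfrac{R}{1-R}+\tfrac{h(w'(h)-1)}{w(h)}$, and the bound $w'(s)<1-Rw(s)/((1-R)s)$ from Proposition~\ref{prop:NWw}(3) gives $\psi'(z)<0$ when $R<1$ and $\psi'(z)>0$ when $R>1$ (the sign of $w(h)$ accounting for the difference), with $\psi'(z)\to0$ as $z\to\infty$ since the equality case $w'(h^{*})=1-Rw(h^{*})/((1-R)h^{*})$ is now attained only in the limit $h^{*}=\infty$. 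The key remaining point is that $\psi(\infty)=0$. This is the analogue, for $z^{*}=\infty$, of the value matching and first- and second-order smooth fit established in Lemma~\ref{lem:smoothfit} for finite $z^{*}$, and it is built into the construction of $\gamma$: the choice of $\infty$ as the endpoint of integration in (\ref{eqn:gammadefR<1}) (resp.\ (\ref{eqn:gammadefR>1})) is precisely what enforces the boundary asymptotics $h(u)\sim n(1)^{-R}e^{(1-R)u}$ as $u\to+\infty$, equivalently $1-W(h(\ln z))\sim 1/z$ as $z\to\infty$. Equivalently, $\sM\sG$ is continuous at $x=0$ with a finite non-zero prefactor $x^{-R}y\,g'(z)$, and it vanishes at $x=0$ because there $\sG$ is proportional to $(x+y\theta)^{1-R}$ and $\sM$ of such a function is identically zero; hence $\psi(z)\to 0$. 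Combining: for $R<1$, $\psi$ decreases to $0$ so $\psi\ge0$, and $g'>0$ (from $g'(z)=c\,w(h)/z$ with $w(h)>0$), so $\sM\sG\ge0$; for $R>1$, $\psi$ increases to $0$ so $\psi\le0$, and $g'<0$ (now $w(h)<0$), so again $\sM\sG=x^{-R}y\,g'(z)\psi(z)\ge0$; in both cases equality holds precisely at $x=0$.

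The step I expect to require real care is this boundary analysis at $x=0$, i.e.\ $z=\infty$. In Lemma~\ref{lem:operator} the matching happened at the interior point $z^{*}$, where Lemma~\ref{lem:smoothfit} had already supplied value matching and first- and second-order smooth fit; here one must instead check directly that the $h$-representation of $\sG$ joins the power form $c\,n(1)^{-R}(x+y\theta)^{1-R}/(1-R)$ in a $C^{2,2,1}$ manner as $x\downarrow0$, which comes down to the asymptotics of $h$ and $W$ encoded in $\gamma$ noted above. Once that is in hand, everything else is a mechanical transcription of the computations used for Lemma~\ref{lem:operator} and in the verification argument for Theorem~\ref{thm:maincase}, now with the role of $z^{*}$ played by $+\infty$.
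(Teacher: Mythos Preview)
Your proposal is correct and takes the same approach that the paper intends: the paper simply asserts that the proof ``follows in an identical fashion'' to that of Lemma~\ref{lem:operator}, and you have carried this out, correctly identifying that the only genuinely new ingredient is replacing the smooth-fit condition $\psi(z^{*})=0$ at a finite $z^{*}$ by the boundary asymptotics $\psi(z)\to 0$ as $z\to\infty$, and deriving those asymptotics from the form of $\gamma$. Your treatment is in fact slightly more careful than the paper's own argument for Lemma~\ref{lem:operator}, in that you explicitly separate the cases $R<1$ (where $g'>0$ and one needs $\psi\ge 0$, hence $\psi$ decreasing) and $R>1$ (where $g'<0$ and one needs $\psi\le 0$, hence $\psi$ increasing), whereas the paper's statement ``it is sufficient to show that $\psi$ is decreasing'' is written only for $R<1$.
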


\begin{lem}
\begin{enumerate}
\item $N^4$ given by $N^4_t = \int_0^t e^{-\beta s} \sigma \Pi^*_s \sG_x dB^1_s$ is a martingale under the optimal strategy.
\item $N^5$ given by $N^5_t =  \int_0^t e^{-\beta s} \eta Y_s \sG_y dB^2_s$ is a martingale under the optimal strategy.
\item $\lim_{t \to \infty} \mathbb{E}[e^{-\beta t} \sG(X^*_t, Y_t, \Theta^*_t)] = 0$.
\end{enumerate}
\label{lem:N4N5b}
\end{lem}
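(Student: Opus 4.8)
The plan is to follow the proof of Lemma~\ref{lem:N4N5} essentially verbatim; the one structural novelty is that now $q^*=1$ and $z^*=\infty$, so every estimate that in the finite-ratio case was only needed on the bounded interval $z\in[0,z^*]$ must here be established on the unbounded half-line. This causes no difficulty because $h^*=N(q^*)=N(1)=\infty$ (resp.\ $0$ when $R>1$), so the identities and inequalities of Proposition~\ref{prop:NWw} are available on all of $(1,\infty)$ (resp.\ $(0,1)$). Throughout I write $z=Y_t\Theta^*_t/X^*_t=1/K_t$, abbreviate $h=h(\ln z)$, and use (established exactly as in Lemma~\ref{lem:smoothfit}) the identities $zg'(z)=(b_4R/b_1)^Rw(h)$, $z^2g''(z)=(b_4R/b_1)^R(w'(h)-1)w(h)$, $\sG_x=(X^*_t)^{-R}(b_4R/b_1)^Rh(1-W(h))$ and $\sG_y=(1-R)^{-1}Y_t^{-1}(X^*_t)^{1-R}(b_4R/b_1)^Rw(h)$.

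The common mechanism is a growth bound: each of $g(z)$, $\Psi_g(z)h(1-W(h))$ and $w(h)$ is continuous, bounded on compact subsets of $[0,\infty)$, equals $O(1)$ as $z\downarrow0$, and is $O(z^{1-R})$ in magnitude as $z\to\infty$. The behaviour at infinity is read off from $N(q)=n(q)^{-R}(1-q)^{R-1}$ together with $n(q)\to n(1)\in(0,\infty)$, which gives $1-W(s)\sim n(1)^{R/(R-1)}s^{1/(R-1)}$ as $s\to\infty$, hence $h(\ln z)\sim n(1)^{-R}z^{1-R}$ and $h(1-W(h))\sim n(1)^{-R}z^{-R}$, and from (\ref{eqn:Psidef}) together with the quadratic relation of Proposition~\ref{prop:NWw}(2), $\Psi_g(z)=O(z)$ (indeed $\Psi_g(z)\sim\mathrm{const}\cdot z$ when $b_2>1$, consistent with the non-vanishing of $\Pi^*$ at $X^*=0$ noted after Theorem~\ref{thm:ndeg2}); the case $R>1$ is entirely similar. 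Since $(X^*_s)^{1-R}z^{1-R}=(Y_s\Theta^*_s)^{1-R}$, it follows that each of $\sG(X^*_t,Y_t,\Theta^*_t)$, the $N^4$-integrand $e^{-\beta s}\lambda(X^*_s)^{1-R}(b_4R/b_1)^R\Psi_g(z)h(1-W(h))$, and the $N^5$-integrand $\tfrac{\eta}{1-R}e^{-\beta s}(X^*_s)^{1-R}(b_4R/b_1)^Rw(h)$ is bounded by a constant times $(X^*_s)^{1-R}+(Y_s\Theta^*_s)^{1-R}$. For parts (1) and (2): $N^4$ and $N^5$ are continuous local martingales, so it suffices that $\mathbb{E}\langle N^4\rangle_t$ and $\mathbb{E}\langle N^5\rangle_t$ be finite for each \emph{finite} $t$ (no uniformity in $t$ is required); by the bound, $\langle N^i\rangle_t\le C\int_0^t e^{-2\beta s}[(X^*_s)^{2(1-R)}+(Y_s\Theta^*_s)^{2(1-R)}]\,ds$, and writing $X^*_s=Y_s\Theta^*_sK_s$ with $Y_s\Theta^*_s\le y_0\theta_0\exp\{(\alpha-\eta^2/2)s+\eta B^2_s\}$ and using that the reflected diffusion $K$ has finite moments of every order on $[0,t]$ (together with finiteness of the negative moments of the lognormal factor when $R>1$), this has finite expectation.

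For part (3), the growth bound gives $\mathbb{E}[e^{-\beta t}\sG(X^*_t,Y_t,\Theta^*_t)]\le C\,\mathbb{E}[e^{-\beta t}((X^*_t)^{1-R}+(Y_t\Theta^*_t)^{1-R})]$, with $\sG$ replaced by $|\sG|$ when $R>1$. Since $\Theta^*_t\le\theta_0$, the endowed term obeys $\mathbb{E}[e^{-\beta t}(Y_t\Theta^*_t)^{1-R}]\le(y_0\theta_0)^{1-R}\exp\{[\alpha(1-R)-\tfrac12\eta^2R(1-R)-\beta]t\}$, which tends to $0$ precisely because the scenario-3 hypothesis $b_3<\tfrac{b_1}{1-R}+b_2R$, rewritten in the original parameters, is the strict reverse of the ill-posedness condition (\ref{eq:pa}), i.e.\ $\beta>\alpha(1-R)-\tfrac12\eta^2R(1-R)$. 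For the liquid term one argues as in the Merton computation underlying Theorem~\ref{thm:deg1}: the feedback controls satisfy $C^*_t/X^*_t=(b_1/(b_4R))[h(1-W(h))]^{-1/R}\ge b_1/(b_4R)>0$ (because $h(1-W(h))\in(0,1)$) and $\Pi^*_t$ is bounded by a multiple of the total wealth $X^*_t+Y_t\Theta^*_t$, so applying It\^{o}'s formula to $e^{-\beta t}(X^*_t+Y_t\Theta^*_t)^{1-R}$ and controlling the drift via the Standing Assumption $b_1>0$ (and the scenario-3 inequality for the terms involving the endowed asset) yields $\mathbb{E}[e^{-\beta t}(X^*_t+Y_t\Theta^*_t)^{1-R}]\le C'e^{-\delta t}$ for some $\delta>0$; letting $t\to\infty$ gives (3). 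The case $R>1$ is identical with all inequalities reversed and with $\gamma$ taken from (\ref{eqn:gammadefR>1}), exactly as in the proof of Lemma~\ref{lem:N4N5}; and the companion Lemmas~\ref{lem:2ndconcave} and \ref{lem:2ndoperator} go through as Lemmas~\ref{lem:concave} and \ref{lem:operator}, now invoking Proposition~\ref{prop:NWw} on the full unbounded range.

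I expect the main obstacle to be the growth bound of the second paragraph, and especially the rate $\Psi_g(z)=O(z)$ at infinity: the leading terms in the numerator and the denominator of (\ref{eqn:Psidef}) cancel as $z\to\infty$ (because there $w(h)\sim(1-R)h$ and $w'(h)\to 1-R$), so the correct asymptotics must be extracted from the quadratic relation for $w$ in Proposition~\ref{prop:NWw}(2) together with the sharp two-sided bounds of Proposition~\ref{prop:NWw}(3). This same cancellation is why the drift estimate in part (3) must invoke $b_1>0$ (and the scenario-3 inequality) sharply rather than through a crude domination.
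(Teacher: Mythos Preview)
Your proposal does not actually follow the proof of Lemma~\ref{lem:N4N5}; it takes a genuinely different route. The paper's argument works via \emph{ratio} bounds: from (\ref{eq:pibd}) and (\ref{eqn:Gy}) one has $|\sigma\Pi^*\sG_x/\sG|\le K_\pi$ and $|\eta Y\sG_y/\sG|\le\eta|1-R|$, and these use only $W(h)\in[0,1]$ together with the two-sided bounds of Proposition~\ref{prop:NWw}(3). One then writes, via (\ref{eqn:G0}), $G(X^*_t,Y_t,\Theta^*_t,t)=G_0\exp\{-\int_0^t n(q_s)\,ds\}H_t$ with $H$ the stochastic exponential of a bounded integrand, hence a genuine martingale. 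Square-integrability of the $N^4$ and $N^5$ integrands follows from $\E[H_t^2]\le e^{Ct}$, and part~(3) from the uniform lower bound $n(q)\ge\min\{n(q^*),1\}$. Every ingredient depends only on $W(h)\in[0,1]$ and on $n(1)>0$, so the proof for $q^*=1$ is literally identical with $n(q^*)$ replaced by $n(1)$; this is why the paper says the proofs ``follow in an identical fashion''.

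Your absolute-growth-bound approach (everything $\lesssim (X^*)^{1-R}+(Y\Theta^*)^{1-R}$) may be salvageable for $R<1$, but it introduces a real gap for $R>1$. There $(Y_s\Theta^*_s)^{1-R}=(Y_s\theta_0)^{1-R}e^{(R-1)L_s}$, and you give no control on the exponential moments of the local time $L_s$ of $K$ at zero; ``finiteness of the negative moments of the lognormal factor'' addresses only $Y_s$, not $\Theta^*_s$. Moreover, for $R>1$ the hypothesis of Theorem~\ref{thm:ndeg2} imposes no upper bound on $b_3$, so the scenario-3 inequality you invoke to force decay of the endowed-asset term is simply not available. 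Even for $R<1$ your treatment of part~(3) is incomplete: the claim $h(1-W(h))\in(0,1)$ is $n(q)\ge 1-q$, which is not proved anywhere, and the drift estimate for $e^{-\beta t}(X^*_t+Y_t\Theta^*_t)^{1-R}$ is only asserted. The ratio-bound route bypasses all of this: no moment bounds on $K$ or $\Theta^*$ are ever needed, because every integrand is dominated by $|G_0|H_t$, and the decay in (3) comes straight from $\int_0^t n(q_s)\,ds\ge\min\{n(1),1\}\,t$.
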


\end{proof}

\section{Comparative Statics}

The key to our results on comparative statics is contained in Lemma~\ref{lem:monbi} and Corollary~\ref{cor:pmonb}. In the section on the
problem formulation we showed how the original parameters only affect the solution via four key parameters $b_1$, $b_2$, $b_3$ and $b_4$. Here $b_3$ is the effective Sharpe ratio of the endowed asset and measures the excess expected return, net of any expected growth from correlation with the market asset. $b_1$ is an effective discount parameter, taking account of the investment opportunities in the market. $b_4$ is a measure of the idiosyncratic risk of the endowed asset, and only affects the solution via a scaling of the value function --- idiosyncratic risk also enters the other parameters $b_i$. Finally, $b_2$ is the hardest parameter to interpret, but is a measure of the extent to which the investment motive and the hedging motive cancel with each other.

In this discussion we focus on the critical ratio $z^* = q^*/(1-q^*)$ of endowed wealth to liquid wealth at which sales occur, and the certainty equivalent value $p = p(x,y,\theta)$ of the holdings in the endowed asset. From Lemma~\ref{lem:monbi} we conclude that the critical ratio is decreasing in $b_1$ and $b_2$ and increasing in $b_3$ and from Corollary~\ref{cor:pmonb} we conclude that the certainty equivalent value is similarly decreasing in $b_1$ and $b_2$ and increasing in $b_3$.

The monotonicity in $b_3$ is straightforward to interpret, and has a clear intuition. The greater the effective Sharpe ratio the more valuable the holdings
in the endowed asset, and the longer the agent should hold units of it in her portfolio. The dependence on $b_1$ is also as expected. The greater the effective
discount parameter, the greater the incentive to bring forward consumption which needs to be financed by sales of the risky asset --- thus the endowed
asset is sold sooner. Then there is less opportunity for the benefits from the expected growth of the endowed asset to be enjoyed, thus reducing its value.
 The dependence on $b_2$ is less easy to interpret, but the lemma and its corollary show that there is monotonicity in this parameter also.

In the preceding paragraphs we have discussed the comparative statics in terms of the derived parameters. In order to understand the comparative statics with
respect to the original parameters $r$, $\beta$, $\mu$, $\sigma$, $\alpha$, $\eta$ and $\rho$ we need to consider how the auxiliary parameters depend on these original parameters.

The parameters $\beta$ (discount rate) and $\alpha$ (mean return on the endowed asset) only affect one of the parameters $(b_i)_{i=1,2,3}$, and hence the
comparative statics for these parameters are straightforward. In particular, decreasing $\beta$ or increasing $\alpha$ increases the critical ratio of endowed wealth to liquid wealth $z^*$ at which sales occur, and increases the certainty equivalent value $p=p(x,y,\theta)$. However, the parameters $r$, $\mu$, $\sigma$, $\eta$ and $\rho$ each enter into the definitions of
$b_1$, $b_2$ and $b_3$. Hence the comparative statics with respect to these parameters is more complicated, and in general there is no monotonicity of the
critical ratio or the certainty equivalent value with respect to any of these parameters. For example, an increase in the volatility
$\eta$ of the endowed asset decreases
$b_1$ and may increase or decrease the value of $b_2$ or $b_3$ depending on the values of other parameters. Thus, the effects of a change in the volatility of the
endowed asset on the critical ratio or on the certainty equivalent price are generally mixed.

We restrict our comments on the consumption and investment rate to the following observation about the critical ratio and the Merton line.

Consider an investor who
is free to buy and sell units of $Y$ with zero transaction cost. Then solving the classical Merton problem with two risky assets we find that it is optimal for the
agent to invest a constant fraction  $\frac{(\zeta - \lambda \rho)}{R \eta (1-\rho^2)} = \frac{b_3}{2R}$ of their total wealth in the risky asset.

In contrast, the constrained investor chooses to keep the fraction of his total wealth invested in the endowed asset below $q^*$, i.e. to choose $\Theta_t$ to
ensure that $\frac{Y_t \Theta_t}{X_t} \leq z^* = \frac{q^*}{1-q^*} =z^*$, or equivalently
$\frac{Y_t \Theta_t}{X_t + Y_t \Theta_t} \leq q^*$.
But, it follows from Lemma~\ref{lem:n} that $q^* > \frac{b_3}{2R}$. Hence the `no-sale' region for the constrained investor contains
in its interior the `Merton line' of portfolio
positions for the unconstrained investor.

\appendix
\section{Properties of $n$: proofs}
\label{app:n}

\begin{proof}[Proof of Lemma~\ref{lem:varphi}]
The values of $\varphi$ and $\upsilon$  at $m(q)$ and $\ell(q)$ follow on substitution.
If $b_2 > 1$ then it is immediate from the definition that $(1-R)\upsilon(q,n)<0$. To see that $\upsilon$ is increasing in $n$, note that
\[
\frac{\partial \upsilon}{\partial n} = \frac{\partial \varphi}{\partial n}\left\{1 - \frac{\sgn(1-R) \varphi}{\sqrt{\varphi^2 + E(q)^2}}\right\} >0,
\]
where we use that $\partial \varphi/\partial n = b_1 > 0$. Finally to see that $(1-R)\upsilon$ is concave in $n$, note that
\[
\sgn(1-R)\frac{\partial^2 \upsilon}{\partial n^2} = - \frac{b_1^2 E(q)^2}
{\left(\varphi^2 + E(q)^2 \right)^{3/2}}<0.
\]
\end{proof}

\begin{proof}[Proof of Lemma~\ref{lem:n'}]
First we prove the equivalence of (\ref{eq:nA}) and (\ref{eq:nB}).

Consider
\begin{eqnarray*}
\lefteqn{b_1(\ell(q) - n) + \varphi(q, n)} \\
 & = & R(1-R)q^2 - b_3(1-R)q + b_1 - b_1 n + (1-R)q(1-q) +
\frac{(b_2 - 1)R(1-R)q}{(1-R)q + R}\\
& & \hspace{5mm} + b_1 n - b_1 + b_3(1-R)q + R(1-R)[-2q + 2 - b_2]\\
& = & R(1-R)\left[(1 - q)^2 - (b_2 - 1) + \frac{(b_2 - 1)q}{(1 - R)q + R}\right] + (1-R)q(1-q) \\
& = & (1-R)(1-q)[R(1-q) + q] - \frac{(b_2 - 1)R^2(1-R)}{(1-R)q + R} (1-q).
\end{eqnarray*}
Then, noting that $(1-R)q + R= R(1-q) + q$,
\[
b_1 [(1-R)q + R](\ell(q) - n) = (1-R)(1-q)[R(1-q) + q]^2 - R^2(1-R)(b_2 -  1)(1-q) - \varphi(q,n)[R(1-q) + q],
\]
and multiplying by $4(1-R)(1-q)$,
\begin{eqnarray*}
\lefteqn{4b_1(1-R)(1-q)[(1-R)q + R](\ell(q) - n)} \\
& = & 4(1-R)^2(1-q)^2[R(1-q) + q]^2 - 4\varphi(q,n)(1-R)(1-q)[R(1-q) + q] + \varphi(q,n)^2 \\
& & \hspace{5mm} - \{\sgn(1-R)\}^2 \left(\varphi(q,n)^2 + 4R^2(1-R)^2(b_2 - 1)(1-q)^2\right) \\
& = & \left\{2(1-R)(1-q)[R(1-q)+q] - \varphi(q,n)\right\}^2 -\left\{\sgn(1-R)\right\}^2
\left\{\varphi(q,n)^2 + E(q)^2\right\}.
\end{eqnarray*}
Writing this last expression as the difference of two squares we find
\begin{eqnarray*}
\lefteqn{
\frac{4b_1(1-R) (1-q) [(1-R)q + R](\ell(q) - n)}{2(1-R)(1-q)[(1-R)q + R]
- \varphi(q,n) -\sgn(1-R)\sqrt{\varphi(q,n)^2 + E(q)^2}}
} \\
&= & 2(1-R)(1-q)[R(1-q) + q] - \upsilon(q,n),
\end{eqnarray*}
from which the result follows, on dividing by
$2b_1 R(1-q)[(1-R)q + R] (\ell(q) - n)/((1-R)q)$.

Now consider the equivalence of (\ref{eq:nA}) and (\ref{eq:nC}). We have, starting with (\ref{eq:nA}),
\begin{eqnarray*}
\lefteqn{\frac{(1-R)n}{R(1-q)} \left\{1 - \frac{(1-R)q(1-q)}{b_1 (\ell(q) - n)} + \frac{q\upsilon(q,n)}
{2b_1 [(1-R)q + R] (\ell(q) - n)}\right\} } \\
& = & \frac{(1-R)n \left\{2b_1 (\ell(q) - n) [(1-R)q + R]
-2[(1-R)q + R](1-R)q(1-q) + q\upsilon(q,n) \right\}}{2b_1R[(1-R)q + R](1-q) (\ell(q) - n)} \\
& = & \frac{(1-R)n \left\{2b_1 [(1-R)q + R] \left[(\ell(q) - m(q)) - (n - m(q)) - \frac{(1-R)q(1-q)}{b_1}\right] + q\upsilon(q,n)\right\}}
{2R(1-q)[(1-R)q + R]\left\{S(q) - b_1(n - m(q))\right\}}.
\end{eqnarray*}
The result then follows since
\[
2b_1 [(1-R)q + R] \left\{\ell(q) - m(q) - \frac{(1-R)q(1-q)}{b_1}\right\} = 2R(1-R)(b_2 - 1)q = -q\upsilon(q,m).
\]
\end{proof}

\begin{proof}[Proof of Lemma~\ref{lem:n}]

(1) From the expression (\ref{eq:n}) and l'H\^{o}pital's rule,
$n'(0)=\chi$ solves
\[
\chi = \frac{1-R}{R} - \frac{(1-R)^2}{b_1 R} \frac{1}{\ell'(0) - \chi}+
\frac{(1-R)}{2 b_1 R^2} \frac{\upsilon(0,1)}{\ell'(0) - \chi} ,
\]
where we have
\[
\upsilon(0,1) = 2R(1-R) - b_2 R(1-R) - \sgn(1-R) b_2 R |1-R| = 2R(1-R)(1 - b_2),
\]
and $\ell'(0) = (b_2 - b_3)(1-R)/b_1$. This gives
\[
\chi = \frac{1-R}{R} - \frac{b_2 (1-R)^2}{R\left[(b_2 - b_3) (1-R) - b_1 \chi \right]},
\]
or equivalently, we have that $\chi$ solves $\Phi(\chi) = 0$.
Further,
\[
\Phi\left(\ell'(0)\right) = \Phi\left(\frac{(b_2 - b_3)(1-R)}{b_1}\right) = -(1-R)^2  b_2 < 0.
\]
For $R<1$, we have $n'(0)< \ell'(0)$ by hypothesis, so that $n'(0)$ is the smaller root
of $\Phi$. For $R>1$, we have $n'(0)>\ell'(0)$ by hypothesis and $n'(0)$ is the larger root of $\Phi$.

(2) For $R<1$, $n'(0)<\ell'(0)$ so that initially $n<\ell$. Then, from (\ref{eq:nC}), $\lim_{n \uparrow \ell(q)} F(q,n) =- \infty$. Hence $n(q)< \ell(q)$, at least until $q=1$ or $\ell$ hits zero.
The argument for $R>1$ is similar.

(3) It is clear from (\ref{eq:nC}) that $F(q,m(q))=0$. Also, for $q \leq q_n$ so that $(1-R)(\ell - n) > 0$,
the sign of $F(q,n(q))$ is opposite to the sign of the factor $D = D(q, m(q), n(q))$ where
\[D(q,m(q),n) = 2 b_1 [(1-R)q+R](n-m(q)) - qv(q, n) + qv(q,m(q)).\]
But $\partial \varphi/\partial n = b_1$, and so
\[ \frac{\partial D}{\partial n} = 2 b_1 [(1-R)q+R] - q b_1 \left[ 1 - \frac{\sgn(1-R) \varphi(q,n)}{ \sqrt{\varphi(q,n)^2 + E(q)^2}} \right] > 2 b_1 [(1-R)q+R]
- 2 q b_1 = 2R(1-q) b_1 > 0. \]
Hence, $D$ is increasing in $n$ and $D(q,m(q),n)>0$ if and only if $n(q)>m(q)$.

(4) If $R>1$ then $n(q)$ is increasing on $[0,q^*]$. In particular, $q_n > q^*$ unless $q^*=1$ whence $q_n = 1 = q^*$. If $R<1$, then $n(q) \leq \ell(q)$ on $(0,1)$. But from (\ref{eq:nA}) we see that $n$ cannot hit zero strictly before $\ell$. The result follows since $\ell$ is concave, so $q_{\ell} < 1$ if and only if $\ell(1)<0$.

(5) We can only have $q^*<1$ if $(1-R)m'(1)>0$.
For $R<1$, we must have $n'(q^*) = 0 < m'(q^*)$. But $m$ has a minimum at
$b_3/2R$, so $q^*> b_3/2R$. For $R>1$, we must
have $n'(q^*) = 0 > m'(q^*)$. But $m$ has a maximum at
$b_3/2R$, so $q^*> b_3/2R$.

\end{proof}

\begin{proof}[Proof of Lemma~\ref{lem:monbi}]
We consider monotonicity in $b_3$. The monotonicity results for $b_2$ and $b_1$ can be proved in a similar fashion, and are generally easier, since in the case of $b_2$, $m$ does not depend on $b_2$ and in the case of $b_1$ the dependence of $m$ on $b_1$ can be eliminated by a change of co-ordinates to $\hat{m}(q) = b_1(m(q)-1)$, with similar expressions for $\hat{n}$ and $\hat{l}$.

For fixed $b_1 > 0$, $b_2 \geq 1$ and $R$, write $n(\cdot) = n(\cdot; b_3)$. Consider $\phi(b_3) = n'(0;b_3)$. Differentiating $\Phi$ with respect to $b_3$ we find
\[
\frac{\partial \phi}{\partial b_3}  =  \frac{(1-R)\left(\frac{1-R}{R} - \phi \right)}{2b_1 \phi + (1-R)\left(b_3 - b_2 - \frac{b_1}{R}\right)}
 =  - \frac{(1-R)}{b_1} \left\{ \frac{\left(\frac{1-R}{R} - \phi \right)}{\left(\frac{1-R}{R} - \phi \right) + b_1\left(\frac{(1-R)(b_2 - b_3)}{b_1} - \phi \right)} \right\}.
\]
Suppose $R<1$. Then $\phi<0$ and $\phi<\ell'(0) = (1-R)(b_2 - b_3)/b_1$, and
\begin{equation}
0>\frac{\partial \phi}{\partial b_3} > -\frac{1-R}{b_1} = \frac{\partial m'(0)}{\partial b_3}.
\label{eq:zetainequal}
\end{equation}
Now suppose $R>1$. Then $\phi>0$ and $\phi>\ell'(0) = (1-R)(b_2 - b_3)/b_1$, and $0<\partial \phi/\partial b_3 < (R - 1)/b_1 = \partial m'(0)/\partial b_3$.

Differentiating (\ref{eq:nB}) with respect to $b_3$ we find
\[
\frac{\partial}{\partial b_3} F(q,n;b_3)\bigg|_{q = \tilde{q}, n = \tilde{n}} = \frac{2(1-R)^2 \tilde{q} \tilde{n}}
{RD(\tilde{q}, \tilde{n}; b_3)^2} \frac{\partial}{\partial b_3} D(\tilde{q}, \tilde{n}; b_3),
\]
where
$D(q, n; b_2) = 2(1-R)(1-q)[(1-R)q + R] - \varphi(q, n; b_2) - \sgn(1-R)\sqrt{\varphi(q,n;b_2)^2 + E(q; b_2)^2}$.
Then
\[
\frac{\partial}{\partial b_3} D(q, n; b_3) = - \frac{1}{\sqrt{\varphi(q,n;b_3)^2 + E(q)^2}}  \left\{\sqrt{\varphi(q,n;b_3)^2 + E(q)^2} + \sgn(1-R)\varphi(q,n;b_3)\right\} \frac{\partial \varphi}{\partial b_3} .
\]
Since $\partial \varphi/\partial b_3 = (1-R)q$, we find $\partial F/\partial b_3|_{q = \tilde{q}, n = \tilde{n}}$ has the opposite sign to $1-R$.

Now suppose $\overline{b}_3>\underline{b}_3$ and let $\overline{n}(q) = n(q; \overline{b}_3)$ and $\underline{n}(q) = n(q; \underline{b}_3)$. Comparing derivatives at zero we conclude that initially $(1-R)\overline{n} < (1-R)\underline{n}$. Also if $\overline{n}(\cdot)$ and $\underline{n}(\cdot)$ cross at some point $(\tilde{q}, \tilde{n})$ with $\tilde{q} < q^*(\overline{b}_3) \wedge q^*(\underline{b}_3)$ then it follows from our knowledge of $\frac{\partial}{\partial b_3} F(q,n; b_3)$ that $(1-R)\overline{n}'<(1-R)\underline{n}'$. But at a first crossing, $(1-R)\overline{n}$ must cross $(1-R)\underline{n}$ from below which is a contradiction. Hence $(1-R)n(\cdot; b_3)$ is decreasing in $b_3$ for $q \leq q^*$.

This alone is not sufficient to make conclusions about $q^*(b_3)$ since $m$ also depends on $b_3$.
Define $\zeta(q;b_3) = n(q;b_3) - m(q;b_3)$, and let $m_0(q) = 1 + R(1-R)q^2/b_1$ so that $m(q;b_3) = m_0(q) - b_3(1-R)q/b_1$. Then $d \zeta/ d q = H(q,\zeta;b_3)$ where
\[
H(q, \zeta; b_3) = - \frac{(1-R)B_1(q,\zeta(q;b_3); b_3) B_2(q,\zeta(q;b_3); b_3)}{2R(1-q)[(1-R)q + R] \left\{S(q) - b_1 \zeta\right\}} - \frac{2R(1-R)q}{b_1} + \frac{b_3(1-R)}{b_1},
\]
and
\begin{eqnarray*}
B_1(q,\zeta;b_3) & = & \zeta + m_0(q) - \frac{b_3(1-R)}{b_1} q \\
B_2(q,\zeta;b_3) & = & 2b_1[(1-R)q + R]\zeta - q\upsilon \left(q, m_0(q) - \frac{b_3(1-R)}{b_1} + \zeta \right) + q\upsilon \left(q, m_0(q) - \frac{b_3(1-R)}{b_1}\right).
\end{eqnarray*}
Suppose $R<1$. Fix $q$ and $\zeta > -m(q)$. Then $B_1$ is positive and decreasing in $b_3$, and $B_2$ is positive and decreasing in $b_3$, since ${\upsilon}$ is concave in $n$. Hence the product $B_1 B_2$ is decreasing in $b_3$ and $H$ is increasing in $b_3$.
By the result in (\ref{eq:zetainequal}), $\zeta'(0) = \phi - m'(0)$ is increasing in $b_3$, so that at least initially, $\zeta$ is increasing in $b_3$. Then since $H$ is increasing in $b_3$, it follows that solutions of $\zeta'(q;b_3) = H(q,\zeta;b_3)$ for different $b_3$ cannot cross, and hence $\zeta(q;b_3)$ is increasing in $b_3$. Thus $q^*(b_3)$ is increasing in $b_3$. Similar arguments apply when $R>1$.

Finally, consider $\frac{\partial}{\partial b_3} n(q^*(b_3))$  over the interval where $q^*(b_3) \in (0,1)$. We have
\[
\frac{\partial}{\partial b_3}n(q^*(b_3)) = \frac{\partial n}{\partial b_3}\bigg|_{q^*(b_3)} + n'(q^*(b_3))\frac{\partial q^*}{\partial b_3}.
\]
But $n'(q^*) = 0$ and hence $n(q^*(b_3))$ is decreasing since $n(q;b_3)$ is decreasing in $b_3$.
\end{proof}

\begin{proof}[Proof of Corollary~\ref{cor:pmonb}]
In the case where $0<q^*<1$ extend the domain of $h$ to $(-\infty,\infty)$ by
\[ h(u) = \frac{(1+e^u)^{1-R}}{(1+ e^{u^*})^{1-R}} h^*  \hspace{10mm}  u > u^*. \]
Then $g(z) = (b_4 R/b_1)^R h(\ln z)$ for all $z \in [0,\infty)$. The monotonicity results for $p$ will follow if $(1-R)h$ is decreasing in $b_1$ and $b_2$ and
increasing in $b_3$.

We focus on monotonicity in $b_1$ for the case $R<1$; the proof of monotonicity in $b_2$ and $b_3$ and for $R>1$ follows similarly.

Fix $b_2$ and $b_3>0$ and suppose $b_1 > ((1-R)b_3 - b_2 R)^+$, to ensure that we are not in the case where the value function is infinite.

Given $n(q;b_1)$ defined on $[0, q^*(b_1)]$ extend the domain of definition to $[0,1]$ by setting $n(q;b_1) = n(q^*(b_1);b_1)$ for $q > q^*(b_1)$.
Let $N(q;b_1) = (1-q)^{-(1-R)} n(q;b_1)^{-R}$ defined on $[0,1]$ and let $W(\cdot;b_1)$ be inverse to $N(\cdot;b_1)$.

Then, for each $q \in (0,1]$, $n(q;b_1)$ is decreasing in $b_1$ and $N(q;b_1)$ is increasing in $b_1$.
It follows that $W(\cdot;b_1)$ and $w(\cdot;b_1)$ are increasing in $b_1$.

We know that $\lim_{u \uparrow \infty} e^{-(1-R)u}h(u) = n(1)^{-R} = n(q^*)^{-R}$ is decreasing in $b_1$.
We want to argue that $h$ is decreasing in $b_1$ for all $u$.

Fix $\underline{b}_1 < \overline{b}_1$. Suppose there exists $u \in (-\infty,\infty)$ such that $h(u;\underline{b}_1) = h(u;\overline{b}_1)$
and let
$\tilde{u}$ be the largest such $u$; set $h(\tilde{u};\underline{b}_1) = h(\tilde{u};\overline{b}_1)
=\tilde{h}$. Then $h(u;\underline{b}_1) < h(u;\overline{b}_1)$ for all $u > \tilde{u}$, and we must have
$\frac{dh}{du}(\tilde{u};\underline{b}_1) \leq \frac{dh}{du}(\tilde{u};\overline{b}_1)$, or equivalently
$w(\tilde{h};\underline{b}_1) \leq w(\tilde{h};\overline{b}_1)$. But $w$ is increasing in $b_1$ contradicting the hypothesis that $h$ is not
decreasing in $b_1$.

\end{proof}

\begin{proof}[Proof of Lemma~\ref{lem:b3critbound}]
If $b_3>0$ and $m$ is monotonic then we must have $q^*=1$. Hence $b_{3,crit} \leq 2R$.

If $b_3>0$, $R<1$ and $\ell(1) \leq 0$ then $q^*=1$. Hence for $R<1$ and $\ell(1) \leq 0$ we must have $b_{3,crit} \leq b_2 R + \frac{b_1}{1-R}$.

These arguments show that $b_{3,crit}(b_1,1,R) \leq \bar{b}_3$. It remains to show that for $b_2=1$ and $0< b_3 < \bar{b}_3$ we have $q^*=1$.

If $b_2=1$, then $\ell(1)=m(1)$, and provided $b_3 < \bar{b}_3$, $\ell(1)>0$. Suppose $R<1$, the case $R>1$ being easier. If $b_3 \leq R$ then $m(1) \geq m(0)=1$ and since $n$ is decreasing we must have $q^*<1$.

So suppose $R<b_3 < \bar{b}_3$. By the arguments in Proposition~\ref{prop:NWw}(3) we have $n(q) \leq k(q) = 1 + \frac{q(R-b_3)(1-R)}{b_1}$. Also
\[ (1-q) F(q,m(1)) = m(1) \left[ \frac{1-R}{R} - \frac{(1-R)^2}{b_1 R}  q \frac{(1-q)}{\ell(q) - \ell(1)} \right] \rightarrow
m(1) \frac{1-R}{R} \left[ 1 + \frac{(1-R)}{b_1}   \frac{1}{\ell'(1)} \right]. \]
But $\frac{b_1 \ell'(1)}{1-R} = 2R - b_3 - 1 = (R-1) + (R-b_3) \in (-1,0)$. Hence $\lim_{q \uparrow 1} (1-q)F(q,m(1)) = \kappa \in (-\infty,0)$. Since for $n \geq m(1)$ we must have $n$ crosses the horizontal line at height $m(1)$ before $q=1$. Then also $q^*<1$.

(2)
Note first that
\[\lim_{b_2 \to \infty} \frac{\ell(q)}{b_2} = \frac{R(1-R)}{b_1} \frac{q}{(1-R)q + R},
\hspace{7mm}
\lim_{b_2 \to \infty} \frac{\upsilon(q,n)}{b_2} = -2R(1-R).
\]
It follows immediately by l'H\^{o}pital's rule that
\begin{equation}
\lim_{b_2 \to \infty} F(q,n)  =  n \left\{ \frac{1-R}{R(1-q)} + \frac{(1-R)q}{2b_1R(1-q)[(1-R)q + R]} \lim_{b_2 \to \infty} \frac{v(q,n)/b_2}{\ell(q)/b_2} \right\} = 0.
\end{equation}
Then, if $n_\infty(q) = \lim_{b_2 \uparrow \infty}n(q;b_2)$ we have $n_{\infty}'(q)=0$,
which implies $n(q) = n(0) = 1$. It is easy to see that $n_\infty$ crosses $m$ at some $q^* \in (0,1)$ if and only if $(1-R)m(1) > (1-R)m(0) = (1-R)$ which is equivalent to $0<b_3<R$. Otherwise, we have $q^* = 1$.

The final statement follows from the monotonicity of $q^*$ with respect to $b_2$ and $b_3$.

\end{proof}

\begin{proof}[Proof of Proposition~\ref{prop:NWw}.]

(1) $N$ solves
\[
\frac{N'(q)}{N(q)} = \frac{2(1-R)^2q(1-q)[(1-R)q + R] - (1-R)q \upsilon\left(q, (1-q)^{1 - \frac{1}{R}}N(q)^{-\frac{1}{R}}\right)}{2b_1 (1-q) [(1-R)q + R] \left\{\ell(q) - (1-q)^{1 - \frac{1}{R}}N(q)^{-\frac{1}{R}}\right\}}
\]
Suppose $R<1$. We have $\upsilon(q,n) \leq 0$ (see Lemma~\ref{lem:varphi})
and, for $q \leq q^*$, $n(q)=(1 - q)^{1 - 1/R} N(q)^{-1/R} \leq \ell(q)$. Hence $N$ is increasing. For $R>1$, $\upsilon (q, (1-q)^{1 - 1/R}N(q)) \geq 0$ and $n(q)=(1 - q)^{1 - 1/R} N(q)^{-1/R} \geq \ell(q)$. Hence, $N$ is decreasing.

(2) Note that if $s=N(q)$ then $W'(s)=1/N'(q)$ and so
$W$ solves
\[ %\begin{equation}
W'(s) = \frac{2b_1 (1-W(s)) [(1-R)W(s) + R] \left\{\ell(W(s)) - (1-W(s))^{1 - \frac{1}{R}}s^{-\frac{1}{R}}\right\}}{s\left\{2(1-R)^2 W(s) (1-W(s))[(1-R)W(s) + R] - (1-R)W(s) \upsilon\left(W(s), (1-W(s))^{1 - \frac{1}{R}}s^{-\frac{1}{R}}\right)\right\}}
%\label{eq:W}
\] %\end{equation}
The expression for $w'$ follows after some lengthy algebra.

(3) We have that $n'(q) = F(q, n(q))$ and then from the representation (\ref{eq:nB}) and the definition of $N$ it follows that
\[
\frac{N'(q)}{N(q)} =  \frac{2(1-R)^2 q}{2(1-R)^2(1-q)\left(q + \frac{R}{1-R}\right)-\varphi\left(q,n(q)\right)
  -\sgn(1-R) \sqrt{\varphi\left(q,n(q)\right)^2 + E(q)^2}},
\]
where $n(q) = (1-q)^{1 - \frac{1}{R}} N(q)^{-1/R}$.
Then an alternative representation for $W'$ is
\[
s(1-R)W'(s) =  \frac{2(1-R)(1-W(s))\left((1-R)W(s) + R\right) -\hat{\varphi}
 - \sgn(1-R) \sqrt{\hat{\varphi}^2 + E(W(s))^2}}{2(1-R)W(s)}.
\]
where $\hat{\varphi} = \varphi\left(W(s),(1-W(s))^{1 - \frac{1}{R}} s\right)$.

We want to show for $s \in (1, h^*)$, $1 - R < w'(s) < 1 - \frac{Rw(s)}{(1-R)s}$. This is equivalent to
\begin{equation}
\label{eq:Wineq}
\frac{(1-R)(1 - W(s))}{s}  <  (1-R)W'(s)  <  \frac{1 - W(s)}{s} .
\end{equation}
Consider the second of these inequalities. It can be rewritten as either
$\frac{1}{1-R} \frac{N'(q)}{N(q)}  >  \frac{1}{1-q}$ or $(1-R) \frac{n'(q)}{n(q)}  <  0$, the latter of which is immediate since $(1-R)n$ is decreasing on $(0,q^*)$. Now consider the first inequality in (\ref{eq:Wineq}). If $R>1$ then it is immediate since $W$ is decreasing.
It only remains to show that $n'/n>-1/(1-q)$ for $R<1$.

Suppose $R<1$ and let $k(q)$ solve $\frac{F(q,k)}{k} = -\frac{1}{1-q}$. (Note that $F(q,m(q))=0 $, $F(q, \ell(q)) = - \infty$ and
$\varphi(q,k) + \sqrt{\varphi(q,k)^2 + E(q)^2}$ is increasing in $k$ so that from (\ref{eq:nB}), $F(q,k)/k$ is decreasing in $k$.
In particular, $\frac{F(q,k)}{k} = -\frac{1}{1-q}$ has a unique solution in $(m(q),\ell(q))$.) Then $k$ is the straight line with $k(0) = 1$ and $k(1) = \ell(1)$, so that
\begin{equation}
\label{eq:kdef}
 k(q) = 1 + q \frac{(b_2 R - b_3)(1-R)}{b_1} .
\end{equation}
To verify this, note that if $k$ is as given in (\ref{eq:kdef}) then $\varphi(q,k(q)) = R(1-R)(2-b_2)(1-q)$ and $\upsilon(q,k(q))= - 2 R(1-R)(b_2-1)(1-q)$.
The desired conclusion
$F(q,k(q)) = -k(q)/(1-q)$ follows after some lengthy algebra.

We have that $\Phi(k'(0)) = R(1-R)^3 b_2\left\{b_3 - \left(b_2 R + \frac{b_1}{1-R}\right)\right\}/b_1$.
By assumption $b_3 < b_2 R+ b_1/(1-R)$ (else we are in a degenerate case), and then $\Phi(k'(0))<0$ and $n'(0)<k'(0)$, so that $n<k$ on some interval $(0,q)$
to the right of zero.
Suppose now that $n(q) = k(q)$ for some $q \in (0,q^*)$. Then
\[
n'(q) = F(q,n(q)) = -\frac{k(q)}{1-q} = - \frac{1 + qk'(q)}{1-q} < k'(q).
\]
Let $\tilde{q}$ be the smallest such point, then $\tilde{q}$ is a downcrossing of $n$ over $k$, contradicting the fact that $n(q)<k(q)$ on $(0,\tilde{q})$.
Hence $n(q) < k(q)$ for any $q \in (0, q^*)$. But for $n \leq k(q)$, $F(q, n) > - \frac{n}{1-q}$. Thus $\frac{n'}{n} > - \frac{1}{1-q}$ and $w'(s)> 1-R$.

\end{proof}

\section{The martingale property of the value function}
\label{app:mgs}

\begin{proof}[Proof of Lemma~\ref{lem:N4N5} and Lemma~\ref{lem:N4N5b}.]
(i) As a first step of the proof, we show that $\sigma \Pi^* \sG_x/\sG$ is bounded. From the form of the candidate value function (\ref{eq:vf}) and the optimal portfolio process in Theorem~\ref{thm:maincase} we have
\begin{eqnarray}
\lefteqn{\sigma \Pi^*(x,y,\theta) \frac{\sG_x(x,y,\theta)}{\sG(x,y,\theta)} } \nonumber \\
& = & \lambda \left[(1-R) - \frac{zg'(z)}{g(z)}\right] \left\{ \frac{-(1-R)g(z) + \left(1 + \frac{R\eta \rho}{\lambda}\right)zg'(z) + \frac{\eta \rho}{\lambda} z^2g''(z)}{-R(1-R)g(z) + 2Rzg'(z) + z^2g''(z) } \right\} \nonumber \\
& = & \frac{(1-R)}{R}\left(1 - \frac{w(h)}{(1-R)h}\right)
\left( \lambda  - \frac{\left(\eta \rho- \frac{\lambda}{R}\right)w(h)[w'(h) - (1 - R)]}{(1-R)h - \left(2 - \frac{1}{R}\right) w(h) - \frac{1}{R}w(h)w'(h)} \right) \nonumber \\
& = & \frac{(1-R)}{R}\left(1 - W(h)\right)
\left( \lambda - \frac{\left(\eta \rho - \frac{\lambda}{R}\right)W(h) [w'(h) - (1 - R)]}
{1 - \left(2 - \frac{1}{R}\right)W(h) - \frac{1-R}{R} W(h)^2 - \frac{1-R}{R} hW(h)W^{'}(h)}  \right) \nonumber
\end{eqnarray}
Now use the fact that $(1-R) < w'(h) < 1-RW(h)$ to conclude that $0 < w'(h)-(1-R) < R(1-W(h))$ and
$h(1-R)W'(h)<1-W(h)$ to conclude that
\begin{eqnarray*}
\lefteqn{1  - \left(2 - \frac{1}{R}\right)W(h) - \frac{1-R}{R} W(h)^2 - \frac{1-R}{R} hW(h)W^{'}(h)} \\
& \geq & 1  - \left(2 - \frac{1}{R}\right)W(h) - \frac{1-R}{R} W(h)^2 - \frac{1}{R} W(h)(1 - W(h)) \\
& = & 1 - 2W(h) +W(h)^2.
\end{eqnarray*}
Then
\begin{equation}
\left| \sigma \Pi^* \frac{\sG_x}{\sG} \right|
\leq  \frac{|1-R|}{R}\left(1 - W(h)\right)
\left( |\lambda| +  \frac{|\eta \rho R - \lambda| W(h)}{1-W(h)}  \right) \leq \frac{|1-R|}{R} (|\lambda| + |\eta \rho R - \lambda|) =: K_\pi.
\label{eq:pibd}
\end{equation}
Note that this bound applies for both the proof of Lemma \ref{lem:N4N5} and Lemma~\ref{lem:N4N5}.

Now we want to show that
the local martingale
\[
N_{t}^{5}=\int_{0}^{t}e^{-\beta s} \eta Y_s \sG_{y}(X^*_s, Y_s, \Theta^*_s) dB^2_{s}
\]
is a martingale. This will follow if, for example,
\begin{equation}
\mathbb{E}\int_{0}^{t}e^{-2\beta s}\left(Y_s \sG_{y}(X^*_s, Y_s, \Theta^*_s)
\right)^{2}ds<\infty\label{eq:4}
\end{equation}
for each $t>0$.
From the form of the value function (\ref{eq:vf}), we have
\begin{equation}
\label{eqn:Gy}
e^{-\beta s} y G_y(x,y,\theta)
= e^{-\beta s}\frac{x^{1-R}}{1-R}zg'\left(z\right)
= e^{-\beta s} \sG(x,y,\theta,t)
\frac{zg'\left(z\right)}{g\left(z\right)}
\leq (1-R) e^{-\beta s} \sG\left(x,y,\theta\right)
\end{equation}
where we use that
$\frac{zg'\left(z\right)}{g\left(z\right)}
=\frac{w\left(h\right)}{h} = (1-R)W(h)$ and $0 \leq W(h) \leq 1$.

Define a process $\left(D_{t}\right)_{t\geq0}$ by
$ %\begin{equation}
D_{t}=\ln G\left(X^*_{t},Y_{t},\Theta^*_{t},t\right) = \ln \sG(X^*_{t},Y_{t},\Theta^*_{t}) - \beta t
$. %\end{equation}
Then $D$ solves
\begin{eqnarray*}
D_t - D_0
%& = & \int_0^t \frac{1}{\sG}\Big[ -C^*_s \sG_{x}+\alpha Y_s \sG_{y}+\lambda\sigma {\Pi^*}_s \sG_{x}+rX_s \sG_{x} \\
%& & \hspace{10mm}+\frac{1}{2}\sigma^{2}{\Pi^*}_s^{2}\sG_{xx}+\frac{1}{2}\eta^{2}Y_s^{2} \sG_{yy}+\sigma\eta\rho Y_s {\Pi^*_s} \sG_{xy}  - \beta \Big]ds \\
%& & \hspace{10mm}+ \int_0^t \frac{1}{G} (G_\theta - Y_s G_x) d{\Theta^*}_s +  \int_0^t \frac{1}{G} \sigma {\Pi^*} _s G_x dB^1_s + \int_0^t \frac{1}{G} \eta Y_s G_y dB^2_s\\
%& & \hspace{10mm} - \frac{1}{2} \int_0^t \frac{1}{G^2} \left[\sigma^2 {\Pi^*}_s^2 G_x^2 + \eta^2 Y_s^2 G_y^2 + 2\sigma \eta \rho Y_s {\Pi^*}_s G_x G_y \right]ds \\
& = & -\int_{0}^{t}\frac{1}{1-R}\frac{1}{\sG}\sG_{x}^{\frac{R-1}{R}}ds
+\int_0^t \frac{1}{\sG} \sigma \Pi^*_s \sG_x dB^1_s + \int_0^t \frac{1}{\sG} \eta Y_s \sG_y dB^2_s \\
& & \hspace{10mm}- \frac{1}{2} \int_0^t \frac{1}{\sG^2} \left[\sigma^2 {\Pi^*_s}^2 \sG_x^2 + \eta^2 Y_s^2 \sG_y^2 + 2\sigma \eta \rho Y_s \Pi^*_s \sG_x \sG_y \right]ds.
\end{eqnarray*}
It follows that the candidate value function along the optimal trajectory has the
representation
\begin{equation}
\label{eqn:G0}
G\left(X^*_t,Y_t,\Theta^*_t,t\right)=
G\left(X^*_{0},y_{0},\Theta^*_{0},0\right) \exp\left\{
- \int_{0}^{t}\frac{1}{1-R}\frac{1}{\sG}\sG_{x}^{\frac{R-1}{R}}ds\right\} H_t
\end{equation}
where $H=\left(H_{t}\right)_{t\geq0}$ is the exponential martingale
%$H_t = \sE ( \frac{\eta Y_s G_y}{G} \circ W )_t$.
%\begin{eqnarray}
\[ H_{t}  =  \sE \left( \frac{\sigma \Pi^* \sG_x}{\sG} \circ B^1 + \frac{\eta Y_s \sG_y}{\sG} \circ B^2 \right)_t , \]
%& := &
%\exp\left\{\int_0^t \frac{1}{G} \sigma {\Pi^*} _s G_x dB^1_s + \int_0^t \frac{1}{G} \eta Y_s G_y dB^2_s - \frac{1}{2} \int_0^t \frac{1}{G^2} \left[\sigma^2 {\Pi^*}_s^2 G_x^2 + \eta^2 Y_s^2 G_y^2 + 2\sigma \eta \rho Y_s {\Pi^*}_s G_x G_y \right]ds\right\} \nonumber
%\end{eqnarray}
where $\sE( A \circ B)_t = e^{\int_0^t A_s dB_s - \frac{1}{2} \int_0^t A^2_s ds}$.
Note that (\ref{eqn:Gy}) implies
$0 \leq \frac{1}{\sG}\eta y\sG_{y} \leq\eta |1-R|$ and (\ref{eq:pibd}) implies that $|\frac{1}{\sG}\sigma \Pi^* \sG_x |\leq K_\pi$,
so that $H$ is indeed a martingale, and not merely a local martingale.

From (\ref{eqn:Gy}) and (\ref{eqn:G0}), we have
\begin{eqnarray*}
\left(yG_{y}\right)^{2}
& = &
G\left(X_{0},y_{0},\Theta_{0},0\right)^{2} \left(\frac{zg^{'}
\left(z\right)}{g\left(z\right)}\right)^{2}\exp\left\{
-2\int_{0}^{t}\frac{1}{1-R}\frac{1}{\sG}\sG_{x}^{\frac{R-1}{R}}ds\right\} H_t^2
\\
& \leq  &
G\left(X_{0},y_{0},\Theta_{0},0\right)^{2} (1-R)^2
H_t^2.
\end{eqnarray*}
But from (\ref{eq:pibd}) and (\ref{eqn:Gy})
\begin{eqnarray*}
H_t^2 & = &
\sE \left( \frac{2}{\sG}\sigma \Pi^* \sG_x \circ B^1 + \frac{2}{\sG}\eta Y_s \sG_y \circ B^2 \right)_t
\exp\left\{\int_0^t \frac{1}{\sG^2} \left[\sigma^2 ({\Pi^*_s})^2 \sG_x^2 + \eta^2 Y_s^2 \sG_y^2 + 2\sigma \eta \rho Y_s \Pi^*_s \sG_x \sG_y \right]ds\right\} \\
& \leq & \sE \left( \frac{2}{\sG}\sigma \Pi^* \sG_x \circ B^1 + \frac{2}{\sG}\eta Y_s \sG_y \circ B^2 \right)_t
\exp\left\{[\eta^2 (1-R)^2 + K_{\pi}^2 + 2 \eta |\rho 1-R)| K_{\pi}]t\right\}
\end{eqnarray*}
Hence $\E[H_t^2] \leq \exp\left\{[\eta^2 (1-R)^2 + K_{\pi}^2 + 2 \eta |\rho (1-R)| K_{\pi}]t\right\}$ and it follows that
(\ref{eq:4}) holds
for every $t$, and hence that the local martingale
$N_{t}^{5}=\int_{0}^{t}\eta yV_{y}dB_{s}$ is a martingale
under the optimal strategy.

(ii) The same reasoning applies to show that the local martingale
\[
N_{t}^{4}=\int_{0}^{t} e^{-\beta s} \sigma \Pi^*_s \sG_{x}(X^*_s, Y_s, \Theta^*_s)
dB^1_{s}
\]
is a martingale. We have
$|\sigma \Pi^* G_{x}/G| \leq K_{\pi}$ and by (\ref{eqn:G0}) $(\sigma \Pi^* G_x(x,y,\theta,t))^2 \leq G(X_0, y_0, \Theta_0, 0)^2 K_{\pi}^2 H_t^2$.

(iii) Consider
$\exp\left\{ \int_{0}^{t}-\frac{1}{1-R}\frac{1}{\sG}\sG_{x}^{\frac{R-1}{R}}ds\right\} $.
To date we have merely argued that this function is decreasing in
$t$. Now we want to argue that it decreases to zero exponentially
quickly. By (\ref{eq:vf}), we have
\begin{eqnarray*}
\frac{1}{1-R}\frac{1}{\sG}\sG_{x}^{\frac{R-1}{R}} & = &
\frac{1}{g(z)}\left[g\left(z\right)-\frac{1}{1-R}zg^{'}\left(z\right)\right]^{\frac{R-1}{R}}
  =  \frac{1}{h}\left[h-\frac{1}{1-R}w\left(h\right)\right]^{\frac{R-1}{R}} \\
 & = & h^{-1/R}(1-W(h))^{1-1/R} = N(q)^{-1/R}(1-q)^{1-1/R} = n(q).
\end{eqnarray*}
Then since $n$ is bounded below on $(0,q^*)$ (by 1 if $R>1$ and by $n(q^*)$ if $R<1$) we have a lower bound on all the expressions in the above equation.

Hence from (\ref{eqn:G0}) we have
\[
0 \leq (1-R) G(X^*_t, Y_t, \Theta^*_t, t) \leq (1-R)G(x_0, y_0, \theta_0, 0)e^{- \min \{n(q^*),1\} t} H_t \to 0
\]
and then $G \to 0$ in $L^1$ as required.

\end{proof}

\section{extension to $R>1$}
\label{app:R>1}
It remains to extend the proofs of the verification lemmas to the case $R>1$. In particular we need to show that the candidate value function is an upper bound on the value function. The main idea is taken from Davis and Norman\cite{Davis}.

Suppose $G\left(x,y,\theta,t\right) = e^{-\beta t} \sG(x,y,\theta)$ is the candidate value function. Consider for $\varepsilon>0$,
\begin{equation}
\widetilde{V}_\varepsilon \left(x,y,\theta,t\right) = \widetilde{V}\left(x,y,\theta,t\right) =G\left(x+\varepsilon,y,\theta,t\right)\label{eq:2ndvr}
\end{equation}
and suppose $\widetilde{M}_t = \widetilde{M}_t (C, \Theta)$ is given by
 \[
 \widetilde{M}_t = \int_0^t e^{-\beta s}\frac{C_s ^{1-R}}{1-R} ds + \widetilde{V} (X_t, Y_t, \Theta_t, t).
 \]
Then,
\begin{eqnarray*}
\widetilde{M}_t - \widetilde{M}_0 & = & \int_0^t \Big[ e^{-\beta s}U\left(C_s\right)-C_s \widetilde{V}_{x}+\alpha Y_s \widetilde{V}_{y}+ \widetilde{V}_{s}+\lambda\sigma\Pi_s \widetilde{V}_{x}+rX_s \widetilde{V}_{x} \\
& & \hspace{10mm}+\frac{1}{2}\sigma^{2}\Pi_s^{2} \widetilde{V}_{xx}+\frac{1}{2}\eta^{2}Y_s^{2} \widetilde{V}_{yy}+\sigma\eta\rho Y_s\Pi_s \widetilde{V}_{xy} \Big]ds \\
& & + \int_0^t (\widetilde{V}_\theta - Y_s \widetilde{V}_x) d\Theta_s \\
& & + \sum_{\substack{0<s \leq t}}
\left[\widetilde{V}(X_s, Y_s, \Theta_s, s) - \widetilde{V}(X_{s-}, Y_{s-}, \Theta_{s-}, s-) - \widetilde{V}_x (\bigtriangleup X)_s
- \widetilde{V}_\theta (\bigtriangleup \Theta)_s \right] \\
& & +  \int_0^t \sigma \Pi_s \widetilde{V}_x dB^1_s \\
& & + \int_0^t \eta Y_s \widetilde{V}_y dB^2_s \\
& = & N_t^1 + N_t^2 + N_t^3 + N_t^4 + N_t^5.
\end{eqnarray*}

Lemma~\ref{deg1:inequal} (in the case $b_3 \leq 0$ and otherwise Lemma~\ref{lem:operator} or Lemma~\ref{lem:2ndoperator}) implies $\widetilde{N}_{t}^{1} \leq 0$ and $\widetilde{N}_{t}^{2} \leq 0$. The concavity of $\widetilde{V}(x + y\chi, y, \theta - \chi)$ in $\chi$ (either directly if $b_3 \leq 0$, or using Lemma~\ref{lem:concave} and Lemma~\ref{lem:2ndconcave}) implies $(\Delta \widetilde{N}^3) \leq 0$.

Now define stopping times
$\tau^1_{n}=\inf\left\{ t\geq0: \int_0^t \sigma^2 \Pi_s^2 \widetilde{V}_x ^2 ds\geq n\right\}$ and
$\tau^2_{n}=\inf\left\{ t\geq0:\int_{0}^{t}\eta^{2}Y_s^{2}\widetilde{V}_{y}^{2}ds\geq n\right\}$. Let $\tau_{n} = \min\left\{\tau^1_n, \tau^2_n\right\}$ and $\tau_{n}$ is a stopping time.
It follows from (\ref{eq:pibd}) and (\ref{eqn:Gy}) that $\Pi_t \widetilde{V}_x$ and $y\tilde{V}_y$ are bounded and hence $\tau^1_n\to \infty$, $\tau^2_n\to \infty$ and hence $\tau_n \to \infty$. Then the local martingale $(\widetilde{N}_{t \wedge \tau_n}^4 + \widetilde{N}_{t \wedge \tau_n}^5)_{t \geq 0}$ is a martingale and taking expectations we have $\mathbb{E}\left(\widetilde{M}_{t\wedge\tau_{n}}\right)\leq\widetilde{M}_{0}$, and hence
\[
\mathbb{E}\left(\int_{0}^{t\wedge\tau_{n}}
e^{-\beta s}\frac{C_{s}^{1-R}}{1-R}ds+\widetilde{V}\left(X_{t\wedge\tau_{n}},Y_{t\wedge\tau_{n}},\theta_{t\wedge\tau_{n}},t\wedge\tau_{n}\right)\right)\leq\widetilde{V}\left(x_{0},y_{0},\theta_{0},0\right).
\]

In the case $b_3 \leq 0$, (\ref{eq:vdeg1}) and (\ref{eq:2ndvr}) imply
\begin{eqnarray*}
\widetilde{V}\left(X_{t\wedge\tau_{n}},Y_{t\wedge\tau_{n}},\theta_{t\wedge\tau_{n}},t\wedge\tau_{n}\right) & = & e^{-\beta (t\wedge\tau_{n})}\frac{\left(X_{t\wedge\tau_{n}}+
\varepsilon \right)^{1-R}}{1-R}
\left(1 + \frac{Y_{t\wedge\tau_{n}}\theta_{t\wedge\tau_{n}}}
{X_{t\wedge\tau_{n}}+\varepsilon}\right)^{1-R} \left(\frac{b_1}{b_4 R}\right)^{-R}\\
& \geq & e^{-\beta (t\wedge\tau_{n})}\frac{\left(X_{t\wedge\tau_{n}}+
\varepsilon \right)^{1-R}}{1-R}
 \left(\frac{b_1}{b_4 R}\right)^{-R} \geq \frac{\varepsilon^{1-R}}{1-R} \left(\frac{b_1}{b_4 R}\right)^{-R}.
\end{eqnarray*}
Thus $\widetilde{V}$ is bounded, $\lim_{n\to \infty} \mathbb{E}\widetilde{V}\left(X_{t\wedge\tau_{n}},Y_{t\wedge\tau_{n}},\theta_{t\wedge\tau_{n}},t\wedge\tau_{n}\right) = \mathbb{E}\left[\widetilde{V}\left(X_{t},Y_{t},\theta_{t},t\right)\right]$, and
\[
\widetilde{V}\left(x_{0},y_{0},\theta_{0},0\right)\geq\mathbb{E}\left(\int_{0}^{t}e^{-\beta s}\frac{C_{s}^{1-R}}{1-R}ds\right)+\mathbb{E}\left[\widetilde{V}\left(X_t, Y_t,\Theta_t, t\right)\right].
\]
Similarly,
\[
\widetilde{V}(x,y,\theta,t) \geq e^{-\beta t} \frac{\varepsilon^{1-R}}{1-R} \left(\frac{b_1}{b_4 R}\right)^{-R},
\]
and hence $\mathbb{E}\left[\widetilde{V}(X_t, Y_t, \Theta_t, t)\right] \to 0$.
Then letting $t \to \infty$ and applying the monotone convergence theorem, we have
\[
\widetilde{V}_{\varepsilon} (x_0, y_0, \theta_0, 0) = \widetilde{V}_{\varepsilon} (x_0, y_0, \theta_0, 0) \geq \mathbb{E}\left(\int_0^\infty e^{-\beta s} \frac{C_s ^{1-R}}{1-R} ds\right).
\]
Finally let $\varepsilon \to 0$. Then $V \leq \lim_{\varepsilon \to 0} \widetilde{V} = G$. Hence, we have $V \leq G$.

The two non-degenerate cases are very similar, except that now from (\ref{eq:vf}) or (\ref{eq:vf2}) and (\ref{eq:2ndvr}),
\[
\widetilde{V}(x,y,\theta) = \frac{(x + \varepsilon)^{1-R}}{1-R} g\left(\frac{y\theta}{x + \varepsilon}\right) \geq \frac{\varepsilon^{1-R}}{1-R}\left(\frac{b_1}{b_4 R}\right)^{-R},
\]
where we use that for $R>1$, $g$ is decreasing with
$g(0) = \left(\frac{b_1}{b_4 R}\right)^{-R}>0$. Hence $\widetilde{V}$ is bounded, and the argument proceeds as before.


\begin{thebibliography}{references}

\bibitem{Akian} M. Akian, J. L. Menaldi, A. Sulem. Multi-asset portfolio selection problem with transaction costs.
\emph{Mathematics and Computers in Simulation}, 38(1-3): 163-172, 1995.

\bibitem{Atkinson} C. Atkinson, S. Mokkhavesa. Multi-asset portfolio optimisation with transaction cost.
\emph{Applied Mathematical Finance}, 11(2): 95-123, 2004.

\bibitem{Cadenillas} A. Cadenillas. Consumption-investment problems with transaction costs: survey and open problems.
\emph{Mathematical Methods of Operations Research}, 51:43-68, 2000.

\bibitem{ChenDai} X. F. Chen, M. Dai. Characterisation of optimal strategy for multi-asset investment and consumption with transaction costs.
\emph{SIAM Journal on Financial Mathematics}, 4(1): 857-883, 2013.

\bibitem{Collings} P. Collings, U. G. Haussmann. Optimal portfolio selection with transaction costs.
\emph{In: Proceedings of the Conference on Control of Distributed and Stochastic Systems}, Kluwer.

\bibitem{Constantinides} G. M. Constantinides. Capital market equilibrium with transaction costs.
\emph{The Journal of Political Economy}, 94(4): 842-862, 1986.

\bibitem{ConMagill} G. M. Constantinides, M. J. P. Magill.  Portfolio selection with transaction costs.
\emph{Journal of Economic Theory}, 13: 264-271, 1976.

\bibitem{Davis} M. H. A. Davis, A. Norman. Portfolio selection with transaction costs. \emph{Mathematics of Operations Research}, 15:
676-713, 1990.

\bibitem{duffie} D. Duffie, T. S. Sun. Transactions costs and portfolio choice in a discrete-continuous-time setting.
\emph{Journal of Economic Dynamics and Control}, 14: 35-51, 1990.

\bibitem{evans} J. D. Evans, V. Henderson, D. Hobson. Optimal timing
for an asset sale in an incomplete market.
\emph{Mathematical Finance}, 18(4): 545-568, 2008.

\bibitem{Guasoni} P. Guasoni, J. Muhle-Karbe. Portfolio choice with transaction costs: a user's guide.
\emph{Available at SSRN 2120574}, 2012.

\bibitem{gito} J. M. Harrison. \emph{Brownian Motion and Stochastic Flow Systems}.
Wiley, New York, 1985.

\bibitem{Henderson} V. Henderson. Valuation of claims on nontraded assets using utility maximisation.
\emph{Mathematical Finance}, 12: 351-373, 2002.

\bibitem{hh} V. Henderson, D. Hobson. Real options with constant relative risk aversion.
\emph{Journal of Economic Dynamics and Control}, 27: 329-355, 2002.

\bibitem{HobsonZhu14a} D. Hobson, Y. Zhu. Optimal consumption and sale strategies for a risk averse agent.
{\em Submitted}.

\bibitem{KaraShre} I. Karatzas, S. E. Shreve. \emph{Brownian motion and stochastic calculus, second edition}. Springer, New York, 1998.

\bibitem{liu1} H. Liu. Optimal consumption and investment with fixed and proportional transaction costs.
\emph{Available at SSRN 1493722}, 2000.

\bibitem{Miao} J. Miao, N. Wang. Investment, hedging, and consumption smoothing.
\emph{Available at SSRN 565601}, 2004.

\bibitem{Merton} R. C. Merton. Life portfolio selection under uncertainty: the continuous-time case.
\emph{The Review of Economics and Statistics}, 51: 247-257, 1969.

\bibitem{MuthKumar} K. Muthuraman, S. Kumar. Multi-dimensional portfolio optimisation with proportional transaction costs.
\emph{Mathematical Finance}, 16(2): 301-335, 2006.

\bibitem{RY} D. Revuz, M. Yor. \emph{Continuous Martingales and Brownian Motion, third edition}.
Springer, New York, 1999.

\bibitem{SS} S. E. Shreve, H. M. Soner. Optimal investment and consumption with transaction costs.
\emph{Annals of Applied Probability}, 4: 609-692, 1994.

\bibitem{Soner} H. M. Soner, N. Touzi. Homogenization and asymptotics for small transaction costs.
\emph{arXiv preprint, arXiv: 1202.6131}, 2013.

\bibitem{svensson} L.E. Svensson, I. Werner. Nontraded assets in incomplete markets: pricing and portfolio choices.
\emph{European Economic Review}, 37(5): 1149-1168, 1993.

\bibitem{Whalley} A. E. Whalley, P. Wilmott. An asymptotic analysis of an optimal hedging model for option pricing with transaction costs.
\emph{Mathematical Finance}, 7(3): 307-324, 1997.




\end{thebibliography}
\end{document}